\newcommand{\nml}{\mathbf{n}}
\newcommand{\ee}{\mathbf{e}}
\newcommand{\del}{\delta}
\newcommand{\sphere}{\mathbb{S}^{d-1}}
\newcommand{\setR}{\mathbb{R}}
\newcommand{\setC}{\mathbb{C}}
\newcommand{\E}{\mathbb{E}}
\newcommand{\J}{\mathbb{I}}
\newcommand{\CB}{\mathcal{B}}
\newcommand{\sod}{{\mathrm{SO}(d)}}
\newcommand{\czero}{C^0(\sod)}
\newcommand{\np}{{n'}}
\def\CB {\mathcal{B}}
\def\CF { \mathcal{F}}
\def\CL { \mathcal{L}}
\def\CO{\mathcal{O}}
\def\CP { \mathcal{P}}
\def\CS {\mathcal{S}}
\def\a {\alpha}
\def\s {\sigma}
\def\eps {\epsilon}
\def\a {\alpha}
\def\J {\mathbb{I}}
\def\N {\mathbb{N}}
\def\E {\mathbb{E}}
\def\RE {\mathbb{R}}
\def\D {\mathbb{D}}
\def\P {\mathbb{P}}
\newcommand{\soda}{{\mathrm{SO}(d-1)}}
\newcommand{\sodk}{{\mathrm{SO}(k)}}
\newcommand{\haar}{\mathfrak{H}}
\newcommand{\rot}{\mathcal{O}}
\newcommand{\eed}{\mathbf{e}_d}
\newcommand{\eins}{\mathbf{1}}
\newcommand{\rota}{\mathcal{O}}
\newcommand{\law}{\mathrm{Law}}
\newcommand{\mix}{\mathfrak{m}}
\newcommand{\meas}{\mathcal{P}(\setR^d)}
\newcommand{\deq}{\stackrel{\CL}{=}}
\newcommand{\pby}{\mathbb{P}}
\newcommand{\uuu}{\mathfrak{u}_{\mathbb S}}
\newcommand{\dn}{\mathrm{d}}
\newcommand{\dd}{\,\mathrm{d}}
\newcommand{\spr}[2]{\langle #1,#2 \rangle_{L^2}}
\def \tpb { \mathfrak{\tilde B}} 
\def \pb  {\mathfrak{B}}  
\def \be {  \varpi}
\def  \ww { \beta}
\def \gg {\mathfrak{b}}
\newtheorem{definition}{Definition}
\newtheorem{theorem}[definition]{Theorem}
\newtheorem{lemma}[definition]{Lemma}
\newtheorem{proposition}[definition]{Proposition}
\newtheorem{remark}{Remark}
\begin{document}

\title{Infinite energy solutions to inelastic homogeneous Boltzmann equation}

\date{\today}

\author{Federico Bassetti}
\address{F.B., Dipartimento di Matematica, Universit\`a di Pavia, Pavia, Italy}
\author{Lucia Ladelli}
\address{L.L., Dipartimento di Matematica, Politecnico di Milano, Milano, Italy}
\author{Daniel Matthes}
\address{D.M., Zentrum Mathematik, Technische Universit\"at M\"unchen, Germany}

\subjclass{Primary: 82C40; Secondary: 60F05.}
 \keywords{Central Limit Theorems, Inelastic Boltzmann Equation, Infinite Energy Solutions, Normal Domain of Attraction, Multidimensional Stable Laws.}

\begin{abstract} 
  This paper is concerned with the existence, shape and dynamical stability of infinite-energy equilibria 
  for a general class of spatially homogeneous kinetic equations in space dimensions $d\ge3$.
  Our results cover in particular Bobyl\"ev's model for inelastic Maxwell molecules.
  First, we show under certain conditions on the collision kernel, that there exists an index $\alpha\in(0,2)$
  such that the equation possesses a nontrivial stationary solution, 
  which is a scale mixture of radially symmetric $\alpha$-stable laws.
  We also characterize the mixing distribution as the fixed point of a smoothing transformation.
  Second, we prove that any transient solution that emerges from the NDA of some (not necessarily radial symmetric) $\alpha$-stable distribution
  converges to an equilibrium.
  The key element of the convergence proof is an application of the central limit theorem 
  to a representation of the transient solution as a weighted sum of i.i.d. random vectors.
\end{abstract}

\maketitle

\section{Introduction}
\subsection{The equation}
In this paper, we analyzed the long-time asymptotic of the velocity distribution in kinetic models 
for spatially homogeneous inelastic Maxwellian molecules \cite{Bobylev99}, and certain generalizations. We assume that the space dimension
$d$ is at least two, with the physical situation $d=3$ being the most 
interesting choice. Under the \emph{cut off assumption} and after proper normalization of the collision frequency,
the evolution equation for the time-dependent velocity distribution $\mu:\setR_+\to\meas$ 
is given by 
\begin{equation}
  \label{inelasticboltzmann1}
  \left \{
    \begin{array}{l}
      \partial_t \mu(t) + \mu(t) = Q_+( \mu(t),\mu(t) ) \qquad (t > 0) \\
      \mu(0)=\mu_0\\
    \end{array}
  \right .
\end{equation}
where the \emph{collisional gain operator} $Q_+$ has the weak formulation
\begin{align*}
  \int_{\setR^d} \varphi(v)Q_+( \mu,\mu )\dd v
  = \int_{\setR^d\times \setR^d} \E\Big[\frac{\varphi(v')+\varphi(v'_*)}{2}\Big]\dd\mu( v) \dd\mu( v_*),
  \quad \text{for all $\varphi\in C^0_b(\setR^d)$.}
\end{align*}
Above, the expectation $\E$ is taken for the \emph{post-collisional velocities} $v',v_*'$,
which are random vectors whose distribution is determined from the \emph{pre-collisional velocities} $v,v_*$ by means of collision rules.
Later, we formulate our hypotheses on the collision rules under which we are able to prove existence and stability of stationary solutions.
For the original inelastic Maxwell molecules from \cite{Bobylev99}, these rules read as
\begin{align}
  \label{eq:maxwellrules}
  \begin{split}
    v' &= \frac12(v+w) +\Big (\frac{\delta}{2}(v-w) +\frac{1-\delta}{2}  |v-v_*|  \nml \Big), \\
    v_*' &=  \frac{1}{2}(v+w) -\Big(\frac{\delta}{2}(v-w) +\frac{1-\delta}{2}  |v-v_*| \nml \Big),    
  \end{split}
\end{align}
where $\delta\in(0,1/2)$ is the modulus of inelasticity, 
and $\nml$ is a random unit vector of prescribed distribution on the unit sphere $\sphere$:
there is a properly normalized (see \eqref{eq:ggnormal} below) density function (\emph{cross section}) $\gg\in L^1(-1,1)$,
such that $\nml$ has law
\begin{align}
  \label{eq:cross}
  \gg\Big(\sigma\cdot\frac{(v-v_*)}{|v-v_*|}\Big)\uuu(\dn\sigma),
\end{align}  
where $\uuu$ is the normalized surface measure (uniform probability) on $\sphere$.
The characteristic property of Maxwellian molecules -- in contrast to more general ideal gases --
is that this density does \emph{not} explicitly depend on the norm $|v-v_*|$ of the relative velocity.
This property allows to restate \eqref{inelasticboltzmann1} as an evolution equation for the characteristic function 
$\hat\mu(t;\xi)=\int \exp(i\xi\cdot v)\mu(t;\dn v)$ of $\mu(t)$,
\begin{align}
  \label{eq.boltz1}
  \partial_t\hat\mu(t) + \hat\mu(t) = \widehat{Q_+}[\hat\mu(t),\hat\mu(t)],
\end{align}
with an explicit form of the Fourier transformed collision operator:
there are non-negative random variables $r^\pm$ and random rotations $R^\pm$ in $\sod$, 
such that
\begin{align}
  \label{eq:miracle}
  \widehat{Q_+}[\hat\mu,\hat\mu](\rho\rota\eed) = \E\big[\hat\mu(\rho r^+\rota R^+\eed)\hat\mu(\rho r^-\rota R^-\eed)\big]
\end{align}
holds for all $\rho\in\setR_+$ and all $\rota\in\sod$.
This special form of $Q_+$ is of crucial importance for our analysis of \eqref{inelasticboltzmann1} by probabilistic tools.
The existence of such a representation \eqref{eq:miracle} is by no means obvious.
A similar expression has been given for the collision operator modelling fully elastic Maxwell molecules recently \cite{DoRe2}.
For inelastic molecules, it is proven in Proposition \ref{lemmaProbrepK} below.

\subsection*{Notice:}
In the following, we assume that the reader is familar with basic notions of the central limit theorem,
in particular with the L\'{e}vy representation of multi-dimensional $\alpha$-stable distributions and their {\it normal domain of attraction} (NDA).
A brief introduction to this topic is included in Appendix \ref{app:stable}.

\subsection{Related results}
In the rich literature on long-time asymptotics for \eqref{inelasticboltzmann1}, 
both solutions with finite (kinetic) energy, that is, 
\begin{align*}
  \int_{\setR^d} |v|^2\mu(t;\dn v) < \infty \quad \text{for all $t>0$},
\end{align*}
and with infinite energy have been studied.
In order to relate our own results to the existing literature, 
we briefly recall a small selection of results on convergence to equilibrium for elastic and inelastic Maxwell molecules;
the following summary is focussed on weak convergence results under minimal hypotheses on the initial conditions.
\begin{itemize}
\item \emph{Finite energy solutions for fully elastic collisions.}
  The only stationary solutions of finite energy to the fully elastic Maxwell model \cite{Bobylev88} are Gaussians, 
  and these attract \emph{all} solutions of finite energy.
  This is known as Tanaka's theorem \cite{Tanaka}.
  Various simple proofs are available, see e.g.\ \cite{ToscVill}.
\item \emph{Infinite energy solutions for fully elastic collisions.}
  The elastic Maxwell model does not admit stationary solutions of infinite energy \cite{CaGaRe}. 
  However, Bobyl\"ev and Cercignani \cite{BoCe} have identified for every $\alpha\in(0,2)$ a family
  of \emph{self-similar} solutions 
  for which the $\alpha$th moment is marginally divergent.
  These self-similar solutions converge vaguely to zero as time goes to infinity, i.e., the velocities concentrate at infinity.
  It has been shown recently \cite{Cannone} that the self-similar solutions for a given $\alpha$ attract all transient solutions (of infinite energy)
  whose initial condition's characteristic function $\hat\mu_0$ satisfies
  \begin{align}
    \label{eq:karchhypo}
    \lim_{|\xi|\to0}\frac{\hat\mu_0(\xi)-1}{|\xi|^\alpha} = K
    \quad\text{for some $K<0$}.
  \end{align}
\item \emph{Finite energy solutions for inelastic collisions.}
  Inelastic Maxwellian molecules lose kinetic energy in every collision.
  If the energy is finite initially, then it converges to zero exponentially fast in time \cite{Bobylev99}.
   As was conjectured by Ernst and Brito \cite{ErnsBrit}, this collapse happens in a self-similar way. More precisely, there
is a time-dependent rescaling of the velocity variable such that the
rescaled Boltzmann equation possesses a family of non-trivial
stationary solutions, the so-called homogeneous cooling states.
It has further been proven \cite{BiCaTo2,BoCeter,BoCeTo,BolleyCarrillo} that any solution of finite energy
to the rescaled equation eventually converges towards one of these
cooling states.
\item \emph{Infinite energy solutions for inelastic collisions.}
  This case has received less attention than the aforementioned situations.
  Some results are available for the \emph{inelastic Kac model} \cite{PulvTosc}, which is a one-dimensional caricature of inelastic Maxwell molecules:
  for each inelastic Kac model, there is precisely one $\alpha\in(0,2)$, 
  such that the symmetric $\alpha$-stable laws are stationary solutions and attract all transient solutions that start in their respective NDA \cite{BaLaRe}.
  A generalization of this result has been obtained by the authors \cite{BaLaMa} for Kac-type models with more complicated collisions and a richer class of stationary states.
  A related generalization \cite{BoCeGa2,BoCeGa} also covers the case of radially symmetric solutions to the inelastic Kac model in multiple space dimensions.
  The existence of a family of stationary solutions is proven, and the $\alpha$th moment of that solutions is marginally divergent,
  where $\alpha\in(0,2)$ is specific for the considered model.
  It is shown that the self-similar solutions attracts all radially symmetric solutions whose initial condition satisfies a condition 
  that is slightly more restrictive than \eqref{eq:karchhypo} as above. Using the results contained in  \cite{BaLa}, it can be proved that the same conclusions hold under condition \eqref{eq:karchhypo}.
\end{itemize}
Various of these fundamental weak convergence results have been
made quantitative (e.g.\ in terms of estimates on convergence \emph{rates}) 
and improved qualitatively (by proving e.g.\ convergence in \emph{strong} topologies).
Naturally, such improvements require additional hypotheses on the initial data (like higher moments or finite entropy)
and are not of interest here.
We refer the reader to the reviews \cite{CarrTosc,Villani}, 
and to the more recent results on self-similar asymptotics for inelastic Maxwell molecules \cite{CarlCarrCarv} and for inelastic hard spheres \cite{Mischler3}.

\subsection{Results and Method} 
In the present paper, 
we give a refined analysis of infinite energy solutions for kinetic equations with collision kernel 
of the form \eqref{eq:miracle} in general, and for inelastic Maxwell molecules in particular.
We show the existence of a family of stationary solutions
and we give a representation for them as scale mixtures of radially symmetric $\alpha$-stable laws.
Our main result is the dynamic stability of stationary solutions under assumptions on the initial conditions that we expect to be minimal.
The full statement is given in Theorem \ref{thm.main}.
In the special case of inelastic Maxwellian molecules, it reduces to the following.
\begin{theorem}
  \label{thm.simple}
  Consider equation \eqref{inelasticboltzmann1} with collision rules \eqref{eq:maxwellrules},
  where $\delta\in(0,1/2)$, and the unit vector $\nml$ has law \eqref{eq:cross} with cross section $\gg$,
  which is such that
    \begin{align}
      \label{eq:ggnormal}
      \int_{{-1}}^1 \gg(z)\sqrt{(1-z^2)^{d-3}}\dd z = \int_0^1 \sqrt{z^{{-1}}(1-{z})^{d-3}}\dd z.
    \end{align}
  
  Then there is a unique exponent $\alpha\in(0,2)$ and a probability measure $\mix$ on $\setR_+$
  -- both computable from $\delta$ and $\gg$ in principle --
  such that the following is true.

  A one-parameter family $(\mu_\infty^c)_{c>0}$ of stationary solutions to \eqref{inelasticboltzmann1}
  is given in terms of their characteristic functions $\hat\mu_\infty^c$ by
  \begin{align*}
    \hat\mu^c_\infty(\xi) = \int_{\setR_+} \exp\big(-cu|\xi|^\alpha\big)\mix(\dn u) \quad \text{for all $\xi\in\setR^d$}.
  \end{align*}
If $\mu_0$ belongs to the NDA of a full $\alpha$-stable distribution 
  {\rm (}centered, if $\alpha>1$, and an additional condition is needed if $\alpha=1$ - see  \eqref{hy-alpha=1BIS} in Section \ref{sec.general} {\rm)},
  then the corresponding solution $\mu$ to \eqref{inelasticboltzmann1} converges weakly to a stationary solution $\mu_\infty^c$,
  where $c\in\setR_+$ is computable in terms of $\mu_0$.
  In particular, the $\mu_\infty^c$ are the only stationary solutions that belong to the NDA of some $\alpha$-stable distribution on $\setR^d$.
\end{theorem}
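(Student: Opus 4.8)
The plan is to exploit the special Fourier representation \eqref{eq:miracle} of the collision operator to write the transient solution $\hat\mu(t;\cdot)$ probabilistically as an average over a random recursive structure (a weighted branching / smoothing tree): iterating \eqref{eq.boltz1} in its Wild-sum form produces, for each $t$, a representation
\begin{align*}
  \hat\mu(t;\rho\rota\eed) = \E\Big[\prod_{j} \hat\mu_0\big(\rho\,\beta_j(t)\,\rota\,\CO_j(t)\eed\big)\Big],
\end{align*}
where the nonnegative scalar weights $\beta_j(t)$ and rotations $\CO_j(t)\in\sod$ are generated by iterating the one-step maps $r^\pm,R^\pm$ along a Markov branching process of collisions. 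The first step is therefore to set up this representation rigorously, control its combinatorics (the number of leaves grows like the Wild sum with the $e^{-t}$ weights making everything summable), and identify the exponent $\alpha$ as the value for which $\E\big[\sum_j \beta_j(t)^\alpha\big]$ is stationary in $t$ — this is exactly the statement that $\alpha$ solves the characteristic equation associated with the smoothing transform and that $\mix$ is its fixed point, which is the content of the earlier existence theorem (Theorem \ref{thm.main}, in the general form) that I am allowed to assume.

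Next I would bring in the hypothesis that $\mu_0$ lies in the NDA of a full $\alpha$-stable law $\nu$. By the L\'evy representation recalled in Appendix \ref{app:stable}, this means $\hat\mu_0(\xi) = 1 - |\xi|^\alpha \psi(\xi/|\xi|)(1+o(1))$ as $\xi\to 0$, for an appropriate (possibly complex, with the $\alpha=1$ caveat handled via \eqref{hy-alpha=1BIS}) spectral function $\psi$ on $\sphere$. Substituting this local expansion into the product representation above, taking $\rho\to 0$, and using that all the weights $\beta_j(t)$ are uniformly small on the bulk of the tree, one gets
\begin{align*}
  \log\hat\mu(t;\rho\rota\eed) \;\approx\; -\rho^\alpha\,\E\Big[\sum_j \beta_j(t)^\alpha\,\psi\big(\CO_j(t)^{-1}\rota\eed\big)\Big] \;(1+o(1)).
\end{align*}
The key point is that the angular average $\E[\sum_j \beta_j(t)^\alpha\,\psi(\CO_j(t)^{-1}\cdot)]$, as a function on the sphere, converges as $t\to\infty$: by a martingale/renewal argument on the branching structure (the rotations $\CO_j$ equidistribute on $\sod$ by a Furstenberg-type irreducibility of the random rotation group, while $\sum_j\beta_j^\alpha$ is a nonnegative martingale with the right normalization), this average tends to a \emph{constant} $c\cdot\overline\psi$ times the uniform spherical average of $\psi$. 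That constant $c$ is precisely the parameter appearing in the theorem and is a concrete functional of $\mu_0$ (it involves the $\alpha$-stable "mass" of $\nu$ and the martingale limit). Once the limit is a rotationally symmetric spectral function, exponentiating gives $\hat\mu(t;\xi)\to \int_{\setR_+}\exp(-cu|\xi|^\alpha)\mix(\dn u)$ pointwise, which is $\hat\mu_\infty^c$; pointwise convergence of characteristic functions to a continuous limit yields weak convergence of $\mu(t)$ by L\'evy's continuity theorem. The final sentence of the statement — that the $\mu_\infty^c$ are the only stationary solutions in any NDA — follows because any stationary $\hat\mu_\infty$ in the NDA of an $\alpha$-stable law must be a fixed point of the one-step smoothing map, hence (by uniqueness of the fixed point of the smoothing transform, again from Theorem \ref{thm.main}) equals some $\hat\mu_\infty^c$.

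The main obstacle, and where the real work lies, is the second step: upgrading the formal expansion into a genuine limit theorem. One must show (i) that the error terms $o(1)$ in $\hat\mu_0(\xi)=1-|\xi|^\alpha\psi+o(|\xi|^\alpha)$ do not accumulate when summed over the (random, unboundedly many) leaves of the tree — this requires uniform integrability of $\sum_j\beta_j(t)^\alpha$ and tail control so that the few leaves with non-small $\rho\beta_j(t)$ are negligible; (ii) that the angular functional genuinely converges, which is the central limit theorem input advertised in the abstract — the weighted sum $\sum_j\beta_j(t)^\alpha\,\psi(\CO_j(t)^{-1}\rota\eed)$ must be shown to converge by interpreting it as a normalized sum of i.i.d.-like contributions along the tree and invoking equidistribution of the rotation product; and (iii) handling the borderline case $\alpha=1$, where the $|\xi|\log|\xi|$ correction in the L\'evy representation forces the extra hypothesis \eqref{hy-alpha=1BIS} and a more delicate bookkeeping of the logarithmic drift. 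Steps (i) and (ii) together are essentially a quenched/annealed central limit theorem for a branching random walk on $\setR\times\sod$, and making that rigorous under only the NDA assumption (no extra moments) is the technical heart of the paper.
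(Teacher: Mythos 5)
There is a genuine gap: you have sketched a proof of the \emph{general} convergence result, Theorem \ref{thm.main} (which you simultaneously declare you are allowed to assume), but you have not proved the statement actually at issue, namely Theorem \ref{thm.simple}. In the paper, Theorem \ref{thm.simple} is deduced from Theorem \ref{thm.main} by showing that the inelastic Maxwell model is an instance of the abstract framework, and this reduction is where all the work specific to this theorem lies. Concretely, three things are missing from your proposal. First, the very existence of a representation of Bobyl\"ev's Fourier kernel \eqref{eq.boltz2} in the multiplicative form \eqref{eq:miracle} is, as the paper stresses, not obvious: it requires writing the random unit vector $\nml$ as $U_1 Z_\psi U_2\eed$, where $U_1,U_2$ are the $\soda$-valued factors in Hurwitz's decomposition of the Haar measure and $\cos\psi$ has the projected density $\Pi\gg$; only then can one read off explicit $r^\pm$ and $R^\pm$ (Proposition \ref{lemmaProbrepK}). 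Your branching construction presupposes these one-step maps without constructing them. Second, the claim that there is a \emph{unique} exponent $\alpha\in(0,2)$ is not automatic: it follows from convexity of $s\mapsto\E[(r^+)^s+(r^-)^s]-1$ together with the sign computation $\E[(r^+)^2+(r^-)^2]-1=\delta(\delta-1)\E[1-\cos\psi]<0$, which is exactly where the inelasticity $\delta>0$ enters; you never verify that the characteristic equation has a root below $2$. Third, hypothesis (H3) (non-singularity of the tilted laws of $R^\pm$ with respect to the Haar measure, which drives the equidistribution of the rotations that you invoke as ``Furstenberg-type irreducibility'') must be checked for this specific model, and the paper does so using the positivity of the Hurwitz angle density. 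Without these three verifications, the assertion of a computable $\alpha$ and $\mix$ and the applicability of the general convergence theorem are unsupported.

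A secondary remark: your intermediate display $\log\hat\mu(t;\rho\rota\eed)\approx-\rho^\alpha\,\E[\sum_j\beta_j(t)^\alpha\psi(\cdot)]$ places the expectation inside the logarithm, which would produce the pure stable law $\exp(-c\,\overline\psi\,\rho^\alpha)$ rather than the scale mixture $\int\exp(-cu\rho^\alpha)\mix(\dn u)$ that the theorem asserts. The correct statement is that the random angular functional converges in law to $c\,M_\infty^{(\alpha)}$ (a nondegenerate random variable), and the mixture arises from taking the expectation of $\exp(-c M_\infty^{(\alpha)}\rho^\alpha)$ \emph{after} exponentiating. You recover the right answer in the following sentence, but the displayed approximation as written is inconsistent with it.
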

Apparently, these are the first results on the stability of stationary solutions in the inelastic Maxwell model \emph{without} the assumption of radial symmetry.
Indeed, it seems that the approach to derive long-time asymptotics directly
from contraction estimates on the Fourier transform of the transient solutions 
like in \cite{BoCeGa} or \cite{Cannone}, needs an hypothesis on the initial datum of the form \eqref{eq:karchhypo}. 
This hypothesis is significantly stronger than ours, as can be seen from the
characterization of NDAs by means of characteristic functions, see e.g. \cite{AaronsonDenker}.
For instances, \eqref{eq:karchhypo} implies that $\mu_0$ belongs to the NDA of a \emph{radially
symmetric} $\alpha$-stable law, which further implies that $\mu_0$ is radially
symmetric "asymptotically" on the complement of large balls. Hence, our condition
that $\mu_0$ belongs to the NDA of \emph{some} full $\alpha$-stable law is
much weaker.
In fact, we expect that the NDA is a \emph{sharp} characterization of the basin of attraction for the kinetic equation
in the sense that all other transient solutions either concentrate at the origin or vaguely converge to zero as time tends to infinity.

The key element in our proof is a probabilistic representation of the solution $\mu$ to \eqref{inelasticboltzmann1}.
First, $\mu$ can be written as a Wild sum,
\begin{align}
  \label{Wild}
  \mu(t) = \sum_{n=0}^\infty e^{-t}(1-e^{-t})^n\mu_n.
\end{align}
Now each of the $\mu_n\in\meas$ is the law of a random vector $V_n$ in $\setR^d$, and the $V_n$ are characterized as follows.
There is an array $(\beta_{k,n},O_{k,n})_{1\le k\le n+1}$ of non-negative random numbers $\beta_{k,n}$ and random rotations $O_{k,n}$ in $\sod$,
such that for every $\rota$ in $\sod$
\begin{align}
  \label{eq:probrep}
  (\rota\eed)\cdot V_n \stackrel{\CL}{ =} \sum_{k=1}^{n+1}(\beta_{k,n}\rota O_{k,n}\eed)\cdot X_k,
\end{align}
where the $X_k$ are i.i.d. (independent and identically distributed) random variables, independent of $\beta_{k,n}$ and $O_{k,n}$, with distribution $\mu_0$.
Now, the techniques pertaining to the central limit theorem are adapted to conclude weak convergence of the $\mu_n$ to some $\mu_\infty^c$,
and this implies via \eqref{Wild} weak convergence of $\mu(t)$ to the same limit as $t\to\infty$.

The general idea of a probabilistic representation of Boltzmann like equations goes back essentially to McKean \cite{McKean1966,McKean1967},
who applied it to the Kac equation, a caricature of the homogeneous Boltzmann equation in dimension $d=1$. 
The idea has since then been extended and refined,
for instance in \cite{DolGabReg,dolera2,GabettaRegazziniCLT,GabettaRegazziniWM} (for the Kac equation) 
and \cite{BaLa,BaLaMa,BaLaRe} (for various one-dimensional  Kac-type kinetic equations).

The extension to dimension $d>1$ is by no means straightforward. 
Only in the very recent paper \cite{DoRe2}, Dolera and Regazzini derived 
a suitable probabilistic representation of the solution of the homogeneous Boltzmann equation in dimension $d=3$,
using particular coordinates on $\setR^3$ and its rotation group.
Here, we extend the Dolera-Regazzini probabilistic representation 
to equation \eqref{eq.boltz1} with kernels of the form \eqref{eq:miracle},
in arbitrary dimensions $d\ge3$.
Our probabilistic representation is summarized in Proposition \ref{probrep},
which should be an interesting result in itself.

\subsection{Plan of the paper}
In Section 2 below we state our hypotheses and formulate the main result 
about the general kinetic model with collision kernel of type \eqref{eq:miracle}.
We also introduce the main tool for the proof: the probabilistic representation of transient solutions.
In Section 3, we prove that inelastic Maxwell molecules, 
that is \eqref{inelasticboltzmann1} with collision rules \eqref{eq:maxwellrules} 
and an arbitrary choice for the density $\gg$ of the cross section \eqref{eq:cross},
fit into the general framework provided in the previous section.
Sections 4 and 5 contain the proof of the main result, which is naturally divided into two parts:
Section 4 is concerned with contraction estimates on a random walk in the rotation group,
which is induced by our probabilistic representation.
In Section 5, we apply the machinery of the central limit theorem to the representation \eqref{eq:probrep}
to obtain the long-time asymptotics of transient solutions to \eqref{inelasticboltzmann1}.
The Appendix contains a summary of various results on $\alpha$-stable distributions
that are relevant to our proofs.

\section{An abstract Boltzmann-like equation}
In this section, we formulate our hypotheses and state our results 
for the general kinetic equation \eqref{eq.boltz1} with collision kernel \eqref{eq:miracle}.
We will see in Section \ref{sec.maxwell} that inelastic Maxwell molecules fall into this model class,
so Theorem \ref{thm.simple} from the introduction follows as a corollary 
from the general Theorem \ref{thm.main} below.

\subsection{Notations}
Denote by $\sod$ the usual orientation-preserving rotation group in $\setR^d$,
and -- by abuse of notation -- by $\soda$ its subgroup that acts on $\setR^{d-1}\subset\setR^d$ only,
i.e., that leaves the ``last'' unit vector $\eed:=(0,\ldots,0,1)\in\setR^d$ invariant. 
As usual, $\deq$ means \emph{equality in law}.
For two probability measures $\pb$ and $\pb'$ on $\sod$,
define their \emph{convolution} $\pb\star\pb'$ as the probability measure with
\begin{align*}
  \pb\star\pb'(A):=\int_\sod \pb(R^TA)\dd\pb'(R)\quad\text{for every measurable set $A$ in $\sod$}.
\end{align*}
Accordingly, we define powers $\pb^{\star2}=\pb\star\pb$ etc. 
Finally, let $\haar$ be the Haar measure on $\sod$.

\subsection{Main assumptions and results}\label{sec.general}
Let $(r^-,r^+,R^-,R^+)$ be a random element defined on a suitable probability space $(\Omega,\CF,\pby)$ 
taking values in $\setR_+ \times \setR_+ \times \sod \times \sod$ 
and denote by $\E$ the expectation with respect to $\pby$.
Our assumptions on the law of $(r^-,r^+,R^-,R^+)$ are the following
\begin{itemize}
\item[(H1)] \textit{For any rotations $\rota_1,\rota_2\in\sod$ such that $\rota_1 \eed=\rota_2 \eed$},
  we have 
  \begin{align*}
    ( \rota_1 r^- R^- \eed, \rota_1 r^+ R^+ \eed )\deq ( \rota_2 r^- R^- \eed, \rota_2 r^+ R^+ \eed ).
  \end{align*}
\item[(H2)] \textit{There are an $\alpha\in(0,2)$ and a $\gamma>1$ such that}
  \begin{align*}
    \E[(r^-)^\alpha+(r^+)^\alpha]=1,\quad\text{and}\quad\E[(r^-)^{\alpha\gamma}+(r^+)^{\alpha\gamma}]<1,    
  \end{align*}
  \textit{and, in addition, $\pby\{r^->0\}+\pby\{r^+>0\}>1$}.
\end{itemize}
For later reference, we introduce the (convex) function $\CS:[0,\infty)\to[-1;\infty]$ by
\begin{align}
  \label{eq:CS}
  \CS(s) := \E\big[(r^+)^s+(r^-)^s\big]-1.
\end{align}
Then (H2) can be rephrased in the form that
\begin{align*}
  \CS(\alpha) = 0 \quad \text{and} \quad \CS(\alpha\gamma)<0 \quad \text{for some $\alpha\in(0,2)$ and $\gamma>1$}.
\end{align*}
Under hypothesis (H2), the following defines probability measures $\pb^+$ and $\pb^-$ on $\sod$:
\begin{align}
  \label{eq:BB}
  \int_\sod f(R) \pb^\pm(\dn R)= \frac{\E\big[ (r^\pm)^\alpha f\big(R^\pm \big)\big]}{\E[(r^\pm)^\alpha]}
  \quad\text{for all $f\in C^0_b(\sod)$}.
\end{align}
In addition to (H1)-(H2), we shall assume further:
\begin{itemize}
\item[(H3)]
  \textit{The probability measures $\pb^\pm$ are non-singular with respect to the Haar measure, i.e.
    they have  a non-trivial absolutely continuous component with respect to $\haar$. }
\end{itemize}
Before stating the general form of our main result,
we briefly comment on the role of assumptions (H2) and (H3). 
Assumption (H2) is a classical hypothesis which guarantees 
the existence of a (unique up to scaling) fixed point of the \emph{smoothing transformation} associated with $(r^-,r^+)$.
The respective result is the following.
\begin{proposition}[see \cite{alsmeyer,DurrettLiggett1983}]\label{PropDurretLigget}
  Under assumption {\rm (H2)} there is a unique probability measure $\mix$ on $[0,+\infty)$ with $\int u\mix(\dn u)=1$
  whose characteristic function $\hat\mix$ satisfies 
  \begin{equation}\label{fixedpointDL}
    \hat\mix(\xi) = \E[ \hat\mix((r^-)^\alpha \xi) \hat\mix((r^+)^\alpha \xi)]
    \quad\text{for all $\xi \in \setR$}. 
  \end{equation}
  Moreover, for every $p>1$, 
  $\int u^p\mix(\dn u)<+\infty$ if and only if $\E[(r^-)^{p\alpha}+(r^+)^{p\alpha}]<1$.
\end{proposition}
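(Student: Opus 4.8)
The plan is to identify $\mix$ with the law of the limiting random variable of the \emph{multiplicative cascade} (additive martingale of a branching random walk) generated by $(r^-,r^+)$, and to read off all three assertions from the classical theory of such cascades \cite{DurrettLiggett1983,alsmeyer}. Write $T_-:=(r^-)^\alpha$, $T_+:=(r^+)^\alpha$, so that $\E[T_-^s+T_+^s]=\CS(\alpha s)+1$; in particular $\E[T_-+T_+]=1$ and $\E[T_-^\gamma+T_+^\gamma]<1$. A first observation: since $\CS$ is convex with $\CS(\alpha)=0$ and $\CS(\alpha\gamma)<0$, it is negative on all of $(\alpha,\alpha\gamma]$, so one may assume $\gamma\in(1,2]$; moreover the same convexity forces the left derivative of $s\mapsto\CS(\alpha s)$ at $s=1$ to be strictly negative.

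\emph{Existence.} On the infinite binary tree $\mathbb{T}=\bigcup_{n\ge0}\{-,+\}^n$ attach to each node an independent copy of $(T_-,T_+)$; for a word $v$ of length $n$ let $L_v$ be the product of the relevant weights along the path to the root, and set $W_n:=\sum_{|v|=n}L_v$. Since $\E[T_-+T_+]=1$, the sequence $(W_n)$ is a nonnegative martingale, hence converges a.s.\ to some $W_\infty\ge0$. The two facts isolated above -- the left derivative of $\CS(\alpha\,\cdot)$ at $1$ is negative, and $\E[(T_-+T_+)\log^+(T_-+T_+)]<\infty$ (from $\E[(T_-+T_+)^\gamma]<\infty$) -- are exactly the Biggins--Lyons conditions guaranteeing that this additive martingale is uniformly integrable, whence $W_n\to W_\infty$ in $L^1$ and $\E[W_\infty]=1$. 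Splitting the cascade just below the root gives the distributional identity $W_\infty\deq T_-W_\infty^{(-)}+T_+W_\infty^{(+)}$ with $W_\infty^{(\pm)}$ independent copies of $W_\infty$, independent of $(T_-,T_+)$; passing to characteristic functions yields \eqref{fixedpointDL} for $\mix:=\law(W_\infty)$, a probability measure on $[0,\infty)$ with $\int u\,\mix(\dn u)=1$.

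\emph{Uniqueness.} Let $\mix_1,\mix_2$ be two solutions and $\phi_1,\phi_2$ their Laplace transforms (more convenient here than characteristic functions, and equivalent since both measures sit on $[0,\infty)$); these again satisfy $\phi_j(\lambda)=\E[\phi_j(T_-\lambda)\phi_j(T_+\lambda)]$, and $\int u\,\mix_j(\dn u)=1$ forces $(1-\phi_j(\lambda))/\lambda\to1$ as $\lambda\downarrow0$. For $w:=\phi_1-\phi_2$ one has $w(\lambda)=\E[\phi_1(T_-\lambda)\,w(T_+\lambda)+\phi_2(T_+\lambda)\,w(T_-\lambda)]$, so $0\le\phi_j\le1$ gives the sub-mean-value inequality $|w(\lambda)|\le\E[|w(T_-\lambda)|+|w(T_+\lambda)|]$, which I iterate $n$ times into $|w(\lambda)|\le\E[\sum_{|v|=n}|w(L_v\lambda)|]$. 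Splitting this sum at the threshold $L_v\lambda\le\varepsilon$, the small-argument part is at most $\lambda\sup_{0<\mu\le\varepsilon}|w(\mu)|/\mu$ (using $\E[\sum_{|v|=n}L_v]=1$), while the large-argument part is at most $2(\lambda/\varepsilon)^\gamma(\CS(\alpha\gamma)+1)^n$ by Markov's inequality applied to $L_v^\gamma$ together with $\E[\sum_{|v|=n}L_v^\gamma]=(\CS(\alpha\gamma)+1)^n$. Letting $n\to\infty$ (here $\CS(\alpha\gamma)<0$ enters) and then $\varepsilon\downarrow0$ (here $|w(\mu)|/\mu\to0$ enters) forces $w\equiv0$, hence $\mix_1=\mix_2$. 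The point worth stressing is that this uses only the normalization and (H2), and never any tail bound on the unknown fixed points.

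\emph{Moments.} If $\CS(\alpha p)<0$, the standard $L^p$-estimates for cascade martingales -- obtained by induction on $\lceil p\rceil$ from $\E[T_-^p+T_+^p]<1$ and $\E[(T_-+T_+)^p]\le2^{p-1}\E[T_-^p+T_+^p]<\infty$ -- give $W_n\to W_\infty$ in $L^p$, so $\int u^p\mix(\dn u)<\infty$. Conversely, if $\int u^p\mix(\dn u)<\infty$, apply the strict super-additivity $(a+b)^p>a^p+b^p$ for $a,b>0$, $p>1$, to $W_\infty\deq T_-W_\infty^{(-)}+T_+W_\infty^{(+)}$: since $\pby\{r^->0,\,r^+>0\}\ge\pby\{r^->0\}+\pby\{r^+>0\}-1>0$ by (H2) and $\pby\{W_\infty>0\}>0$, the inequality is strict on a set of positive probability, giving $\E[W_\infty^p]>\E[T_-^p+T_+^p]\,\E[W_\infty^p]$ with $0<\E[W_\infty^p]<\infty$, hence $\E[T_-^p+T_+^p]<1$, i.e.\ $\CS(\alpha p)<0$. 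The only steps needing genuine care are checking the Biggins--Lyons non-degeneracy criterion in the ``critical'' regime $\CS(\alpha)=0$ (for which the convexity observation about the left derivative of $\CS(\alpha\,\cdot)$ at $1$ is precisely what is required), and arranging the uniqueness proof so that it avoids controlling the tails of an arbitrary fixed point -- the trick being to estimate the Laplace transform and to use $\CS(\alpha\gamma)<0$ through the (always integrable) cascade weights $L_v$ rather than through $\mix$ itself.
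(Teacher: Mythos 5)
Your proof is correct, and it is essentially the proof the paper relies on: the paper gives no argument of its own for this proposition but defers entirely to the cited references \cite{DurrettLiggett1983,alsmeyer}, whose standard cascade/branching-random-walk treatment your write-up reproduces (existence via the uniformly integrable additive martingale, uniqueness via the iterated sub-mean-value inequality for Laplace transforms, moments via $L^p$ martingale estimates and strict superadditivity together with $\pby\{r^->0\}+\pby\{r^+>0\}>1$). The one slip worth flagging is that the Biggins--Lyons non-degeneracy criterion asks for $\E\big[(r^-)^{\alpha}\log (r^-)^{\alpha}+(r^+)^{\alpha}\log (r^+)^{\alpha}\big]<0$, i.e.\ the \emph{right} derivative of $s\mapsto\CS(\alpha s)$ at $s=1$ to be negative (a left derivative being negative would not suffice for a convex function); this follows at once from the same secant-slope comparison with $\CS(\alpha\gamma)<0$ that you invoke, so nothing substantive is lost.
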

Assumption (H3) entails the convergence of the $n$-fold convolution $(\pb^\pm)^{\star n}$ to the Haar measure $\haar$. 
See e.g.\ \cite{Bhattacharya} for a proof of exponentially fast convergence in total variation.
We only need a corollary of that result, which is formulated in Proposition \ref{Bhatta2}.
\medskip

With the notations and preliminary results at hand, we can formulate our main theorem.
\begin{theorem}\label{thm.main}
 For a given random element $(r^-,r^+,R^-,R^+)$, 
    define a collision operator $Q_+$ by means of \eqref{eq:miracle}.
    Assume that there is an $\alpha\in(0,2)$ such that hypotheses {\rm (H1)-(H3)} hold,
    and consider the initial value problem \eqref{inelasticboltzmann1} 
    with an initial condition $\mu_0$ that belongs to the NDA of a full $\alpha$-stable distribution with L\'evy measure $\phi$.
  If $\alpha>1$, assume also that $\mu_0$ is centered,
  while if $\alpha=1$, assume that there is some $\gamma_0 \in \setR^d$ with
  \begin{equation}\label{hy-alpha=1BIS}
    \lim_{R \to +\infty} \sup_{\sigma\in\sphere} \Big | \int_{ -R<\sigma\cdot v \leq R } \sigma\cdot v\,\mu_0(\dn v) - \sigma\cdot\gamma_0 \Big| =0 .
  \end{equation}
Then the unique solution $\mu(t)$ to \eqref{inelasticboltzmann1} 
  converges weakly to the probability distribution $\mu_\infty^c$ 
  that has the characteristic function
  \[
  \hat\mu_\infty^c(\xi) = \int_{[0,\infty)} \exp\big( -c u |\xi|^\alpha\big) \mix(\dn u)
  \quad \text{for all $\xi \in \setR^d$},
  \]
  where the probability measure $\mix$ is defined in {\rm Proposition \ref{PropDurretLigget}},
  and
  \[
  c=\frac{1}{\Gamma(\alpha)\sin(\pi\alpha/2)} \int_{\sphere}\int_{\{y: y\cdot\sigma> 1\}} \phi(\dn y) \uuu(\dn\sigma).
  \]
 In particular, the $\mu_\infty^c$ are the only stationary solutions of \eqref{inelasticboltzmann1}
 that belong to the NDA of some $\alpha$-stable distribution on $\setR^d$.
\end{theorem}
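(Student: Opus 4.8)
The plan is to exploit the Wild sum representation \eqref{Wild} together with the probabilistic representation \eqref{eq:probrep} of the random vectors $V_n$ whose laws are the $\mu_n$, and to pass to the limit $n\to\infty$ by a central-limit-type argument, then lift this to the $t\to\infty$ limit of $\mu(t)$.

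First I would fix an arbitrary direction, writing $\xi=\rho\,\rota\eed$ with $\rho\ge0$ and $\rota\in\sod$, and analyze the one-dimensional projections $(\rota\eed)\cdot V_n$ via \eqref{eq:probrep}. Conditionally on the array $(\beta_{k,n},O_{k,n})_{1\le k\le n+1}$, the right-hand side of \eqref{eq:probrep} is a weighted sum of i.i.d.\ real random variables $(\rota O_{k,n}\eed)\cdot X_k$ — but note each summand uses a \emph{different} projection direction $\rota O_{k,n}\eed$ of the same $\mu_0$. Since $\mu_0$ is in the NDA of a full $\alpha$-stable law with L\'evy measure $\phi$, each projection $\sigma\cdot X_k$ lies in the NDA of the corresponding one-dimensional $\alpha$-stable law with tail behaviour governed by $\phi$ restricted to half-spaces $\{y\cdot\sigma>1\}$; this is where the appendix on stable laws and the hypotheses on centering (for $\alpha>1$) and \eqref{hy-alpha=1BIS} (for $\alpha=1$) enter, to control the centering of the truncated sums. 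I would then apply the classical triangular-array CLT for sums of the form $\sum_k c_{k}Y_{k}$ with $Y_k$ in a common NDA and weights $c_k=\beta_{k,n}$: the limiting log-characteristic function along direction $\rota\eed$ is $-|\rho|^\alpha$ times an integral of $\beta_{k,n}^\alpha$ against a directional tail constant depending on $\rota O_{k,n}\eed$. Here the contraction estimates of Section 4 on the random walk $O_{k,n}$ in $\sod$ are essential: they force the empirical distribution of the rotations $O_{k,n}$ to equidistribute toward the Haar measure $\haar$ as $n\to\infty$ (this uses (H3) via Proposition \ref{Bhatta2}), which replaces the direction-dependent tail constant by its $\haar$-average — and that average, by invariance, is exactly the rotation-invariant constant $c$ in the statement, with the stated formula coming from the one-dimensional stable tail/characteristic-function dictionary (the $\Gamma(\alpha)\sin(\pi\alpha/2)$ factor). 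Simultaneously, the weights: under (H2) the quantities $W_n:=\sum_{k=1}^{n+1}\beta_{k,n}^\alpha$ form a nonnegative martingale-type object whose limit is governed by the smoothing transformation, and Proposition \ref{PropDurretLigget} identifies the law of its limit as $\mix$; conditioning on this limit and using $\int u\,\mix(\dn u)=1$ produces precisely $\hat\mu_\infty^c(\rho\rota\eed)=\int\exp(-cu\rho^\alpha)\mix(\dn u)$ after taking expectations over the array.

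Having shown $\mu_n\Rightarrow\mu_\infty^c$ (the limit being independent of $n$ in the limit, and crucially independent of the chosen $\rota$ by the Haar-averaging, so the limit is genuinely radially symmetric as a scale mixture of radial $\alpha$-stable laws), I would conclude $\mu(t)\Rightarrow\mu_\infty^c$ from \eqref{Wild} by a standard argument: $e^{-t}(1-e^{-t})^n$ is, for each $t$, a probability distribution on $n\in\setN_0$ concentrated on $n\approx e^t\to\infty$, so $\hat\mu(t;\xi)=\sum_n e^{-t}(1-e^{-t})^n\hat\mu_n(\xi)\to\hat\mu_\infty^c(\xi)$ pointwise by dominated convergence once $\hat\mu_n(\xi)\to\hat\mu_\infty^c(\xi)$; L\'evy's continuity theorem then gives weak convergence. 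Finally, the "only stationary solutions" assertion follows because any stationary solution in the NDA of some $\alpha$-stable law is its own transient evolution, hence equals its own weak limit $\mu_\infty^c$ for the corresponding $c$; a short separate check that each $\mu_\infty^c$ is indeed stationary (its characteristic function solves $\widehat{Q_+}[\hat\mu,\hat\mu]=\hat\mu$, which reduces via \eqref{eq:miracle} and (H1) to the fixed-point identity \eqref{fixedpointDL} for $\hat\mix$ together with $\E[(r^+)^\alpha+(r^-)^\alpha]=1$) completes the picture.

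The main obstacle I expect is the interplay, \emph{within a single limit}, of two different randomizations at incommensurate "speeds": the weights $\beta_{k,n}^\alpha$ must converge (as a sum) to the smoothing-transformation fixed point $\mix$, while the rotations $O_{k,n}$ must equidistribute to $\haar$ so that the direction-dependent stable tails average out to the single constant $c$ — and these two facts have to hold \emph{jointly}, with the CLT applied conditionally on the whole array. Making this rigorous requires a careful Lindeberg/negligibility analysis showing no single weighted term dominates (using the $\gamma>1$ in (H2), i.e.\ $\CS(\alpha\gamma)<0$, to get the $L^{p}$-control from Proposition \ref{PropDurretLigget} that kills the maximal summand), together with a uniform-in-direction version of the NDA tail convergence so that the Haar-averaging step is legitimate; the $\alpha=1$ case additionally needs the delicate cancellation of the logarithmically divergent centering, which is exactly what hypothesis \eqref{hy-alpha=1BIS} is designed to supply.
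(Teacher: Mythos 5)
Your proposal is correct and follows essentially the same route as the paper: a conditional triangular-array CLT applied to the representation \eqref{eq:probrep}, with the joint convergence of the weights to $M_\infty^{(\alpha)}$ and the Haar-equidistribution of the rotations packaged exactly as in Proposition \ref{convPsin}, followed by the Wild-sum/dominated-convergence passage to $t\to\infty$. The technical obstacles you flag (joint negligibility of the maximal weight, uniform-in-direction NDA tail estimates, and the $\alpha=1$ centering) are precisely what the paper resolves via a Skorohod representation and an explicit verification of the classical criteria for convergence to a stable law.
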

The proof of Theorem \ref{thm.main} is given in Section \ref{sec.proof}.

\subsection{A probabilistic representation}
As already mentioned in the introduction, 
the key element in our proof of Theorem \ref{thm.main} is a suitable stochastic representation of $\mu(t)$ 
connected to a randomly weighted sum of i.i.d.\ random vectors.
This probabilistic representation enables us to study the long-time asymptotics of $\mu(t)$ 
by methods related to the central limit theorem.

The starting point is the Wild sum representation \eqref{Wild} of solutions to \eqref{inelasticboltzmann1}.
Equivalently, the time-dependent characteristic function $\hat\mu$ satisfying \eqref{eq.boltz1}
can be written as
\begin{align}
  \label{Wild2zero}
  \hat\mu(t;\xi) = \sum_{n=0}^\infty e^{-t}(1-e^{-t})^n\hat\mu_n(\xi),
\end{align}
where the charcteristic functions $\hat\mu_n$ of the probability measures $\mu_n$ are defined 
inductively from the initial condition $\hat\mu_0$ as follows:
\begin{align}
  \label{Wild2}
  {\hat \mu}_{n+1}(\xi)=\frac{1}{n+1} \sum_{k= 0}^{n} \hat Q_+({\hat \mu}_k,{\hat \mu}_{n-k})(\xi)
  \quad \text{for all $n=0,1,2,\ldots$}
\end{align}
The probabilistic representation we introduce now gives a meaning to the measures $\mu_n$
--- or rather, to their characteristic functions $\hat\mu_n$ ---
in terms of randomly weighted sums of i.i.d.\ random vectors.

The setup is the following:
Let the following be given on a sufficiently large probability space $(\Omega, \CF,\pby)$:
\begin{itemize}
\item a sequence of independent random variables  $(\ell_n)_{n \geq 1}$ 
  such that each $\ell_n$ is uniformly distributed on $\{1,\ldots,n\}$;
\item  a sequence of i.i.d. random vectors  $(X_j)_{j\geq1}$  with distribution $\mu_0$;
\item  a sequence of i.i.d.\ random elements
  $( r^{-}_n, r^{+}_n,R^{-}_n, R^{+}_n)_{n\geq 1}$  with the same law
  of $(r^{-}, r^{+}, R^{-}, R^{+})$ defined in Section \ref{sec.general}.
\end{itemize}
Assume also that $(\ell_n)_{n \geq 1}$, $( R^{-}_n, R^{+}_n, r^{-}_n, r^{+}_n)_{n\geq 1}$ and $(X_j)_{j \geq 1}$
are stochastically independent.
Define recursively the random array $[\ww_{j,n},O_{j,n}]_{j=1,\dots,n+1,n \geq 0}$ by setting
\begin{align*}
  &O_{1,0}:=\eins_d, \ww_{1,0}:=1, \quad \text{and for all $n\ge1$}: \\
  &(O_{1,n},\ldots,O_{n+1,n})
  = \big(O_{1,n-1},\ldots,O_{\ell_n-1,n-1},O_{\ell_n,n-1}  R^{-}_n,O_{\ell_n,n-1}  R_n^+ ,O_{\ell_n+1,n-1},\ldots,O_{n,n-1}\big) \\
  &(\ww_{1,n},\ldots,\ww_{n+1,n})
  = \big(\ww_{1,n-1},\ldots,\ww_{\ell_n-1,n-1},\ww_{\ell_n,n-1}  r^{-}_n,\ww_{\ell_n,n-1}  r_n^+ ,\ww_{\ell_n+1,n-1},\ldots,\ww_{n,n-1}\big).
\end{align*}
This construction extends the one given in \cite{BaLa} for a class of one-dimensional generalized Kac equations. 
For given $n\ge1$, one should think of the quantities $\ww_{n,j}$ and $O_{n,j}$ as attached to the $n+1$ leaves
of a binary tree (whose shape is determined by $\ell_1$ to $\ell_n$) with $n$ internal nodes.
In the context of the Kac model, these binary trees are commonly referred to as \emph{McKean trees}.
\begin{proposition}
  \label{Prop.probrep00}
  For every $n \geq 0$, every $\rho\in\setR_+$ and every  $\rota\in\sod$ 
  one has
  \begin{align}
    \label{eq:pr}
    {\hat \mu}_n(\rho\rota\eed)=\E\Big[\prod_{j=1}^{n+1} {\hat \mu}_0\big(\rho\ww_{j,n} \rota O_{j,n} \eed\big)  \Big].
  \end{align}
\end{proposition}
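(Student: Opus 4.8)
The plan is to prove \eqref{eq:pr} by induction on $n$, exploiting the recursive definition of the array $[\ww_{j,n},O_{j,n}]$ together with the defining relation \eqref{Wild2} for $\hat\mu_{n+1}$. The base case $n=0$ is immediate: by construction $O_{1,0}=\eins_d$ and $\ww_{1,0}=1$, so the right-hand side of \eqref{eq:pr} is just $\hat\mu_0(\rho\rota\eed)$, which matches.

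For the inductive step, I would fix $n\ge1$ and assume \eqref{eq:pr} holds for all indices $0,1,\dots,n-1$. The first key observation is that \eqref{Wild2} can be restated in probabilistic language using the uniform variable $\ell_n$ on $\{1,\dots,n\}$: one has $\hat\mu_n = \E_{\ell_n}[\hat Q_+(\hat\mu_{\ell_n-1},\hat\mu_{n-\ell_n})]$. Next I would unpack $\hat Q_+$ through the miracle representation \eqref{eq:miracle}: for any $\rho$ and $\rota$,
\begin{align*}
  \hat Q_+[\hat\mu_k,\hat\mu_{n-1-k}](\rho\rota\eed) = \E\big[\hat\mu_k(\rho r^-_n\rota R^-_n\eed)\,\hat\mu_{n-1-k}(\rho r^+_n\rota R^+_n\eed)\big].
\end{align*}
Here I must be slightly careful about the asymmetry between $r^-,R^-$ and $r^+,R^+$ versus the symmetric-looking sum in \eqref{Wild2}; since $\ell_n$ is uniform on $\{1,\dots,n\}$ one sums $\hat Q_+(\hat\mu_{\ell_n-1},\hat\mu_{n-\ell_n})$ and the index bookkeeping should match the recursive splitting $(\dots,O_{\ell_n,n-1}R^-_n, O_{\ell_n,n-1}R^+_n,\dots)$ exactly. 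Then I apply the inductive hypothesis to each of the two factors $\hat\mu_{\ell_n-1}$ and $\hat\mu_{n-\ell_n}$, with the rotation argument of the first being $r^-_n\rota R^-_n$ (so the "new" base rotation is $\rota R^-_n$ and the new scale factor is $\rho r^-_n$), and similarly $\rota R^+_n$, $\rho r^+_n$ for the second. This produces a product over the $\ell_n$ leaves of the left subtree and a product over the $n-\ell_n+1$ leaves of the right subtree; matching the weights $\ww_{j,\ell_n-1}$ on the left with the insertion $\ww_{\ell_n,n-1}r^-_n$ (and $O_{j,\ell_n-1}$ with $O_{\ell_n,n-1}R^-_n$), and symmetrically on the right, reconstitutes precisely the array $[\ww_{j,n},O_{j,n}]_{j=1,\dots,n+1}$ as defined recursively. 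One then uses the independence of $(\ell_n)$, $(r^\pm_n,R^\pm_n)$ and the previously constructed array (for indices $\le n-1$) together with the tower property to collapse the nested expectations into the single expectation $\E[\,\cdot\,]$ on the right of \eqref{eq:pr}.

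The point requiring the most care — and the main obstacle — is verifying that hypothesis (H1) is invoked correctly so that the right-hand side of \eqref{eq:pr} is genuinely independent of the choice of $\rota$ among rotations sending $\eed$ to a given direction, and more importantly that the recursive substitution is consistent. When I apply \eqref{eq:miracle} with the rotation $\rota O_{\ell_n,n-1}$ in place of $\rota$ (to advance the leaf at position $\ell_n$), I need that the pair $(\rota O_{\ell_n,n-1} r^-_n R^-_n\eed, \rota O_{\ell_n,n-1} r^+_n R^+_n\eed)$ has a law depending on $O_{\ell_n,n-1}$ only through $O_{\ell_n,n-1}\eed$; this is exactly (H1), and it is what guarantees that iterating \eqref{eq:miracle} down the McKean tree is well-defined regardless of the internal representatives chosen. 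I would state this compatibility as a short lemma (or inline remark) before the induction, so that in the inductive step the algebra of matching leaf-indices becomes purely bookkeeping. Apart from that, the argument is a routine — if notationally heavy — tree induction, and I would organize the leaf-index matching with a clearly displayed concatenation identity so the reader can check the shapes line up.
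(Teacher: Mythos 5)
Your base case and overall strategy (induction on $n$, unpacking $\widehat{Q_+}$ via \eqref{eq:miracle} and matching against \eqref{Wild2}) are right, but the inductive step conflates two different decompositions of the McKean tree, and as written it does not close. The Wild recursion \eqref{Wild2} corresponds to splitting the tree at the \emph{root}: $\hat\mu_n$ is assembled from $\hat\mu_{k-1}$ and $\hat\mu_{n-k}$ through the root collision, whose parameters are $(r_1^-,r_1^+,R_1^-,R_1^+)$ (the \emph{first} step of the construction) and whose split index is $J$, the number of leaves in the left subtree after $n$ steps --- a random variable determined by $\ell_1,\dots,\ell_n$. The array $[\ww_{j,n},O_{j,n}]$, on the other hand, is defined by the \emph{last-step} recursion: split leaf $\ell_n$ of the $(n-1)$-array using $(r_n^\pm,R_n^\pm)$, with the new rotations appended on the right, $O_{\ell_n,n-1}R_n^\pm$. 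You put $(r_n^\pm,R_n^\pm)$ at the root and identify the root split index with $\ell_n$; both $J$ and $\ell_n$ happen to be uniform on $\{1,\dots,n\}$, but they are different random variables playing different structural roles, and the factorization needed to apply the induction hypothesis --- namely $\ww_{j,n}=r^-\ww_{j,n}'$ and $O_{j,n}=R^- O_{j,n}'$ for $j\le J$, with the root rotation multiplying on the \emph{left} --- is the root factorization, not the leaf-splitting one you invoke.

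Concretely, if you literally peel off the last step as you propose, the inner conditional expectation over $(r_n^\pm,R_n^\pm)$ collapses the two new leaves into a single factor $\widehat{Q_+}[\hat\mu_0,\hat\mu_0]=\hat\mu_1$ evaluated at a uniformly random leaf of the $(n-1)$-array; this yields a ``replace one $\hat\mu_0$ by $\hat\mu_1$'' recursion, which is not the Wild recursion \eqref{Wild2}, and closing the induction from there requires exactly the combinatorial input you are missing. That input is: (i) the root split index $J$ is uniform on $\{1,\dots,n\}$, and (ii) conditionally on $J=k$ and on the root parameters, the left and right subtree arrays are independent with the laws of $(\beta_{j,k-1},O_{j,k-1})_{j\le k}$ and $(\beta_{j,n-k},O_{j,n-k})_{j\le n+1-k}$ respectively. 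These are nontrivial distributional facts about binary trees grown by uniform leaf splitting (the paper cites \cite{BaLaMa} for them); once they are available, conditioning and applying the induction hypothesis to each subtree gives $\frac1n\sum_{k=1}^n\E\big[\hat\mu_{k-1}(\rho r^-\rota R^-\eed)\hat\mu_{n-k}(\rho r^+\rota R^+\eed)\big]$, which equals $\hat\mu_n(\rho\rota\eed)$ by \eqref{eq:miracle} and \eqref{Wild2}. Your remarks on (H1) are sensible but peripheral; the real obstacle is the tree combinatorics.
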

\begin{proof} 
  For $n=0$ there is nothing to prove. 
  For $n=1$ the statement reduces to the definition of $Q_+$ in \eqref{eq:miracle}. 
  We proceed by induction on $n$. 

  Fix $n\geq 1$ and assume that \eqref{eq:pr} is true for all $k=0,\dots,n-1$ in place of $n$.
  By construction, $\ww_{1,1}=r^-_1$, $\ww_{1,2}=r^+_1$, and $O_{1,1}=R^-_1$, $O_{1,2}=R^+_1$.
  Consequently, we can write
  \begin{equation}\label{beta-O}
    \begin{split}
      & \ww_{j,n}=r^-_1\ww_{j,n}', \quad O_{j,n}=R^-_1 O_{j,n}' \qquad \text{for $j=1,\ldots,J$}, \\
      & \ww_{j,n}=r^+_1\ww_{j,n}'', \quad O_{j,n}=R^+_1 O_{j,n}'' \qquad \text{for $j=J+1,\ldots,n+1$},\\
    \end{split}
  \end{equation}
  with a random index $J\in\{1,\ldots,n\}$ depending on $\ell_1$ to $\ell_n$.
  The factorization \eqref{beta-O} corresponds to splitting the $n$th binary tree at the root
  into a left tree (with $J$ leaves) and a right tree (with $n+1-J$ leaves).
  It is easy to see that $J$ is uniformly distributed on $\{1,\ldots,n\}$, see e.g.\ \cite{BaLaMa}.
  It is further easy to see that, given $(J,r^-_1,r^+_1,R^-_1,R^+_1)$, 
  the random elements $(\ww_{j,n}',O_{j,n}')_{j=1,\ldots,J}$ and $(\ww_{j,n}'',O_{j,n}'')_{j=J+1,\ldots,n+1}$ are conditionally independent.
  Their conditional distribution, given the event $\{J=k\}$,
  satisfies
  \begin{align*}
    \big(\beta_{j,n}',O_{j,n}'\big)_{j=1,\dots,k} &\deq\big(\beta_{j,k-1},O_{j,k-1}\big)_{j=1,\dots,k}, \\
    \big(\beta_{j,n}'',O_{j,n}''\big)_{j=k+1,\dots,n+1}&\deq\big(\beta_{j,n-k},O_{j,n-k}\big)_{j=1,\dots,n+1-k}.
  \end{align*}
  Thus, if $(r^-,r^+,R^-,R^+)$ is defined as above and it is assumed independent of all the rest, using the induction hypothesis, one can write 
  \begin{align*}
    \E\Big[\prod_{j=1}^{n+1} {\hat \mu}_0\big(\rho\ww_{j,n} \rota O_{j,n} \eed\big)  \Big]
    &= \frac{1}{n} \sum_{k=1}^{n} \E\Big [\E\Big [ \prod_{j=1}^{k}  {\hat \mu}_0\big(\rho r^-\rota R^-\ww_{j,k-1}O_{j,k-1}\eed \big) \Big|r^-,r^+,R^-,R^+\Big] \\
    &\qquad \cdot \E\Big [\prod_{j=1}^{n+1-k}{\hat \mu}_0\big(\rho r^+\rota R^+\ww_{j,n-k}O_{j,n-k}\eed \big)\Big |r^-,r^+,R^-,R^+\Big] \Big ] \\
    &=\frac1n\sum_{k=1}^n \E\big[\hat\mu_{k-1}\big(\rho r^-\rota R^-\eed\big)\hat\mu_{n-k}\big(\rho r^+\rota R^+\eed\big)\big],
  \end{align*}
  which, by \eqref{eq:miracle} and \eqref{Wild2}, equals to $\hat\mu_n$.
\end{proof}
We can now formulate the above mentioned probabilistic representation. 
The first representation of this type has been derived in \cite{DoRe2} for the \emph{fully elastic} Boltzmann equation in $\setR^3$.
\begin{proposition}\label{probrep}
  Let $\ee$ in $\sphere$ and let $\CO\in\sod$ be such that $\ee=\CO\eed$.
  Then $\rho \mapsto {\hat \mu}_n(\rho \ee)$ is the characteristic function of
  \begin{align}
    \label{eq:pr1}
    \sum_{k=1}^{n+1}\big(\ww_{k,n}\CO O_{k,n}\eed\big) \cdot X_k.
  \end{align}
\end{proposition}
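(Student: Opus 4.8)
The plan is to deduce Proposition~\ref{probrep} from Proposition~\ref{Prop.probrep00} by a conditioning argument that separates the randomly weighted geometry $[\ww_{j,n},O_{j,n}]$ from the i.i.d.\ vectors $(X_k)$. First I would fix $\ee\in\sphere$ and a rotation $\CO\in\sod$ with $\ee=\CO\eed$, and compute the characteristic function of the random variable in \eqref{eq:pr1} evaluated at $\rho\in\setR$. Writing $S_n:=\sum_{k=1}^{n+1}(\ww_{k,n}\CO O_{k,n}\eed)\cdot X_k$, we have
\begin{align*}
  \E\big[e^{i\rho S_n}\big]
  = \E\Big[\E\big[e^{i\rho S_n}\,\big|\,[\ww_{j,n},O_{j,n}]_{j=1,\dots,n+1}\big]\Big].
\end{align*}
Conditionally on the array $[\ww_{j,n},O_{j,n}]$, the summands $(\ww_{k,n}\CO O_{k,n}\eed)\cdot X_k$ are independent (because the $X_k$ are i.i.d.\ and independent of the array, by the standing independence assumptions on $(\ell_n)$, $(r^\pm_n,R^\pm_n)$ and $(X_j)$), so the conditional characteristic function factorizes:
\begin{align*}
  \E\big[e^{i\rho S_n}\,\big|\,[\ww_{j,n},O_{j,n}]\big]
  = \prod_{k=1}^{n+1}\E\big[e^{i\rho(\ww_{k,n}\CO O_{k,n}\eed)\cdot X_k}\,\big|\,[\ww_{j,n},O_{j,n}]\big]
  = \prod_{k=1}^{n+1}\hat\mu_0\big(\rho\,\ww_{k,n}\,\CO O_{k,n}\eed\big),
\end{align*}
where the last equality uses that $\hat\mu_0(\xi)=\E[e^{i\xi\cdot X_k}]$ and that, given the array, the vector $\rho\,\ww_{k,n}\,\CO O_{k,n}\eed$ is a fixed (measurable function of the array) argument at which the characteristic function of $X_k$ is evaluated.

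Taking the outer expectation over the array then gives
\begin{align*}
  \E\big[e^{i\rho S_n}\big] = \E\Big[\prod_{k=1}^{n+1}\hat\mu_0\big(\rho\,\ww_{k,n}\,\CO O_{k,n}\eed\big)\Big],
\end{align*}
which is exactly the right-hand side of \eqref{eq:pr} with $\rota=\CO$. By Proposition~\ref{Prop.probrep00} the latter equals $\hat\mu_n(\rho\,\CO\eed)=\hat\mu_n(\rho\ee)$. Hence $\rho\mapsto\hat\mu_n(\rho\ee)$ is the characteristic function of $S_n$, which is the assertion. One small point to record is that \eqref{eq:pr} as stated is for $\rho\in\setR_+$, whereas the characteristic function of $S_n$ must be identified for all $\rho\in\setR$; but since $\hat\mu_0(-\xi)=\overline{\hat\mu_0(\xi)}$ and likewise for the product and for $\hat\mu_n$, the case $\rho<0$ follows by complex conjugation, and $\rho=0$ is trivial, so the identity $\E[e^{i\rho S_n}]=\hat\mu_n(\rho\ee)$ holds on all of $\setR$.

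I do not anticipate a genuine obstacle here; the proposition is essentially a repackaging of Proposition~\ref{Prop.probrep00}. The only thing that needs care is the measure-theoretic bookkeeping of the conditioning: one must invoke that $(X_j)_{j\ge1}$ is independent of the sigma-algebra generated by $(\ell_n)$ and $(r^\pm_n,R^\pm_n)$ — and hence of the array $[\ww_{j,n},O_{j,n}]$, which is a measurable function of those — to justify both the conditional independence of the summands and the evaluation $\E[e^{i\xi\cdot X_k}\mid \sigma(\text{array})]=\hat\mu_0(\xi)$ for array-measurable $\xi$. This is routine but should be stated explicitly. A secondary cosmetic issue is the notational clash between $\ww_{j,n}$ and $\beta_{j,n}$ (the proof of Proposition~\ref{Prop.probrep00} silently switches to $\beta$); I would keep the $\ww$ notation throughout for consistency with the statement.
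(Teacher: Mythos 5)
Your proposal is correct and follows essentially the same route as the paper: condition on the array $(\ww_{j,n},O_{j,n})$, factorize the conditional characteristic function into a product of $\hat\mu_0$ evaluations, and invoke Proposition~\ref{Prop.probrep00}; the paper likewise disposes of $\rho<0$ by noting that two characteristic functions agreeing on the positive real axis coincide everywhere, which is the same conjugation observation you make. Your explicit remarks on the measure-theoretic justification of the conditioning are a welcome (if routine) addition, not a deviation.
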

\begin{proof}
  We calculate the characteristic function of the sum given in \eqref{eq:pr1} at $\rho\in\setR_+$:
  \begin{align*}
    \E\bigg[\exp\bigg(i\rho\sum_{k=1}^{n+1}\big(\ww_{k,n}\CO O_{k,n}\eed\big) \cdot X_k\bigg)\bigg] 
    &=\E\bigg[\E\bigg[\prod_{k=1}^{n+1}\exp\Big(i\big(\rho\ww_{k,n}\CO O_{k,n}\eed\big)\cdot X_k\Big)\bigg|(\ww_{j,n},O_{j,n})_{1\le j\le n+1}\bigg]\bigg]\\
    &=\E\bigg[\prod_{k=1}^{n+1}\hat\mu_0\big(\rho\ww_{k,n}\CO O_{k,n}\eed\big)\bigg]
    =\hat\mu_n(\rho\CO\ee_d)=\hat\mu_n(\rho\ee),
  \end{align*}
  where we have used \eqref{eq:pr}.
  Since two characteristic functions that coincide on the positive real axis are equal, the first claim follows.
\end{proof}
\begin{remark}
  A consequence of Proposition \ref{probrep} is the following:
  if $(V_t)_{t\ge0}$ is a random process with (marginal) distribution $\mu(t)$ for every $t>0$
and $(N_t)_{t\ge0}$ is a random process with values in $\{0,1,\dots,\}$ and independent of $(\beta_{k,n}, O_{k,n})_{k,n}$ and $(X_k)_{k \geq 1}$,
  such that $\pby\{N_t=n\}=e^{-t}(1-e^{-t})^{n}$ for every $t\ge0$,
  then 
  \[
  (\rota\eed)\cdot V_t \deq \sum_{k=1}^{N_t+1} \big(\ww_{k,N_t}\rota O_{k,N_t}\eed\big)\cdot X_k
  \]
  for every $\rota\in\sod$ and all $t\ge0$.
  Indeed, it suffices to observe that
  \begin{align*}
    \E\big[\exp\big(i(\rota\eed)\cdot\rho V_t\big)\big]={\hat \mu}(\rho\rota\eed,t)
    =\sum_{n \geq 0} e^{-t}(1-e^{-t})^n{\hat \mu}_n(\rho \ee)
  \end{align*}
  by \eqref{eq:pr1} above and the Wild representation \eqref{Wild2zero}.
\end{remark}

\section{The inelastic Maxwell models as a special case}\label{sec.maxwell}
The aim of this section is to show that the homogeneous Boltzmann equation with collision rules \eqref{eq:maxwellrules} 
is indeed a special case of the more general equation considered here.
Theorem \ref{thm.simple} then follows as a corollary of Theorem \ref{thm.main}.

Our starting point is the equation in its Fourier representation \eqref{eq.boltz1},
which has been derived in \cite{Bobylev99}, with
the collision kernel 
\begin{equation}
  \label{eq.boltz2}
  \widehat{Q_+}[\hat\mu,\hat\mu](\xi) = \E[{\hat\mu}(Y_{\xi}^{+}){\hat \mu}(Y_{\xi}^{-})],
\end{equation}
where, for any $\xi\in\setR^d$, the two random vectors $Y_{\xi}^{-}$ and $Y_{\xi}^{+}$ in $\setR^d$ 
are given by
\begin{align}
  \label{eq:xipm2}
  &Y_{\xi}^{-} := \frac{1-\del}2(\xi - |\xi|\nml), \quad
  Y_{\xi}^{+} := \frac{1+\del}2\xi + \frac{1-\del}2|\xi|\nml, \\
  \label{eq:ximp3}
  &\text{with a random unit vector $\nml$ which has law $\gg\big(\sigma \cdot\xi/|\xi|\big)\dd\uuu(\sigma)$}.
\end{align}
Below, we rewrite \eqref{eq.boltz2} in the form \eqref{eq:miracle},
with suitable random quantities $r^\pm$ and $R^\pm$ satisfying (H1)-(H3).

\subsection{Preliminaries on rotation groups}
We start by recalling some well-known facts about the Haar distribution.
If the random matrix $O$ has Haar distribution on $\sodk$, 
then
\[
GO\deq O^T\deq O
\]
for every orthogonal matrix $G\in\sodk$; see, e.g., Theorem 5.14 in \cite{rudin}. 
Moreover, for any $\ee \in \mathbb{S}^{k-1}$, 
\begin{equation}\label{unif-on-sphere}
  O\ee \quad\text{is uniformly distributed on $\mathbb{S}^{k-1}$}.
\end{equation}
A random matrix $U$ with values in $\sod$ will be called \emph{uniformly distributed on $\soda$} 
if $U\in\soda$ a.s., and the random $(d-1)\times(d-1)$-matrix $U^*$ obtained from $U$ by deleting 
the $d$th line and $d$th column has as distribution the Haar measure of the $(d-1)$-dimensional rotation group.

We call a measure $\lambda$ on $\sphere$ \emph{invariant under $\soda$}, 
if $\lambda(\rota B)=\lambda(B)$ for every $\rota$ in $\soda$ and for all measurable sets $B\subseteq\sphere$.
Since $\soda$ acts transitively on each of the $(d-2)$-spheres $\{y\in\sphere|\eed\cdot y=z\}$ with $z\in(-1,1)$,
the invariant measure $\lambda$ is uniquely determined by its \emph{projected measure} $\Pi\lambda$ on $[-1,1]$,
given by $\Pi\lambda(J) = \lambda\big(\{y\in\sphere|\eed\cdot y\in J\}\big)$ for all measurable $J\subseteq[-1,1]$.
In the particular case that $\lambda(\dn\sigma) = f(\sigma\cdot\eed)\uuu(\dn\sigma)$  with $f:[-1,1]\to\setR$,
the projected measure $\Pi\lambda$ has a density (w.r.t.\ Lebesuge measure) $\Pi f:[-1,1]\to\setR$ with
\begin{align}
  \label{eq:project}
  \Pi f(z) := \frac{\dn(\Pi\lambda)}{\dd z} = \frac1{B_d}f(z)\sqrt{(1-z^2)^{d-3}}
  \quad \text{where} \quad
  B_d = \int_0^1 \sqrt{z^{-1}(1-z)^{d-3}}\dd z,
\end{align}
which is easily verified by the change of variables formula.

Finally, we denote by $Z_\psi\in\sod$  the (positive) rotation in the $\ee_1-\eed$-plane about the angle $\psi\in[0,\pi]$, 
that is
\begin{align*}
  (Z_\psi)_{kk}&=1,\qquad k=2,\dots,d-1,\\
  (Z_\psi)_{11}&=\cos\psi, \qquad (Z_\psi)_{1d}=-\sin\psi, \\
  (Z_\psi)_{d1}&=\sin\psi, \qquad (Z_\psi)_{dd}=\cos\psi,
\end{align*}
and all other entries of $Z_\psi$ are zero.
The following probabilistic interpretation of Hurwitz's \cite{hurwitz} representation of the Haar measure will be of importance.
\begin{theorem}
  \label{thm:hurwitz}
  There are random rotations $U_1,U_2$ in $\sod$ and a random angle $\psi_*$ in $[0,\pi]$ such that
  \begin{itemize}
  \item $U_1$, $U_2$ and $\psi_*$ are independent,
  \item $U_1$ is uniformly distributed on $\soda$ and  $U_2\in\soda$ a.s.,
  \item $\psi_*$ has a continuous probability density function that is positive on $(0,\pi)$,
  \item the law of $U_1Z_{\psi_*} U_2$ is the Haar measure on $\sod$.
  \end{itemize}
\end{theorem}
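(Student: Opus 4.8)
The plan is to deduce the Hurwitz decomposition from the disintegration of the Haar measure $\haar$ along the fibration $\sod\to\sphere$, $R\mapsto R\eed$, whose fibres are exactly the left cosets of $\soda$. Two facts carry the argument. \emph{First fact:} $\haar$ is the unique probability measure on $\sod$ that is invariant under right multiplication by $\soda$ and whose image under the map $R\mapsto R\eed$ is the uniform probability $\uuu$ on $\sphere$. Indeed, given any such $\mu$, disintegrate it over the fibres of $R\mapsto R\eed$; right multiplication by an element of $\soda$ fixes each fibre (it does not change $R\eed$), so right-$\soda$-invariance of $\mu$ forces the conditional law on each fibre to be right-$\soda$-invariant, and on a single coset of $\soda$ the only such probability is the corresponding translate of the Haar measure of $\soda$ (uniqueness of Haar); hence $\mu$ is pinned down, and it must equal $\haar$, which has both properties (it is bi-invariant, and it projects to the unique $\sod$-invariant probability on $\sphere=\sod/\soda$, namely $\uuu$). \emph{Second fact:} writing a point of $\sphere$ as $\cos\psi\,\eed+\sin\psi\,\omega$ with $\psi\in[0,\pi]$ and $\omega$ in the equatorial sphere $\mathbb{S}^{d-2}$, the measure $\uuu$ corresponds to the product of the probability on $[0,\pi]$ with density $\psi\mapsto B_d^{-1}\sin^{d-2}\psi$ and, independently, the uniform probability on $\mathbb{S}^{d-2}$; this is \eqref{eq:project} with $f\equiv1$ after the substitution $z=\cos\psi$ (which incidentally gives $\int_0^\pi\sin^{d-2}\psi\,\mathrm{d}\psi=B_d$), together with the $\soda$-invariance of $\uuu$ recalled just before \eqref{eq:project}. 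Here $d\ge3$ is used, so that $\soda$ acts transitively on $\mathbb{S}^{d-2}$.

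With these at hand, define $\psi_*$ to have density $B_d^{-1}\sin^{d-2}\psi$ on $[0,\pi]$, let $U_1$ and $U_2$ be uniformly distributed on $\soda$, and take $U_1$, $U_2$ and $\psi_*$ independent. All the asserted properties except the last one are built into this choice: independence holds by construction, $U_2\in\soda$ almost surely, $U_1$ is uniform on $\soda$, and $B_d^{-1}\sin^{d-2}\psi$ is continuous on $[0,\pi]$ and strictly positive on $(0,\pi)$ when $d\ge3$. To obtain the last property I would show that $\nu:=\law(U_1 Z_{\psi_*}U_2)=\haar$, and by the first fact it suffices to check that $\nu$ is right-$\soda$-invariant and that $R\mapsto R\eed$ pushes $\nu$ onto $\uuu$. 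The former is immediate: for $V\in\soda$ we have $U_2V\deq U_2$ by right-invariance of the Haar measure of $\soda$, and $U_2$ is independent of $(U_1,\psi_*)$, whence $U_1 Z_{\psi_*}(U_2V)\deq U_1 Z_{\psi_*}U_2$. For the latter, $U_2\eed=\eed$ gives
\[
  (U_1 Z_{\psi_*}U_2)\eed \;=\; U_1 Z_{\psi_*}\eed \;=\; \cos\psi_*\,\eed-\sin\psi_*\,(U_1\ee_1),
\]
and $U_1\ee_1$ --- hence also $-U_1\ee_1$ --- is uniform on $\mathbb{S}^{d-2}$ by \eqref{unif-on-sphere} and independent of $\psi_*$, so by the second fact the unit vector on the right-hand side is $\uuu$-distributed. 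Therefore $\nu=\haar$.

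The only genuinely non-routine step is the first fact --- the disintegration of $\haar$ over $\sod/\soda\cong\sphere$ with ``Haar fibres'', equivalently its characterization as the unique right-$\soda$-invariant lift of $\uuu$. This is standard, but I would state it with care: the measurability of the disintegration, and the reduction of an invariant conditional law on one coset of $\soda$ to the Haar measure of $\soda$ via uniqueness of Haar. Everything else is bookkeeping --- the latitude--longitude form of $\uuu$ is precisely \eqref{eq:project}, and the two verifications in the previous paragraph are a line apiece. If one wishes to bypass the abstract disintegration, the same conclusion follows by observing directly that the conditional law of $U_1 Z_{\psi_*}U_2$ given its image $\Sigma\in\sphere$ is the uniform probability on the fibre over $\Sigma$ (because $U_1 Z_{\psi_*}U_2$ is the section point $U_1 Z_{\psi_*}$ right-translated by the independent uniform element $U_2$ of $\soda$), while $\Sigma\sim\uuu$ by the second fact.
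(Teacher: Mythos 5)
Your proof is correct, but it takes a genuinely different route from the paper's. The paper's argument is a sketch that invokes Hurwitz's full Euler--angle factorization of a Haar--distributed rotation into $d(d-1)/2$ elementary planar rotations with prescribed independent angle densities, and then asserts that grouping the factors appropriately produces $U_1Z_{\psi_*}U_2$. You instead work with the single fibration $\sod\to\sphere$, $R\mapsto R\eed$: you characterize $\haar$ as the unique right-$\soda$-invariant probability on $\sod$ projecting to $\uuu$, and then verify directly that $U_1Z_{\psi_*}U_2$ has both properties when $\psi_*$ carries the colatitude density $B_d^{-1}\sin^{d-2}\psi$ (which is indeed \eqref{eq:project} with $f\equiv1$ under $z=\cos\psi$, and your identity $U_1Z_{\psi_*}U_2\eed=\cos\psi_*\,\eed-\sin\psi_*\,U_1\ee_1$ matches the paper's sign convention for $Z_\psi$ and feeds correctly into \eqref{unif-on-sphere}). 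What your approach buys is a complete, self-contained argument: the paper's ``it is then easy to see'' step is left to the reader, and the full Hurwitz parametrization is more machinery than the statement requires. The one non-elementary ingredient, which you correctly isolate, is the uniqueness of the right-$\soda$-invariant lift of $\uuu$, i.e.\ the disintegration of $\haar$ over $\sod/\soda$ with Haar conditional laws on the cosets; this is a standard fact for compact quotients, and your fallback observation --- that averaging over the independent uniform $U_2$ makes the conditional law of $U_1Z_{\psi_*}U_2$ given its image in $\sphere$ uniform on the fibre --- closes it without appealing to abstract disintegration. Choosing $U_2$ uniform on $\soda$ is a legitimate special case of the theorem's weaker requirement that $U_2\in\soda$ a.s.
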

\begin{proof}[Sketch of the proof]
    In \cite{hurwitz} it is shown that 
    an arbitrary $d$-dimensional rotation matrix may be written as a product of $d(d-1)/2$ elementary rotations in two-dimensional subspaces. 
    Denote by $Z^{i,j}(\psi)$ the matrix of an elementary rotation in the plane $\ee_i-\ee_j$ of an angle $\psi$, i.e.  the only nonzero elements of $Z^{i,j}$ are
    \begin{align*}
      Z^{i,j}_{kk}(\psi)&=1,\qquad k=1,\dots,d,\qquad k\neq i,j\\
      Z^{i,j}_{ii}(\psi)&=\cos (\psi), \qquad \;\; Z^{i,j}_{ij}=\sin (\psi), \\
      Z^{i,j}_{ji}(\psi)&=-\sin (\psi), \qquad Z^{i,j}_{jj}=\cos (\psi).
    \end{align*}
    Then, any rotation matrix $\rot$ can be represented as 
    \[
    \rot =F_1 F_2 \dots F_{d-1}
    \]
    where
    \[
    F_i=Z^{d-i,d-i+1}(\psi_{i-1,i})Z^{d-i+1,d-i+2}(\psi_{i-2,i})\dots Z^{d-1,d}(\psi_{0,i}).
    \]
    The Haar distribution on $\sod$ is obtained if the generalized Euler angles
    $\psi_{j,i}$ are independent, $\psi_{0,i}$ are uniformly distributed on $[0,2\pi)$ for $i=1,\dots ,d-1$ and 
    $\psi_{r,s}$ for $r=1,\dots,s-1$ are absolutely continuous with density $\sin(\psi )^{r}\J_{[0,\pi)}(\psi)$.
    It is then easy to see that, as a consequence of the above representation, 
    one obtains the result.  
\end{proof}

\subsection{Definition of the probabilistic representation}
Given the cross section $\gg$ on $(-1,1)$, 
define the projected density $\Pi\gg$ according to \eqref{eq:project}.
Since $\gg$ is normalized as stated in \eqref{eq:ggnormal}, $\Pi\gg$ is a probability density.
Let $\psi$ be a random angle in $(0,\pi)$ such that $\cos\psi$ has $\Pi\gg$ as density,
which is equivalent to saying that $\psi$ itself is distributed with law
  \begin{align}
    \label{eq:dowereallyneedalltheseformulas}
    \gg(\cos\eta)\sin^{d-2}\eta\dd\eta.
  \end{align}
Further, let $U_1$, $U_2$ be random rotations in $\soda$ --- independent of each other and independent of $\psi$ --- with the properties from Theorem \ref{thm:hurwitz}.
In particular, $U_1$ is uniformly distributed on $\soda$.
From that, define two further random angles in $\psi^\pm$ in $(0,\pi)$ implicitly by
\begin{align}
  \label{angoli}
  \cos\psi^- = 2^{-1/2}\sqrt{1-\cos\psi},
  \quad
  \cos\psi^+ = 2^{-1/2}\frac{(1+\delta)+(1-\delta)\cos\psi}{\sqrt{(1+\delta^2)+(1-\delta^2)\cos\psi}}.
\end{align}
Now set
\begin{align}
  \label{eq:defrRm}
    r^- &:= 2^{-1/2}(1-\delta)\sqrt{1-\cos\psi}, & R^- := U_1Z_{\psi^-}U_2, \\
  \label{eq:defrRp}
    r^+ &:= 2^{-1/2}\sqrt{(1+\delta^2)+(1-\delta^2)\cos\psi}, & R^+ := U_1Z_{\psi^+}U_2.
\end{align}
\begin{proposition}\label{lemmaProbrepK}
  For every vector $\xi$ and every $\rot\in\sod$ such that $\xi=|\xi|\rot\eed$ one has
  \begin{equation}\label{eq.ypm}
    \big( Y_{\xi}^{-},  Y_{\xi}^{+} \big) \deq \big(|\xi| r^- \rot R^- \eed, |\xi| r^+ \rot R^+ \eed).
  \end{equation}
\end{proposition}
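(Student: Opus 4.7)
The plan is to express both sides of \eqref{eq.ypm} in a common frame adapted to $\rot$, and reduce the equality in distribution to an explicit trigonometric calculation.

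First, I would represent $\nml$ concretely in terms of $\psi$ and $U_1$. Since $\xi=|\xi|\rot\eed$ and the law of $\nml$ is $\gg(\sigma\cdot\xi/|\xi|)\uuu(\dn\sigma)$, by the rotational invariance of $\uuu$ one may write $\nml\deq\rot\tilde{\nml}$ where $\tilde{\nml}$ has law $\gg(\sigma\cdot\eed)\uuu(\dn\sigma)$. The polar-equatorial decomposition of $\uuu$ recorded in \eqref{eq:project} gives that $\tilde{\nml}\cdot\eed$ has density $\gg(\cos\eta)\sin^{d-2}\eta$ (matching $\cos\psi$ by \eqref{eq:dowereallyneedalltheseformulas}), while the conditional distribution on each latitude circle is uniform on the $(d-2)$-sphere. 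Since $U_1\ee_1$ is uniformly distributed on the equatorial $(d-2)$-sphere by \eqref{unif-on-sphere}, I obtain
\[
\tilde{\nml} \deq U_1 Z_\psi\eed = \cos\psi\,\eed - \sin\psi\,U_1\ee_1.
\]
Substituting this into \eqref{eq:xipm2} and factoring out $|\xi|\rot$:
\[
Y_\xi^- = |\xi|\rot\Big(\tfrac{(1-\delta)(1-\cos\psi)}{2}\,\eed + \tfrac{(1-\delta)\sin\psi}{2}\,U_1\ee_1\Big),
\]
\[
Y_\xi^+ = |\xi|\rot\Big(\tfrac{(1+\delta)+(1-\delta)\cos\psi}{2}\,\eed - \tfrac{(1-\delta)\sin\psi}{2}\,U_1\ee_1\Big).
\]

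On the right-hand side of \eqref{eq.ypm}, since $U_2\eed=\eed$, the factor $U_2$ disappears from $R^\pm\eed$, and so $|\xi|r^\pm\rot R^\pm\eed = |\xi|\rot\big(r^\pm\cos\psi^\pm\,\eed - r^\pm\sin\psi^\pm\,U_1\ee_1\big)$. Using the half-angle identities $1-\cos\psi=2\sin^2(\psi/2)$ and $1+\cos\psi=2\cos^2(\psi/2)$, together with the explicit definitions \eqref{angoli} of $\cos\psi^\pm$, a direct computation yields
\[
r^-\cos\psi^- = \tfrac{(1-\delta)(1-\cos\psi)}{2},\quad r^+\cos\psi^+ = \tfrac{(1+\delta)+(1-\delta)\cos\psi}{2},\quad r^\pm\sin\psi^\pm = \tfrac{(1-\delta)\sin\psi}{2}.
\]
Thus the scalar coefficients of $\eed$ and of $U_1\ee_1$ on both sides agree.

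I expect the main obstacle to be reconciling the \emph{joint} signs of the $U_1\ee_1$-components across the two entries of the pair. Since $Y_\xi^-$ and $Y_\xi^+$ are built from a single realization of $\nml$ and lie in the plane $\mathrm{span}(\xi,\nml)$ on opposite sides of $\xi$ (as forced by the conservation identity $Y_\xi^-+Y_\xi^+=\xi$), while $R^-$ and $R^+$ share the same $U_1$ on the right-hand side, the joint-distribution equality requires a consistent orientation convention between $Z_{\psi^-}$ and $Z_{\psi^+}$ in the $\ee_1$-$\eed$ plane. Any residual sign ambiguity is absorbed by the symmetry $U_1\ee_1\deq -U_1\ee_1$, which holds for $d\ge 3$ since $\soda=\operatorname{SO}(d-1)$ contains a rotation sending $\ee_1$ to $-\ee_1$; replacing $U_1$ by $U_1 G$ for such a $G\in\soda$ preserves the Haar law of $U_1$ and flips the sign in both $R^-\eed$ and $R^+\eed$ simultaneously. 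The auxiliary factor $U_2$ is inert for this proposition but is retained because it will be needed downstream to ensure hypothesis (H3) on the induced laws $\pb^\pm$.
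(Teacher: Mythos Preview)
Your approach mirrors the paper's: reduce to the frame adapted to $\rot$ (the paper does this by a change of variables and then invokes Lemma~\ref{lem:nml} to obtain $\nml\deq U_1Z_\psi U_2\eed$), and then compute. Your trigonometric identities for $r^\pm\cos\psi^\pm$ and $r^\pm\sin\psi^\pm$ are correct and more explicit than the paper's appeal to ``elementary geometric considerations.''

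The gap is in your handling of the sign issue, which is real but not resolved by your argument. With the paper's convention $Z_\theta\eed=\cos\theta\,\eed-\sin\theta\,\ee_1$, the $U_1\ee_1$-components work out to $(+\tfrac{(1-\delta)\sin\psi}{2},-\tfrac{(1-\delta)\sin\psi}{2})$ for $(Y_{\eed}^-,Y_{\eed}^+)$ but $(-\tfrac{(1-\delta)\sin\psi}{2},-\tfrac{(1-\delta)\sin\psi}{2})$ for $(r^-R^-\eed,r^+R^+\eed)$. The symmetry $U_1\ee_1\deq-U_1\ee_1$ flips \emph{both} entries at once, turning $(-,-)$ into $(+,+)$, still not $(+,-)$. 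A cleaner way to see the mismatch: $Y_{\eed}^-+Y_{\eed}^+=\eed$ deterministically, whereas $r^-R^-\eed+r^+R^+\eed=\eed-(1-\delta)\sin\psi\,U_1\ee_1\neq\eed$ a.s., so equality in law of the pairs is impossible with these exact definitions. This is in fact a slip in the paper's own conventions, glossed over there; the honest repair is to take one rotation with the opposite orientation, e.g.\ $R^-:=U_1Z_{-\psi^-}U_2$, which makes $\tfrac{1-\delta}{2}(\eins_d-Z_\psi)\eed=r^-Z_{-\psi^-}\eed$ hold pointwise and restores the constraint $r^-R^-\eed+r^+R^+\eed=\eed$. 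This change is harmless downstream: (H1) is unaffected, and the absolute-continuity argument for (H3) goes through verbatim since $-\psi^-$ still has a density.
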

The essential ingredient of the proof is the following.
\begin{lemma}
  \label{lem:nml}
  For $\xi=\eed$, 
  the random unit vector $\nml$ in \eqref{eq:xipm2} admits the representation $\nml\deq U_1Z_\psi U_2\eed$.
\end{lemma}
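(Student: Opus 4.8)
Lemma \ref{lem:nml} asserts that, for $\xi = \eed$, the random unit vector $\nml$ — which by \eqref{eq:ximp3} is distributed according to $\gg(\sigma\cdot\eed)\uuu(\dn\sigma)$ — has the same law as $U_1 Z_\psi U_2 \eed$, where $U_1, U_2, \psi$ are the independent objects from Theorem \ref{thm:hurwitz}.

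The plan is to compare the laws of the two random vectors directly. First I would observe that, since $U_2 \in \soda$ almost surely, we have $U_2 \eed = \eed$, so $U_1 Z_\psi U_2 \eed = U_1 Z_\psi \eed$. By the explicit form of $Z_\psi$ given before Theorem \ref{thm:hurwitz}, $Z_\psi \eed = -\sin\psi\, \ee_1 + \cos\psi\, \eed$; this is a unit vector whose inner product with $\eed$ equals $\cos\psi$, which by construction \eqref{eq:dowereallyneedalltheseformulas} has distribution $\gg(\cos\eta)\sin^{d-2}\eta\,\dd\eta$ — equivalently, $\cos\psi$ has density $\Pi\gg$ on $[-1,1]$ in the sense of \eqref{eq:project}. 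Thus the law of $Z_\psi\eed$ is the $\soda$-invariant measure on $\sphere$ whose projected measure is $\Pi\gg$, concentrated (trivially) on the great circle in the $\ee_1$–$\eed$ plane.

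The key step is then to apply $U_1$, which is uniformly distributed on $\soda$ and independent of $\psi$. Averaging over $U_1$ takes any fixed point $y \in \sphere$ to the uniform distribution on the $(d-2)$-sphere $\{z \in \sphere : \eed\cdot z = \eed\cdot y\}$, since $\soda$ acts transitively on that sphere with the Haar-uniform rotation inducing the uniform measure (this is exactly the transitivity remark made in the subsection on rotation groups, together with \eqref{unif-on-sphere}). Consequently the law of $U_1 Z_\psi \eed$ is precisely the $\soda$-invariant measure on $\sphere$ with projected measure $\Pi\gg$. By the uniqueness statement for $\soda$-invariant measures in terms of their projected measures (stated in the same subsection), and by the identity \eqref{eq:project} which shows that the measure $\gg(\sigma\cdot\eed)\uuu(\dn\sigma)$ is $\soda$-invariant with projected density $\Pi\gg$, the two laws coincide. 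Hence $\nml \deq U_1 Z_\psi U_2 \eed$.

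The main obstacle — really the only non-bookkeeping point — is making precise the ``averaging'' claim: that for a random point on $\sphere$ at fixed height $z$, pre-composition with a Haar-uniform element of $\soda$ yields the uniform distribution on the latitude sphere $\{z=\text{const}\}$. This follows from \eqref{unif-on-sphere} applied to the $(d-1)$-dimensional rotation group acting on the orthogonal complement of $\eed$, but one must be slightly careful to track how $U_1$ acts on the $\ee_1$-component of $Z_\psi\eed$ while fixing the $\eed$-component; once that is in place, uniqueness of $\soda$-invariant measures closes the argument. I would also note the degenerate possibility that $\cos\psi = \pm 1$ with positive probability, i.e.\ $\nml = \mp\eed$ is an atom; the argument is unaffected since the latitude ``sphere'' then degenerates to a point and the uniform measure on it is the point mass, consistent with $U_1\eed = \eed$.
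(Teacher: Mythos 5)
Your proof is correct and follows essentially the same route as the paper's: both establish that the law of $U_1 Z_\psi U_2\eed$ is $\soda$-invariant with projected measure $\Pi\gg$ (via $\eed\cdot U_1 Z_\psi U_2\eed=\cos\psi$) and then invoke the uniqueness of $\soda$-invariant measures with a prescribed projection. The only differences are cosmetic: you make the invariance explicit by averaging over $U_1$ conditionally on $\psi$, where the paper simply uses $GU_1\deq U_1$ for $G\in\soda$; and your intermediate sentence calling the law of $Z_\psi\eed$ itself ``the $\soda$-invariant measure'' is a slip (that law is concentrated on a great circle; invariance only holds after applying $U_1$), but the argument that follows uses it correctly.
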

\begin{proof}
   We need to show that the law $\lambda$ of $U_1Z_\psi U_2\eed$ is the same as the law $\lambda'$ of $\nml$.
  Both $\lambda$ and $\lambda'$ are invariant under $\soda$:
  for $\lambda'$, this is clear by definition in \eqref{eq:ximp3}.
  For $\lambda$, this follows since $U_1$, $\psi$ and $U_2$ are independent, 
  and $GU_1\deq U_1$ for every $G\in\soda$.
  By our considerations on $\soda$-invariant measures above, 
  it therefore suffices to show that the projected measures are equal, $\Pi\lambda=\Pi\lambda'$.
  
  For $\lambda'$, we obtain from the definition of $\nml$ and formula \eqref{eq:project} that $\nml\cdot\eed$ has law $\Pi\gg$.
  Concerning $\lambda$, recall that $U_1$ and $U_2$ take values in $\soda$ a.s.,
  which implies that 
  \begin{align*}
    \eed\cdot U_1Z_\psi U_2\eed = (U_1^T\eed)\cdot(Z_\psi U_2\eed) = \eed\cdot(Z_\psi\eed) = \cos\psi,
  \end{align*}
  using the definition of $Z_\psi$.
  The claim now follows since $\cos\psi$ has law $\Pi\gg$ by definition. 
 \end{proof}
\begin{proof}[Proof of Proposition \ref{lemmaProbrepK}]
Let $\xi=|\xi|\rota\eed$ be given. For any bounded continuos function $f$
\[
 \begin{split}
 \E[f(Y_{\xi}^{-})]
&=\int_{\sphere} f\Big(\frac{1-\del}2 |\xi| \CO ( \eed- \CO^T \s)\Big) \gg(\CO ^T \s \cdot  \eed) \uuu(d\s)\\
&=\int_{\sphere} f\Big(|\xi| \CO \frac{1-\del}2   ( \eed- \s)\Big) \gg(\s \cdot \eed) \uuu(d\s)
=  \E[f(|\xi| \CO Y_{\eed}^{-})], \\
\end{split}
\]
where we have used \eqref{eq:xipm2}--\eqref{eq:ximp3} and a change of variables in the integral.
Hence $ Y_\xi^-  \deq |\xi|\rot Y_{\eed}^-$.
  Since $Y_\xi^-+Y_\xi^+=\xi$ and $Y_{\eed}^++Y_{\eed}^-=\eed$,
  it follows further that
  \begin{align*}
    (Y_{\xi}^{-},Y_{\xi}^{+}) \deq (|\xi| \rot Y_{\eed}^{-},|\xi| \rot Y_{\eed}^{+}).
  \end{align*}
  It is thus sufficient to prove the claim for $\xi=\eed$ and $\rot=\eins_d$.
  By Lemma \ref{lem:nml}, we have
  \begin{align*}
    (Y_{\eed}^{-},Y_{\eed}^{+})
    &\deq \bigg(\frac{1-\delta}2(\eed-U_1Z_\psi U_2\eed),\frac{1+\delta}2\eed+\frac{1-\delta}2U_1Z_\psi U_2\eed\bigg) \\
    &= \bigg( U_1\bigg[\frac{1-\delta}2(\eins_d-Z_\psi)\bigg]U_2\eed,U_1\bigg[\frac{1+\delta}2\eins_d+\frac{1-\delta}2Z_\psi\bigg]U_2\eed\bigg),
  \end{align*}
  To finish the proof, observe that 
  we have
  \begin{align*}
    U_1\bigg[\frac{1-\delta}2(\eins_d-Z_\psi)\bigg]U_2\eed = r^-R^-\eed,
    \quad
    U_1\bigg[\frac{1+\delta}2\eins_d+\frac{1-\delta}2Z_\psi\bigg]U_2\eed = r^+R^+\eed,
  \end{align*}
  which easily follows from our definitions of $r^\pm$ and $R^\pm$ by elementary geometric considerations.
\end{proof}

\subsection{Verification of (H1)--(H3)}
It remains to verify that the random quantities defined in \eqref{eq:defrRm}-\eqref{eq:defrRp} satisfy the hypotheses (H1)--(H3).
Condition (H1) is a direct consequence of Proposition \ref{lemmaProbrepK},
since with $\xi:=\rota_1\eed=\rota_2\eed$, one has that
\begin{align*}
  \big(\rota_1r^-R^-\eed,\rota_1r^+R^+\eed) \deq (Y^-_\xi,Y^+_\xi) 
  \deq\big(\rota_2r^-R^-\eed,\rota_2r^+R^+\eed).
\end{align*}
The validity of (H2) is a consequence of the following.
\begin{lemma}\label{lemmainelastic2}
  There is a unique $\alpha \in (0,2)$ such that $\E[(r^+)^{\alpha}+(r^-)^{\alpha}]=1$,
  and $\E[(r^+)^{\alpha\gamma}+(r^-)^{\alpha\gamma}]<1$ for every $\gamma>1$.
\end{lemma}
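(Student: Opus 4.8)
The plan is to analyze the function $\CS(s) = \E[(r^+)^s + (r^-)^s] - 1$ defined in \eqref{eq:CS} and show it has exactly one zero in $(0,2)$, with $\CS$ strictly decreasing there. First I would record the explicit laws: from \eqref{eq:defrRm}--\eqref{eq:defrRp}, writing $c := \cos\psi$ where $c$ has density $\Pi\gg$ on $(-1,1)$, we have $(r^-)^s = \big(\tfrac{(1-\delta)^2(1-c)}{2}\big)^{s/2}$ and $(r^+)^s = \big(\tfrac{(1+\delta^2)+(1-\delta^2)c}{2}\big)^{s/2}$. Hence
\begin{align*}
  \CS(s) = \E\Big[\Big(\tfrac{(1+\delta^2)+(1-\delta^2)c}{2}\Big)^{s/2} + \Big(\tfrac{(1-\delta)^2(1-c)}{2}\Big)^{s/2}\Big] - 1,
\end{align*}
where the expectation is over $c \sim \Pi\gg$. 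I would first check the two values $s=2$ and $s\to 0^+$. At $s=2$, using $\E[c]$ and linearity, $\E[(r^+)^2 + (r^-)^2] = \tfrac{(1+\delta^2)+(1-\delta^2)\E[c]}{2} + \tfrac{(1-\delta)^2(1-\E[c])}{2}$; expanding this gives $1 - \delta(1-\delta)(1-\E[c])$ after simplification, which is strictly less than $1$ since $\delta\in(0,1/2)$ and $c<1$ a.s.\ (so $\E[c]<1$). Thus $\CS(2) < 0$. As $s\to 0^+$, each power tends to $1$ pointwise and is bounded by $1$ when the base is $\le 1$ and by a fixed constant otherwise, so by dominated convergence $\CS(0^+) = 1 > 0$.

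Next I would establish strict convexity of $\CS$ on $(0,\infty)$, or at least strict monotonicity on the relevant interval. Each map $s \mapsto a^{s/2} = e^{(s/2)\log a}$ is convex in $s$ for every fixed $a > 0$, and strictly convex whenever $a \ne 1$. Since at least one of the two bases is genuinely random (not a.s.\ equal to $1$) — indeed $\tfrac{(1-\delta)^2(1-c)}{2}$ is $<1$ with positive probability and can only equal $1$ for a single value of $c$ — the expectation $\CS(s)+1$ is strictly convex on $(0,\infty)$. A strictly convex function with $\CS(0^+) = 1 > 0$ and $\CS(2) < 0$ has exactly one zero $\alpha$ in $(0,2)$, and $\CS$ is strictly negative on $(\alpha, \infty)$ up to wherever it is finite — but since for $s > \alpha$, by strict convexity $\CS(s) < \CS(\alpha) = 0$ would need care near where $\CS$ could blow up. Actually the cleanest route: $\CS$ is strictly convex and continuous on $[0,\infty)$ (finite everywhere since the bases are bounded, so no $+\infty$ issue here), $\CS(0)=1$, $\CS$ has a zero at $\alpha\in(0,2)$; strict convexity then forces $\CS'(\alpha) < 0$ (else $\CS \ge 0$ everywhere by convexity, contradicting $\CS(2)<0$), hence $\CS(s) < 0$ for all $s \in (\alpha, \infty)$ — wait, strict convexity plus $\CS'(\alpha)<0$ only gives $\CS<0$ on a right-neighborhood; but combined with $\CS$ being convex, once it goes negative at $\alpha$ with negative derivative it could come back up. The correct conclusion from strict convexity and exactly the data $\CS(0)=1>0$, $\CS(2)<0$: $\CS$ has exactly one root in $(0,2)$, and more importantly for $\gamma>1$ with $\alpha\gamma$ possibly exceeding $2$ I need $\CS(\alpha\gamma)<0$ for all $\gamma > 1$, i.e.\ $\CS(s) < 0$ for all $s > \alpha$.

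For that last point I would argue: since $\CS$ is strictly convex on $[0,\infty)$, is finite everywhere, equals $1$ at $0$ and is negative at $2 > \alpha$, the set $\{s : \CS(s) \ge 0\}$ is the complement of an open interval (convex), hence an interval $[0, \alpha_1] \cup [\alpha_2, \infty)$ or $[0,\alpha]$; to rule out a second root $\alpha_2$, note $\CS(s)+1 = \E[(r^+)^s] + \E[(r^-)^s]$, and since both bases are $\le \max(\tfrac{1+\delta^2}{2}\cdot\text{(something)}, \dots)$ — more simply, since $(1-\delta)^2(1-c)/2 \le (1-\delta)^2 < 1$ is bounded away from $1$ and $((1+\delta^2)+(1-\delta^2)c)/2 \le 1$ always (as $c \le 1$), both powers are nonincreasing in $s$ pointwise for $s \ge 0$, hence $\CS$ is itself nonincreasing on $[0,\infty)$; combined with strict convexity and $\CS$ not eventually constant (the $(r^-)^s$ term $\to 0$), $\CS$ is eventually strictly decreasing, so it can cross zero only once. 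Therefore $\CS(\alpha)=0$, $\CS$ strictly positive on $[0,\alpha)$, strictly negative on $(\alpha,\infty)$, giving $\CS(\alpha\gamma)<0$ for every $\gamma>1$. The main obstacle I anticipate is pinning down that both bases are $\le 1$ so that the powers are pointwise nonincreasing in $s$ — the bound $((1+\delta^2)+(1-\delta^2)c)/2 \le 1 \iff (1-\delta^2)(c-1)\le 0$, true since $c\le 1$; and the lower endpoint concerns (the base could be $0$ when $c=1$, but $\Pi\gg$-a.s.\ $c<1$ unless $\gg$ charges the endpoint, which it does not as an $L^1$ density) require only routine verification. Everything else is the standard convexity argument for the Cramér/Biggins-type equation $\CS(\alpha)=0$.
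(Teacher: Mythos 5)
Your proposal is correct and follows essentially the same route as the paper: both rest on the convexity of $\CS$, the values $\CS(0)=1$ and $\CS(2)=-\delta(1-\delta)\,\E[1-\cos\psi]<0$, and the fact that $0<r^\pm<1$ almost surely. The only cosmetic difference is that the paper uses $0<r^\pm<1$ a.s.\ to get $\lim_{s\to+\infty}\CS(s)=-1$ and then invokes convexity, whereas you use it to get pointwise monotonicity of $s\mapsto(r^\pm)^s$ and hence that $\CS$ is nonincreasing; both observations serve the same purpose of excluding a second zero, so the arguments are equivalent.
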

\begin{proof}
  Recall the convex function $\CS$ defined in \eqref{eq:CS}.
  We have
  \[
  \CS(s) = \E\Big[\Big((1-\del)^2\frac{1-\cos\psi}2\Big)^{s/2}\Big] 
  + \E\Big[\Big(\frac{1+\del^2}2+\frac{1-\del^2}2\cos\psi\Big)^{s/2}\Big]-1.
  \]
  On one hand, $\CS(0)=1$, because $\psi$ is an absolutely continuous random variable.
  On the other hand, since $0<r^\pm<1$ almost surely,
  it follows that $\lim_{s \to +\infty} \CS(s)=-1$.
  Finally, at $s=2$, we have
  \[
  \CS(2) = \E\big[(r^+)^2+(r^-)^2\big]-1
  = \delta(\delta-1) \E [1-\cos\psi] <0.
  \]
  By convexity of $\CS$, this proves the claim.
\end{proof}
For the verification of assumption (H3), we employ the Hurwitz' representation of the Haar measure
given in Theorem \ref{thm:hurwitz}.
\begin{lemma}
  \label{haar-a.c.} 
  The probability measures $\pb^\pm$ defined in \eqref{eq:BB} are absolutely continuous with respect to the Haar measure.
\end{lemma}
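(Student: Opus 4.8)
The plan is to reduce the claim to a one-dimensional (angular) comparison and then invoke Hurwitz's representation of the Haar measure, Theorem~\ref{thm:hurwitz}. The starting point is the observation that, by \eqref{eq:defrRm}--\eqref{eq:defrRp}, $R^\pm=U_1Z_{\psi^\pm}U_2$ where both the weight $r^\pm$ and the angle $\psi^\pm$ are deterministic functions of the single random angle $\psi$, whereas $U_1$ and $U_2$ are independent of $\psi$. By \eqref{eq:BB}, $\pb^\pm$ is the law of $R^\pm$ reweighted by $(r^\pm)^\alpha/\E[(r^\pm)^\alpha]$; since this factor depends on $\psi$ only and $(\psi,U_1,U_2)$ has a product law, the reweighting affects only the law of the angle. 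Thus, writing $\nu$ for the law of $\psi$ --- the measure $\gg(\cos\eta)\sin^{d-2}\eta\,\dn\eta$ on $(0,\pi)$ from \eqref{eq:dowereallyneedalltheseformulas} --- and letting $\tilde\mu^\pm$ be the push-forward under $\eta\mapsto\psi^\pm(\eta)$ of the reweighted measure $\big((r^\pm(\eta))^\alpha/\E[(r^\pm)^\alpha]\big)\nu(\dn\eta)$, one gets that $\pb^\pm$ is exactly the law of $U_1Z_{\Theta^\pm}U_2$ for a random angle $\Theta^\pm\sim\tilde\mu^\pm$ independent of $(U_1,U_2)$. By contrast, Theorem~\ref{thm:hurwitz} says precisely that the law of $U_1Z_{\psi_*}U_2$, with $\psi_*$ independent of $(U_1,U_2)$, equals $\haar$.

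Next I would set up a kernel-monotonicity argument. Let $K(\theta,\cdot):=\law(U_1Z_\theta U_2)$ be the (fixed) Markov kernel from $[0,\pi]$ to $\sod$ determined by the pair $(U_1,U_2)$, so that $\pb^\pm=\tilde\mu^\pm K$ and $\haar=\law(\psi_*)\,K$. A Markov kernel preserves absolute continuity: if $\mu_1\ll\mu_2$ are probability measures on $[0,\pi]$ then $\mu_1K\ll\mu_2K$, since $\mu_2K(A)=\int K(\theta,A)\,\mu_2(\dn\theta)=0$ forces the non-negative map $\theta\mapsto K(\theta,A)$ to vanish $\mu_2$-almost everywhere, hence $\mu_1$-almost everywhere, so that $\mu_1K(A)=0$ too. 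Consequently it is enough to show $\tilde\mu^\pm\ll\law(\psi_*)$. By Theorem~\ref{thm:hurwitz} the law of $\psi_*$ has a density that is strictly positive on $(0,\pi)$, hence is equivalent to Lebesgue measure there, so it suffices to prove $\tilde\mu^\pm\ll\dn\eta$ on $(0,\pi)$.

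That last point is routine and I would dispatch it directly. First, $\nu$ is absolutely continuous with an $L^1(0,\pi)$ density, because $\int_0^\pi\gg(\cos\eta)\sin^{d-2}\eta\,\dn\eta=\int_{-1}^1\gg(z)\sqrt{(1-z^2)^{d-3}}\,\dn z<\infty$ by the normalisation \eqref{eq:ggnormal}; and the reweighting factor is bounded (as $0<r^\pm<1$ a.s. and $\E[(r^\pm)^\alpha]>0$), so the reweighted measure is still $\ll\dn\eta$. Next, by \eqref{angoli} the maps $\eta\mapsto\psi^\pm(\eta)$ are real-analytic and non-constant on $(0,\pi)$ --- a direct check from \eqref{angoli} shows that $\cos\psi^\pm\in(0,1)$ throughout $(0,\pi)$, so no branch point of the arccosine intervenes --- hence each is a $C^1$ local diffeomorphism away from a discrete subset of $(0,\pi)$, and therefore the preimage under $\psi^\pm$ of a Lebesgue-null set is Lebesgue-null. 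Since the reweighted measure is $\ll\dn\eta$, its push-forward $\tilde\mu^\pm$ is then $\ll\dn\eta$ on $(0,\pi)$. Putting the three steps together yields $\pb^\pm=\tilde\mu^\pm K\ll\law(\psi_*)\,K=\haar$.

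The hard part here is conceptual rather than technical: everything hinges on the identity $\haar=\law(\psi_*)\,K$ supplied by Theorem~\ref{thm:hurwitz}. One cannot, in general, pass from ``the angular coordinate is absolutely continuous'' to ``the law on $\sod$ is absolutely continuous'' --- for trivial $U_1,U_2$ the matrix $Z_\theta$ only sweeps out a one-parameter family, whose law is singular with respect to $\haar$ regardless of the distribution of $\theta$. It is precisely the content of Hurwitz's representation that sandwiching $Z_{\psi_*}$ between $U_1$ uniform on $\soda$ and $U_2\in\soda$, with $\psi_*$ of full support, recovers the entire Haar measure; this is what makes the kernel-monotonicity step legitimate, and the rest is bookkeeping.
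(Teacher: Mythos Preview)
Your proof is correct and follows essentially the same route as the paper's: both identify $\pb^\pm$ as the law of $U_1Z_{\Theta^\pm}U_2$ for a reweighted, transformed angle $\Theta^\pm$ independent of $(U_1,U_2)$, reduce to showing $\law(\Theta^\pm)\ll\law(\psi_*)$, and then transfer this to $\sod$ via Hurwitz's representation. The only cosmetic difference is that you phrase the final transfer as monotonicity of the Markov kernel $K(\theta,\cdot)=\law(U_1Z_\theta U_2)$, whereas the paper phrases it as absolute continuity of the product law on $\soda\times(0,\pi)\times\soda$ being inherited by its push-forward under $(G_1,\theta,G_2)\mapsto G_1Z_\theta G_2$; these are equivalent formulations of the same step.
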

\begin{proof}
  Recall Theorem \ref{thm:hurwitz}, and let $U_1,U_2$ and $\psi_*,\psi$ be chosen as indicated above.
  Further, observe that, since $U_1,U_2,\psi$ are independent, 
  and since the law of $\psi$ is given in \eqref{eq:dowereallyneedalltheseformulas},
  one can write, for every $f\in C_0^b(\sod)$,
  \begin{equation*}
    \begin{split}
      \int_\sod f(R) \pb^\pm(\dn R) 
      &  = \frac{\E\big[(r^\pm)^\alpha f\big(U_1Z_{\psi^\pm }U_2\big)\big]}{\E[(r^\pm)^\alpha]} \\
      & =\frac{\E\big[ \int_{(0,\pi)} (r^\pm(\eta))^\alpha f\big( U_1Z_{\psi^\pm(\eta)} U_2\big) \gg(\cos\eta)\sin^{d-2}\eta\dd\eta \big] }{\int_{(0,\pi)} (r^\pm(\eta))^\alpha \gg(\cos\eta)\sin^{d-2}\eta\dd\eta } 
    \end{split}
  \end{equation*}
  where $\psi^\pm(\eta)$ and $r^\pm(\eta)$ are defined as functions of $\eta$ via \eqref{angoli}--\eqref{eq:defrRp} using $\eta$ in place $\psi$. 
  Hence
  \begin{equation*}
    \int_\sod f(R) \pb^\pm(\dn R) = \E\big[ f\big( U_1Z_{\tilde\psi^\pm}U_2\big)\big],
  \end{equation*}
  where $\tilde\psi^\pm$ are defined via \eqref{angoli} from a random angle $\tilde\psi$ 
  --- being independent of $U_1$ and $U_2$ ---
  in $(0,\pi)$ with law
  \[
  \frac{ \big(r^\pm(\eta)\big)^\alpha \gg(\cos\eta)\sin^{d-2}\eta\dd\eta}{\int_{(0,\pi)} (r^\pm(u))^\alpha \gg(\cos u)\sin^{d-2}u\dd u}.
  \] 
  It thus suffices to show that the laws of the random rotations $U_1Z_{\tilde\psi^\pm}U_2$ are absolutely continuous with respect to the law of $U_1Z_{\psi_*}U_2$.
  Since $\tilde\psi$ has a density on $(0,\pi)$, also $\cos\tilde\psi^\pm$ given via \eqref{angoli} have densities on $(-1,1)$,
  and thus $\tilde\psi^\pm$ themselves have densities on $(0,\pi)$, all with respect to the Lebesgue measure on the respective intervals.
  Since further the density of $\psi_*$ is positive on $(0,\pi)$, it follows that the laws of $\tilde\psi^\pm$ are absolutely continuous with respect to that of $\psi_*$.
  Then also the law of the triple $(U_1,\tilde\psi_\pm,U_2)$  is absolutely continuous with respect to the law of $(U_1,\psi_*,U_2)$ on $\soda\times(0,\pi)\times\soda$.
  And the respective images in $\sod$ under the continuous map $(G_1,\theta,G_2)\mapsto G_1Z_\theta G_2$ inherit the absolute continuity.
\end{proof}

\section{Study of an instrumental process on $\czero$}\label{S:walk}
This section is devoted to the proof of convergence 
of the following auxiliary random processes $(\Psi_n)_{n\ge0}$ taking values in $\czero$.
Given a continuous function $\Psi_0\in\czero$,
define for all $n\ge1$:
\begin{equation}
  \label{defPsin}
  \Psi_n(\rota) :=\sum_{j=1}^{n+1}  \ww_{j,n}^\alpha\Psi_0(\rota O_{j,n} ).
\end{equation}
Throughout this section, we continue to assume hypothesis (H1)--(H3).

The ultimate goal is to show convergence of $\Psi_n$ to a (random) constant function 
in the sense made precise in Proposition \ref{convPsin} below.
In order to characterize the limit, we start with an auxiliary result.
\begin{lemma}
  \label{Lemmaweights}
  The random quantities 
  \begin{align}
    \label{eq.them}
    M_n^{(\alpha)} := \sum_{j=1}^{n+1} \beta_{j,n}^\alpha  \quad \text{and} \quad
    \beta_{(n)}:=\max_{j=1,\dots,n+1} \beta_{j,n},
  \end{align}
  have the following properties:
  \begin{itemize}
  \item[(i)] $\E[ M_n^{(\alpha)}]=1$ for every $n$.
  \item[(ii)] $M_n^{(\alpha)}$ converges almost surely to a random variable $M_\infty^{(\alpha)}$ as $n\to +\infty$.
    The characteristic function of $M_\infty^{(\alpha)}$ satisfies equation \eqref{fixedpointDL} and $\E[(M_\infty^{(\alpha)})]=1$.
    Moreover, $\E[(M_\infty^{(\alpha)})^p]<+\infty$ if and only if $\E[(r^+)^{\a p}+(r^-)^{\a p}]<1$.
  \item[(iii)] $\beta_{(n)}$ converges to zero in probability.
  \end{itemize}
\end{lemma}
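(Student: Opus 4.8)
\emph{The plan.} I would recognise $(M_n^{(\alpha)})_{n\ge0}$ as a nonnegative martingale whose almost sure limit is the fixed point $\mix$ of Proposition~\ref{PropDurretLigget}, and I would control $\beta_{(n)}$ through a companion process obtained by replacing the exponent $\alpha$ by $\alpha\gamma$. For an arbitrary exponent $s\ge0$ put $M_n^{(s)}:=\sum_{j=1}^{n+1}\beta_{j,n}^s$ and let $(\CF_n)_n$ be the filtration generated by $(\ell_k,r_k^\pm,R_k^\pm)_{k\le n}$. From the recursive definition of the array --- at step $n$ the weight $\beta_{\ell_n,n-1}$ is split into $\beta_{\ell_n,n-1}r_n^-$ and $\beta_{\ell_n,n-1}r_n^+$ --- one reads off $M_n^{(s)}=M_{n-1}^{(s)}+\beta_{\ell_n,n-1}^{s}\big[(r_n^-)^s+(r_n^+)^s-1\big]$, and since $\ell_n$ is uniform on $\{1,\dots,n\}$ and independent of $\CF_{n-1}$ and of $(r_n^\pm)$, taking conditional expectations gives $\E[M_n^{(s)}\mid\CF_{n-1}]=(1+\CS(s)/n)\,M_{n-1}^{(s)}$; iterating, $\E[M_n^{(s)}]=\prod_{k=1}^n(1+\CS(s)/k)$. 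For $s=\alpha$ we have $\CS(\alpha)=0$ by (H2), so $(M_n^{(\alpha)})_n$ is a nonnegative $(\CF_n)$-martingale with $M_0^{(\alpha)}=1$; this is (i).

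\emph{Convergence and mean one.} Almost sure convergence $M_n^{(\alpha)}\to M_\infty^{(\alpha)}$ is then immediate, with $\E[M_\infty^{(\alpha)}]\le1$ by Fatou. To upgrade this to $L^1$-convergence I would show that $(M_n^{(\alpha)})_n$ is bounded in $L^\gamma$. By convexity of $\CS$ together with $\CS(\alpha)=0$, the inequality $\CS(\alpha\gamma)<0$ from (H2) persists when $\gamma$ is lowered, so we may assume $\gamma\in(1,2]$. Writing $M_n^{(\alpha)}-1=\sum_{k=1}^n D_k$ with martingale differences $D_k=\beta_{\ell_k,k-1}^\alpha[(r_k^-)^\alpha+(r_k^+)^\alpha-1]$, a standard martingale moment inequality (Burkholder's inequality, using $\gamma\le2$) bounds $\E[|M_n^{(\alpha)}-1|^\gamma]$ by $C_\gamma\sum_k\E[|D_k|^\gamma]$. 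Here $\E[|D_k|^\gamma]=\E[\beta_{\ell_k,k-1}^{\alpha\gamma}]\,\E[|(r^-)^\alpha+(r^+)^\alpha-1|^\gamma]$, the last factor being finite by (H2), while $\E[\beta_{\ell_k,k-1}^{\alpha\gamma}]=\tfrac1k\E[M_{k-1}^{(\alpha\gamma)}]=\tfrac1k\prod_{i<k}(1+\CS(\alpha\gamma)/i)\le C\,k^{\CS(\alpha\gamma)-1}$, using $1+x\le e^x$ and $\sum_{i<k}1/i\ge\log k$. Since $\CS(\alpha\gamma)<0$ the series $\sum_k k^{\CS(\alpha\gamma)-1}$ converges, so $\sup_n\E[|M_n^{(\alpha)}-1|^\gamma]<\infty$. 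Hence $(M_n^{(\alpha)})_n$ is uniformly integrable, the convergence holds in $L^1$, so $\E[M_\infty^{(\alpha)}]=1$, and $\E[(M_\infty^{(\alpha)})^\gamma]<\infty$ by Fatou.

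\emph{Fixed-point equation and moments.} Splitting the $n$-th McKean tree at its root exactly as in the proof of Proposition~\ref{Prop.probrep00}, I would write $M_n^{(\alpha)}=(r_1^-)^\alpha A_n+(r_1^+)^\alpha B_n$, where the left-subtree size $J=J_n$ is uniform on $\{1,\dots,n\}$ and, given $(J,r_1^-,r_1^+)$, the variables $A_n,B_n$ are independent with $A_n\deq M_{J-1}^{(\alpha)}$, $B_n\deq M_{n-J}^{(\alpha)}$. Since $(J_n)$ evolves as a P\'olya urn started from one ball of each colour, $J_n\to\infty$ and $n-J_n\to\infty$ almost surely. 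Passing to the limit in the identity $\widehat{M_n^{(\alpha)}}(\xi)=\E[\,\widehat{M_{J_n-1}^{(\alpha)}}((r_1^-)^\alpha\xi)\,\widehat{M_{n-J_n}^{(\alpha)}}((r_1^+)^\alpha\xi)\,]$ by dominated convergence (all factors bounded by $1$) and pointwise convergence of the $\widehat{M_m^{(\alpha)}}$ yields $\widehat{M_\infty^{(\alpha)}}(\xi)=\E[\,\widehat{M_\infty^{(\alpha)}}((r^-)^\alpha\xi)\,\widehat{M_\infty^{(\alpha)}}((r^+)^\alpha\xi)\,]$ for all $\xi\in\setR$, i.e.\ \eqref{fixedpointDL}. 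Combined with $\E[M_\infty^{(\alpha)}]=1$ and the uniqueness part of Proposition~\ref{PropDurretLigget}, this identifies the law of $M_\infty^{(\alpha)}$ with $\mix$, and the stated equivalence for $\E[(M_\infty^{(\alpha)})^p]$ is then precisely the moment statement in that proposition.

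\emph{The maximal weight, and the main obstacle.} For (iii) I would use the companion process with $s=\alpha\gamma$: since $\beta_{(n)}^{\alpha\gamma}=\max_j\beta_{j,n}^{\alpha\gamma}\le M_n^{(\alpha\gamma)}$ and $\E[M_n^{(\alpha\gamma)}]=\prod_{k=1}^n(1+\CS(\alpha\gamma)/k)\to0$ (because $\CS(\alpha\gamma)\in(-1,0)$ and $\sum_k1/k=\infty$), we get $\E[\beta_{(n)}^{\alpha\gamma}]\to0$, hence $\beta_{(n)}\to0$ in probability. I expect the delicate point to be the second step: the passage from $\E[M_\infty^{(\alpha)}]\le1$ to $\E[M_\infty^{(\alpha)}]=1$ (equivalently, non-degeneracy of the limit), since it is exactly here that the full strength of (H2) --- the existence of some $\gamma>1$ with $\CS(\alpha\gamma)<0$ --- is needed, entering through the summability of $\sum_k k^{\CS(\alpha\gamma)-1}$; everything else is bookkeeping with the tree recursion and routine martingale arguments.
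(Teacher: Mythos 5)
Your proof is correct. Note first that the paper itself gives no argument here: it simply cites Proposition~2 and Lemma~3 of \cite{BaLaMa}, so your write-up is a genuinely self-contained proof of a statement the authors outsource. The route you take is essentially the standard one behind that reference: the one-step recursion $M_n^{(s)}=M_{n-1}^{(s)}+\beta_{\ell_n,n-1}^{s}[(r_n^-)^s+(r_n^+)^s-1]$ giving $\E[M_n^{(s)}\mid\CF_{n-1}]=(1+\CS(s)/n)M_{n-1}^{(s)}$, the martingale property at $s=\alpha$, the companion process at $s=\alpha\gamma$ both for the $L^\gamma$ bound (via $\E[\beta_{\ell_k,k-1}^{\alpha\gamma}]\le k^{\CS(\alpha\gamma)-1}$, summable since $\CS(\alpha\gamma)<0$) and for the maximal weight, and the root-splitting identity plus uniqueness in Proposition~\ref{PropDurretLigget} to identify the limit law with $\mix$. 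All the delicate points check out: the reduction to $\gamma\in(1,2]$ by convexity of $\CS$ is valid; the moment inequality you invoke (von Bahr--Esseen, or Burkholder plus subadditivity of $x\mapsto x^{\gamma/2}$) does hold for martingale differences with $\gamma\le2$; $J_n$ is indeed a P\'olya urn started from one ball of each colour, so $J_n\to\infty$ and $n-J_n\to\infty$ a.s.\ and the passage to the limit in the characteristic-function identity is legitimate by bounded convergence. Two cosmetic remarks: the last clause of (H2), $\pby\{r^->0\}+\pby\{r^+>0\}>1$, which you do not use, is in fact implied by $\CS(\alpha)=0$ and $\CS(\alpha\gamma)<0$ through convexity of $\CS$, so its absence from your argument is not a gap; and for the limit identity one could avoid the urn altogether by using only that $J_n$ is uniform on $\{1,\dots,n\}$, whence $\min(J_n,n-J_n)\to\infty$ in probability, which suffices since all factors are bounded by one.
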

\begin{proof}
Claims (i) and (ii) are contained in Proposition 2 of \cite{BaLaMa},
    while claim (iii) is Lemma 3 in \cite{BaLaMa}. 
\end{proof}
The main result of this section is:
\begin{proposition}\label{convPsin}
  $\Psi_n$ converges weakly to $m_0M_\infty^{(\alpha)}$ in $\czero$,
  where
  \begin{align}
    \label{eq:defm0}
    m_0:=\int_{\sod} \Psi_0(\rota) \haar(\dn\rota).
  \end{align}
  Hence,  for every $\rota\in\sod$, 
  the sums $\sum_{j=1}^{n+1} \ww_{j,n}^\alpha\Psi_0(\rota O_{j,n} )$ converge weakly to $m_0 M_\infty^{(\alpha)}$.
\end{proposition}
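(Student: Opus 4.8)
The plan is to establish weak convergence of $\Psi_n$ in $\czero$ by first identifying a natural candidate limit, then controlling the oscillation of $\Psi_n$ over $\sod$, and finally combining both ingredients. I would begin by recording the key structural fact: for a \emph{fixed} $\rota$, the random variable $\Psi_n(\rota)=\sum_{j=1}^{n+1}\ww_{j,n}^\alpha\Psi_0(\rota O_{j,n})$ is a weighted sum over the leaves of the McKean tree, and the conditional expectation of each summand, given the weights, involves the distribution of $\rota O_{j,n}$. The point is that the $n$-fold convolution of $\pb^{\pm}$ governs how $O_{j,n}$ is distributed, and by (H3) together with Proposition \ref{Bhatta2} the law of $O_{j,n}$ converges to the Haar measure $\haar$. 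Hence intuitively, for large $n$, $\Psi_0(\rota O_{j,n})$ ``averages out'' to $m_0=\int_\sod\Psi_0\,\dd\haar$ regardless of $\rota$, while the weights $\ww_{j,n}^\alpha$ contribute the factor $M_n^{(\alpha)}\to M_\infty^{(\alpha)}$ from Lemma \ref{Lemmaweights}. This heuristic is what makes $m_0M_\infty^{(\alpha)}$ the correct limit.

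To turn this into a proof I would proceed in three steps. \emph{Step 1 (finite-dimensional convergence).} Fix $\rota\in\sod$ and show $\Psi_n(\rota)\to m_0M_\infty^{(\alpha)}$ in distribution (in fact I would aim for convergence in a stronger sense, e.g.\ $L^1$ or in probability, exploiting the martingale-like structure of $M_n^{(\alpha)}$ and the a.s.\ convergence in Lemma \ref{Lemmaweights}(ii)). The decomposition to use is $\Psi_n(\rota) - m_0 M_n^{(\alpha)} = \sum_{j=1}^{n+1}\ww_{j,n}^\alpha\bigl(\Psi_0(\rota O_{j,n})-m_0\bigr)$; conditioning on the tree shape and the weights $(\ww_{j,n})$, and using that the conditional law of each $O_{j,n}$ is a convolution power that is close to $\haar$ when the corresponding branch is long, one bounds the conditional variance (or first absolute moment) of this sum by something controlled by $\beta_{(n)}$ and by how many branches are short. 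Lemma \ref{Lemmaweights}(iii) ($\beta_{(n)}\to0$ in probability) plus an estimate that only finitely many branches stay short with non-negligible weight is what kills this error term. \emph{Step 2 (tightness / equicontinuity in $\czero$).} To upgrade from finite-dimensional to weak convergence in $C^0(\sod)$ one needs a modulus-of-continuity estimate. Here I would use that $\Psi_0$ is uniformly continuous on the compact group $\sod$: for $\rota,\rota'$ close, $|\Psi_n(\rota)-\Psi_n(\rota')|\le \sum_j\ww_{j,n}^\alpha\,\omega_{\Psi_0}(d(\rota O_{j,n},\rota'O_{j,n})) = M_n^{(\alpha)}\,\omega_{\Psi_0}(d(\rota,\rota'))$ since the $O_{j,n}$ act by isometries, where $\omega_{\Psi_0}$ is the modulus of continuity of $\Psi_0$. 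Thus the oscillation of $\Psi_n$ is dominated pathwise by the oscillation of $\Psi_0$ times $M_n^{(\alpha)}$, which is tight by Lemma \ref{Lemmaweights}. This gives tightness of $(\Psi_n)$ in $\czero$ essentially for free. \emph{Step 3 (conclusion).} Tightness plus convergence of all finite-dimensional distributions (which, since the candidate limit is the \emph{constant} function $m_0M_\infty^{(\alpha)}$, reduces to one-dimensional convergence at a single $\rota$) identifies the weak limit uniquely as $m_0M_\infty^{(\alpha)}$, proving the Proposition. The final sentence of the statement is then immediate by evaluating at the given $\rota$.

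I expect \emph{Step 1} to be the main obstacle: making rigorous the claim that $\Psi_0(\rota O_{j,n})$ decorrelates to $m_0$ uniformly in $\rota$ requires carefully combining three independent sources of randomness — the tree shape $(\ell_k)$, the weights $(r^{\pm}_k)$, and the rotations $(R^{\pm}_k)$ — and quantifying ``how many leaves sit at the end of a short branch.'' One has to show that the total $\alpha$-weight carried by leaves whose branch length is below some threshold $L$ tends to zero as $L\to\infty$, uniformly enough that the convolution-to-Haar convergence from Proposition \ref{Bhatta2} can be applied to the remaining leaves. This is exactly the kind of estimate that the one-dimensional predecessors \cite{BaLa,BaLaMa,BaLaRe} develop, and I would follow that circle of ideas, with the extra (but, by Step 2's isometry remark, harmless) complication that the state space is now the non-abelian group $\sod$ rather than $\setR$.
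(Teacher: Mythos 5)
Your Steps 2 and 3 in outline match the paper (the tightness argument via the modulus of continuity and the isometry of right multiplication is literally Lemma \ref{tightnessPsin}), but Step 1 — which you correctly identify as the crux — has a genuine gap, and Step 3 has a second one. For Step 1: a conditional second-moment bound on $\sum_j \ww_{j,n}^\alpha\bigl(\Psi_0(\rota O_{j,n})-m_0\bigr)$ is simply not available under (H2). Hypothesis (H2) only guarantees $\E[(r^-)^{\alpha\gamma}+(r^+)^{\alpha\gamma}]<1$ for \emph{some} $\gamma>1$, possibly much smaller than $2$; by Lemma \ref{Lemmaweights}(ii), $M_\infty^{(\alpha)}$ (and hence $\Psi_n(\rota)$) may fail to have a finite second moment, so any variance computation collapses, and the ``first absolute moment'' fallback you mention destroys exactly the cancellation you need. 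There are further obstructions you gloss over: the weights $\ww_{j,n}$ and rotations $O_{j,n}$ are built from the \emph{same} random elements $(r^\pm_k,R^\pm_k)$ and are dependent (the tilted measures $\pb^\pm$ in \eqref{eq:BB} only linearize the \emph{first} moment), and distinct leaves share the rotations along their common ancestral path, so the summands are correlated in a way that a ``decorrelation to $m_0$'' heuristic does not control without substantial extra work. The paper circumvents all of this by working at the level of the laws $\nu_n$ on $\czero$, using the Wild recursion $\nu_n=\frac1n\sum_k T[\nu_{k-1},\nu_{n-k}]$ and proving that $T$ contracts a Fourier-type metric $D_{\gamma,a}=d_\gamma+a\|\E[\Psi'-\Psi'']\|_{L^2}$ whose order $\gamma\in(1,2)$ is chosen exactly so that (H2) supplies the contraction factor $\E[(r^-)^{\alpha\gamma}+(r^+)^{\alpha\gamma}]<1$ (Lemma \ref{lemmacontraction}); the Bhattacharya/Haar input (H3) enters only through the operator bound $\|\E[(r^\pm)^\alpha(R^\pm)^\#f]\|_{L^2}\le\kappa^\pm\|f\|_{L^2}$ on mean-zero $f$ (Proposition \ref{propD}), i.e.\ as a spectral-gap statement for the \emph{mean}, not as convergence of each leaf's law to Haar.

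The second gap is in your Step 3: tightness plus one-dimensional convergence $\Psi_n(\rota)\to m_0M_\infty^{(\alpha)}$ at a single (or even at every) fixed $\rota$ does \emph{not} identify the subsequential weak limit $Y$ as the \emph{constant} random function $m_0M_\infty^{(\alpha)}$; it only pins down the one-dimensional marginals of $Y$, and a non-constant process can have identical marginals. Your own Step 2 does not help here, since the modulus-of-continuity bound $M_n^{(\alpha)}\omega_{\Psi_0}(d(\rota,\rota'))$ does not vanish for $\rota,\rota'$ at fixed positive distance. You would need joint convergence of $(\Psi_n(\rota_1),\dots,\Psi_n(\rota_k))$ to a diagonal limit, which is precisely what the paper obtains by proving convergence of the full $L^2$-characteristic functionals $\hat\nu_n(g)\to\hat\nu_\infty(g)$ for all $g\in L^2(\sod,\haar)$ (Proposition \ref{Dnto0}), thereby determining all finite-dimensional distributions of the limit.
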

For the sake of simplicity the proof of Proposition \ref{convPsin} is split into several steps. 
Some of them use techniques developed in \cite{BaMa}.

\subsection{Basic properties of $\Psi_n$.}
Introduce the $L^p$-norms with respect to the Haar measure $\haar$ 
on measurable functions $f:\sod\to\setR$ as usual:
\begin{align*}
  \|f\|_{L^p} := \Bigg(\int_{\sod}|f(\rota)|^p\haar(\dn\rota)\Bigg)^{1/p} \quad \text{for all $p\ge1$},
  \quad
  \|f\|_{L^\infty} := \operatorname{ess\,sup}_{\rota\in\sod}|f(\rota)|.
\end{align*}
\begin{lemma}\label{lemma15}
  For every $n \geq 0$,
  \begin{equation}
    \label{eq:m0const}
    \E\bigg[ \int_{\sod} \Psi_n(\rota)\haar(\dn\rota) \bigg] =  m_0,
  \end{equation}
  where $m_0$ is given in \eqref{eq:defm0}, and
  \begin{equation}
    \label{boundL2normPsi}
    \E\big[ \|\Psi_n \|_{L^2} \big]\leq \|\Psi_0\|_{L^\infty}.
  \end{equation}
\end{lemma}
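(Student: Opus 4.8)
The plan is to exploit the recursive structure of the array $[\ww_{j,n},O_{j,n}]$ together with the independence of the innovations $(r^\pm_n,R^\pm_n,\ell_n)$ from the data constructed up to step $n-1$. For the first identity \eqref{eq:m0const}, I would argue by induction on $n$. For $n=0$ it is the definition \eqref{eq:defm0} of $m_0$. For the inductive step, I would fix $n\ge1$, condition on $\mathcal{F}_{n-1}$ (the $\sigma$-algebra generated by everything up to step $n-1$) as well as on $\ell_n$, and compute $\int_\sod\Psi_n(\rota)\,\haar(\dn\rota)$ using the recursive definition: the $n$th tree is obtained from the $(n-1)$st by replacing the leaf $\ell_n$, with weight $\ww_{\ell_n,n-1}$ and rotation $O_{\ell_n,n-1}$, by two leaves carrying $(\ww_{\ell_n,n-1}r^-_n,\,O_{\ell_n,n-1}R^-_n)$ and $(\ww_{\ell_n,n-1}r^+_n,\,O_{\ell_n,n-1}R^+_n)$. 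Hence
\[
\Psi_n(\rota)-\Psi_{n-1}(\rota)
= \ww_{\ell_n,n-1}^\alpha\Big[(r^-_n)^\alpha\Psi_0(\rota O_{\ell_n,n-1}R^-_n)+(r^+_n)^\alpha\Psi_0(\rota O_{\ell_n,n-1}R^+_n)-\Psi_0(\rota O_{\ell_n,n-1})\Big].
\]
Integrating over $\rota\in\sod$ and using right-invariance of $\haar$ (so that $\int\Psi_0(\rota G)\haar(\dn\rota)=m_0$ for any fixed $G\in\sod$, and in particular for $G=O_{\ell_n,n-1}R^\pm_n$ which is $\mathcal{F}_n$-measurable), the bracket integrates to $m_0\big((r^-_n)^\alpha+(r^+_n)^\alpha-1\big)$, whose conditional expectation given $\mathcal{F}_{n-1}$ vanishes by (H2) since $(r^-_n,r^+_n)$ is independent of $\mathcal{F}_{n-1}$ and $\E[(r^-)^\alpha+(r^+)^\alpha]=1$. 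Taking full expectations gives $\E[\int\Psi_n\,\haar]=\E[\int\Psi_{n-1}\,\haar]=m_0$.

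For the bound \eqref{boundL2normPsi}, the plan is to first observe the crude pointwise estimate $|\Psi_n(\rota)|\le\|\Psi_0\|_{L^\infty}\sum_{j=1}^{n+1}\ww_{j,n}^\alpha=\|\Psi_0\|_{L^\infty}M_n^{(\alpha)}$, so that $\|\Psi_n\|_{L^2}\le\|\Psi_0\|_{L^\infty}M_n^{(\alpha)}$ almost surely; taking expectations and invoking Lemma \ref{Lemmaweights}(i), namely $\E[M_n^{(\alpha)}]=1$, yields the claim directly. (One could alternatively extract a sharper bound with a constant involving $\|\Psi_0\|_{L^2}$ by conditioning on the weights and using that the $O_{j,n}$, after appropriate symmetrization, spread mass according to $\haar$, but the crude argument already gives exactly the stated inequality and is cleaner.)

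I do not expect a serious obstacle here: both parts are essentially bookkeeping on the McKean-tree recursion, and the only subtle points are (a) making sure the conditioning is set up so that $O_{\ell_n,n-1}R^\pm_n$ is measurable with respect to the conditioning $\sigma$-algebra \emph{before} one integrates against $\haar$ — which is why one conditions on $\mathcal{F}_n$ (not $\mathcal{F}_{n-1}$) for the invariance step and only afterwards takes the outer expectation to kill the factor $(r^-_n)^\alpha+(r^+_n)^\alpha-1$ using independence — and (b) the interchange of $\E$ and $\int_\sod$, which is justified by Fubini/Tonelli since everything in sight is nonnegative after taking absolute values and $\E[M_n^{(\alpha)}]=1<\infty$. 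The mild care needed with the order of conditioning is the one place where a careless write-up could go wrong.
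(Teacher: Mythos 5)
Your proof is correct, but for the identity \eqref{eq:m0const} you take a more roundabout route than the paper. The paper simply observes that, pathwise, right-invariance of $\haar$ gives
\begin{equation*}
\int_{\sod}\Psi_n(\rota)\,\haar(\dn\rota)=\sum_{j=1}^{n+1}\ww_{j,n}^\alpha\int_{\sod}\Psi_0(\rota)\,\haar(\dn\rota)=m_0\,M_n^{(\alpha)},
\end{equation*}
and then cites Lemma \ref{Lemmaweights}(i), $\E[M_n^{(\alpha)}]=1$, to conclude in one line. Your induction on $n$ via the one-step tree recursion uses the same two ingredients (right-invariance and $\E[(r^-)^\alpha+(r^+)^\alpha]=1$), but instead of invoking Lemma \ref{Lemmaweights}(i) you effectively re-derive it: your increment computation is precisely the martingale argument behind $\E[M_n^{(\alpha)}]=1$, specialized to the function $\Psi_0$. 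Both arguments are valid; the paper's is shorter because it factors the randomness of the rotations out completely before any expectation is taken (so no conditioning is needed at all -- the invariance identity $\int\Psi_0(\rota G)\haar(\dn\rota)=m_0$ holds pathwise for any realization of $G$), whereas yours is self-contained and does not lean on the external reference behind Lemma \ref{Lemmaweights}(i). Your proof of \eqref{boundL2normPsi} is exactly the paper's: the pointwise bound $|\Psi_n(\rota)|\le\|\Psi_0\|_{L^\infty}M_n^{(\alpha)}$, the fact that $\haar$ is a probability measure, and $\E[M_n^{(\alpha)}]=1$.
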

\begin{proof}
  Since $\haar$ is right invariant,
  \[
  \int_{\sod} \Psi_n(\rota)\haar(\dn\rota)
  = \int_{\sod} \sum_{j=1}^{n+1}\beta_{j,n}^\alpha \Psi_0(\rota O_{j,n})\haar(\dn\rota)
  = \sum_{j=1}^{n+1}\beta_{j,n}^\alpha \int_{\sod} \Psi_0(\rota)\haar(\dn\rota).
  \]
  Now \eqref{eq:m0const} follows by means of (i) in Lemma \ref{Lemmaweights}.
  Another application of that property yields \eqref{boundL2normPsi}:
  \begin{align*}
    \E\big[ \|\Psi_n \|_{L^2}\big] &
    = \E\Big[ \Big ( \int_{\sod}  \Big ( \sum_{j=1}^{n+1}  \ww_{j,n}^\alpha \Psi_0(\rota O_{j,n})  \Big)^{2} \haar(\dn\rota)\Big)^{1/2}\Big]
    \leq \|\Psi_0\|_{L^\infty}\E\Big[  \sum_{j=1}^{n+1}  \ww_{j,n}^\alpha \Big]= \|\Psi_0\|_{L^\infty}.
  \end{align*}
\end{proof}
\begin{lemma}\label{tightnessPsin} 
The  laws of  $\Psi_n$ form a tight sequence of probability measures  on $\czero$ and hence they are relatively sequentially compact.
\end{lemma}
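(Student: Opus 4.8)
The plan is to establish tightness in $\czero$ via the Arzelà–Ascoli characterization of relative compactness in $C^0(\sod)$, which (for the metric space $\sod$) amounts to showing uniform boundedness and uniform equicontinuity of the family $\{\Psi_n\}$ \emph{in probability}. Concretely, I would verify the two conditions of the standard tightness criterion on $C(K)$ for compact metric $K$: first, that the sequence $(\Psi_n(\rota_0))_n$ is tight in $\setR$ for some fixed $\rota_0$ (equivalently, since all $\Psi_n$ are controlled uniformly, that $\sup_n \pby\{\|\Psi_n\|_{L^\infty} > \Lambda\}\to 0$ as $\Lambda\to\infty$); and second, that the modulus of continuity $\omega_{\Psi_n}(\eps) := \sup_{d_{\sod}(\rota,\rota')\le\eps}|\Psi_n(\rota)-\Psi_n(\rota')|$ satisfies $\lim_{\eps\to0}\sup_n \pby\{\omega_{\Psi_n}(\eps) > \zeta\} = 0$ for every $\zeta>0$.

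For the uniform boundedness part, the key input is the pointwise bound $|\Psi_n(\rota)| \le \|\Psi_0\|_{L^\infty}\sum_{j=1}^{n+1}\ww_{j,n}^\alpha = \|\Psi_0\|_{L^\infty} M_n^{(\alpha)}$ coming directly from \eqref{defPsin}, together with (i) of Lemma \ref{Lemmaweights}, which gives $\E[M_n^{(\alpha)}] = 1$ uniformly in $n$; Markov's inequality then yields $\pby\{\|\Psi_n\|_{L^\infty} > \Lambda\} \le \|\Psi_0\|_{L^\infty}/\Lambda$ uniformly in $n$. For the equicontinuity part, I would exploit that $\Psi_0$, being continuous on the compact group $\sod$, is uniformly continuous: given $\zeta>0$ choose $\eps>0$ so that $d_{\sod}(\rho,\rho')\le\eps$ implies $|\Psi_0(\rho)-\Psi_0(\rho')| \le \zeta$. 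Since $d_{\sod}$ is left-invariant (bi-invariant), $d_{\sod}(\rota,\rota')\le\eps$ forces $d_{\sod}(\rota O_{j,n}, \rota' O_{j,n})\le\eps$ for every $j$, hence
\begin{align*}
  |\Psi_n(\rota) - \Psi_n(\rota')| \le \sum_{j=1}^{n+1}\ww_{j,n}^\alpha\,|\Psi_0(\rota O_{j,n}) - \Psi_0(\rota' O_{j,n})| \le \zeta\, M_n^{(\alpha)}.
\end{align*}
Thus $\omega_{\Psi_n}(\eps) \le \zeta M_n^{(\alpha)}$, and another application of Markov's inequality with $\E[M_n^{(\alpha)}]=1$ gives $\sup_n\pby\{\omega_{\Psi_n}(\eps) > \zeta^{1/2}\} \le \zeta^{1/2}$, which tends to $0$ as $\zeta\to0$. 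Combining the two estimates, the Arzelà–Ascoli tightness criterion on $C^0(\sod)$ applies, and Prohorov's theorem delivers relative sequential compactness of the laws of $\Psi_n$.

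The main obstacle, such as it is, is bookkeeping rather than conceptual: one must be careful that the chosen metric on $\sod$ is bi-invariant so that the random rotations $O_{j,n}$ act as isometries on the right, making the equicontinuity bound deterministic up to the single random scalar $M_n^{(\alpha)}$; any bi-invariant Riemannian metric (e.g.\ the one induced by the Killing form, or equivalently the Frobenius distance restricted to $\sod$) works. One should also note that Lemma \ref{lemma15} is not quite enough on its own — it controls $\E[\|\Psi_n\|_{L^2}]$ but not the sup norm or the modulus of continuity — so the sharper pointwise estimate $|\Psi_n|\le\|\Psi_0\|_{L^\infty}M_n^{(\alpha)}$ is what really drives both halves of the argument. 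No deeper properties of the weights (such as the a.s.\ convergence from (ii) or the triviality from (iii)) are needed here; those enter only later when identifying the limit in Proposition \ref{convPsin}.
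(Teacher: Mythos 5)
Your proof is correct and follows essentially the same route as the paper: both reduce the modulus of continuity of $\Psi_n$ to $M_n^{(\alpha)}$ times the modulus of continuity of the uniformly continuous $\Psi_0$, using the invariance of the distance on $\sod$ under right multiplication by the $O_{j,n}$, and then invoke $\E[M_n^{(\alpha)}]=1$ together with the $C(K)$ tightness criterion and Prohorov's theorem. The only cosmetic point is that the invariance actually needed is \emph{right}-invariance (the $O_{j,n}$ act on the right), which your choice of a bi-invariant metric covers; your explicit verification of the pointwise tightness condition via $|\Psi_n|\le\|\Psi_0\|_{L^\infty}M_n^{(\alpha)}$ is a detail the paper leaves implicit.
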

\begin{proof}
  By the classical tightness criterion for sequences of random continuous functions, see e.g.\ Theorem 16.5 in \cite{Kallenberg},
  it suffices to show that
  \[
  w(\Psi_n,\delta):=\sup\big\{|\Psi_n(\rota_1)-\Psi_n(\rota_2)|\,\big|\, \|\rota_1-\rota_2\|_* \leq \delta \big\} ,
  \]
  where $\|\cdot\|_*$ is the matrix (operator) norm induced by the euclidean norm on $\setR^d$,
  satisfies
  \begin{align}
    \label{eq:kallenberg}
    \lim_{\delta\to 0}\limsup_{n\to \infty}\E[ w(\Psi_n,\delta) ] =0.
  \end{align}
  Observe that for arbitrary $\rota_1,\rota_2\in\sod$,
  \begin{align*}
    |\Psi_n(\rota_1)-\Psi_n(\rota_2)|
    =\bigg|\sum_{j=1}^{n+1}\beta_{j,n}^\alpha [\Psi_0(\rota_1 O_{j,n})-\Psi_0(\rota_2 O_{j,n})]\bigg|
    \leq \sum_{j=1}^{n+1}\beta_{j,n}^\alpha |\Psi_0(\rota_1')-\Psi_0(\rota_2')|,
  \end{align*}
  with $\rota_i'=\rota_i O_{j,n}$ for $i=1,2$.
  Since $O_{j,n}$ is a rotation matrix,
  \begin{align*}
    \|\rota_1'-\rota_2'\|_* = \| (\rota_1-\rota_2)O_{j,n} \|_*  = \|\rota_1-\rota_2 \|_*.    
  \end{align*}
  It follows that
  \begin{align*}
    \E[ w(\Psi_n,\delta) ] 
    \le \E\bigg[\sum_{j=1}^{n+1}\beta_{j,n}^\alpha\bigg] 
    \sup\big\{|\Psi_0(\rota_1)-\Psi_0(\rota_2)|\,\big|\, \|\rota_1-\rota_2\|_* \leq \delta \big\} .
  \end{align*}
  The expectation value on the right-hand side equals to one, independently of $n$, by Lemma \ref{Lemmaweights} (i).
  The supremum, which is also independent of $n$, tends to zero for $\delta\downarrow0$,
  since the continuous function $\Psi_0$ on the compact manifold $\sod$ is automatically \emph{uniformly} continuous.
  Since $\czero$ is a Polish space, the last part of the statement follows from Prohorov's Theorem, see e.g. Thm. 17, Chapter 18 in \cite{fristed}. 
\end{proof}

\subsection{Definition of the recursion operator $T$}
Given $A\in\sod$ and a function $f$ on $\sod$, we denote by $A^\#f$ and $A_\#f$ the functions given by
\begin{align}
  \label{eq:8}
  A^\# f(\rota) = f(\rota A) \quad \text{and} \quad A_\#f(\rota) = f(\rota A^T)
  \quad \text{for all $\rota\in\sod$}.
\end{align}
Observe that $A^\#(B^\#f)=(AB)^\#f$ for arbitrary $A,B\in\rota$, since
\begin{align*}
  A^\#(B^\#f)(\rota) = B^\#f(\rota A) = f(\rota AB).
\end{align*}
With these notations,
\[
\Psi_n(\rota)=\sum_{j=1}^{n+1}\beta_{j,n}^\alpha \Psi_0(\rota O_{j,n})=\sum_{j=1}^{n+1}\beta_{j,n}^\alpha O_{j,n}^\#\Psi_0(\rota).
\]
Introduce a sequence $(\nu_n)$ of probability measures on $\czero$ by
\begin{equation}
  \label{eq:22}
  \nu_0 :=\delta_{\Psi_0}, \quad\text{and for every $n\ge1$,}\quad
  \nu_n := \law \Big (  \sum_{j=1}^{n+1}\beta_{j,n}^\alpha O_{j,n}^\#\Psi_0 \Big ).
\end{equation}
Next, define a recursion operator $T$ 
on the set $\CP(\czero)$ of all probability measures on $\czero$ as follows.
Given $\nu',\nu''\in\CP(\czero)$, let $\Psi'$ and $\Psi''$ be two independent random functions with distributions $\nu'$ and $\nu''$, respectively,
which are also independent of $(r^-,r^+,R^-,R^+)$.
Then define
\begin{equation}
  \label{eq:6}
  T[\nu',\nu''] := \law \big( (r^-)^\alpha (R^-)^\#\Psi' + (r^+)^\alpha (R^+)^\#\Psi'' \big).
\end{equation}
$T$ has a fixed point:
set $\Psi_\infty:=m_0 M_\infty^{(\alpha)}$ and \(\nu_\infty:=\text{Law}(\Psi_\infty)\).
Using Lemma \ref{Lemmaweights}, it is easy to see that
\begin{equation}
  \label{nuinftyfixT}
  \nu_\infty=T(\nu_\infty,\nu_\infty).
\end{equation}
In the following, we shall show that this fixed point is attractive in a suitable metric.
\begin{lemma} 
  For each $n\geq 1$, the following recursion relation holds:
  \begin{align}
    \label{eq:10}
    \nu_n = \frac1n \sum_{k=1}^n T[\nu_{k-1},\nu_{n-k}].
  \end{align}
\end{lemma}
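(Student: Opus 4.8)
The plan is to transcribe, at the level of the random functions $\Psi_n$, the root-splitting argument already used in the proof of Proposition~\ref{Prop.probrep00}. Fix $n\ge 1$. Exactly as there, using $\ww_{1,1}=r_1^-$, $\ww_{1,2}=r_1^+$, $O_{1,1}=R_1^-$, $O_{1,2}=R_1^+$ one may write
\begin{align*}
  &\ww_{j,n}=r_1^-\,\ww_{j,n}',\quad O_{j,n}=R_1^-O_{j,n}'\qquad (1\le j\le J),\\
  &\ww_{j,n}=r_1^+\,\ww_{j,n}'',\quad O_{j,n}=R_1^+O_{j,n}''\qquad (J+1\le j\le n+1),
\end{align*}
where the random index $J$ (the number of leaves hanging from the left child of the root) is uniform on $\{1,\dots,n\}$ and independent of $(r_1^-,r_1^+,R_1^-,R_1^+)$, and where, conditionally on $\{J=k\}$, the two families $(\ww_{j,n}',O_{j,n}')_{j=1,\dots,k}$ and $(\ww_{j,n}'',O_{j,n}'')_{j=k+1,\dots,n+1}$ are independent, jointly independent of $(r_1^\pm,R_1^\pm)$, and distributed as $(\ww_{j,k-1},O_{j,k-1})_{j=1,\dots,k}$ and $(\ww_{j,n-k},O_{j,n-k})_{j=1,\dots,n+1-k}$, respectively. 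All of this I would quote verbatim from the proof of Proposition~\ref{Prop.probrep00}; the relevant data there are the collision parameters and the indices $\ell_1,\dots,\ell_n$, and the vectors $X_k$ play no role in the definition \eqref{defPsin} of $\Psi_n$.

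Next I would introduce
\begin{align*}
  \Psi'(\rota):=\sum_{j=1}^{J}(\ww_{j,n}')^\alpha\Psi_0(\rota O_{j,n}'),\qquad
  \Psi''(\rota):=\sum_{j=J+1}^{n+1}(\ww_{j,n}'')^\alpha\Psi_0(\rota O_{j,n}''),
\end{align*}
and read off from \eqref{defPsin} and the notation \eqref{eq:8} that
\[
  \Psi_n=(r_1^-)^\alpha (R_1^-)^\#\Psi'+(r_1^+)^\alpha (R_1^+)^\#\Psi''.
\]
On $\{J=k\}$ the random functions $\Psi'$ and $\Psi''$ are independent, jointly independent of $(r_1^\pm,R_1^\pm)$, and have laws $\nu_{k-1}$ and $\nu_{n-k}$ respectively (using $\nu_0=\delta_{\Psi_0}$ from \eqref{eq:22} for $k=1$ and $k=n$), by the conditional structure recalled above. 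Since $(r_1^-,r_1^+,R_1^-,R_1^+)$ has the same law as $(r^-,r^+,R^-,R^+)$, comparison with the definition \eqref{eq:6} of $T$ gives $\law(\Psi_n\mid J=k)=T[\nu_{k-1},\nu_{n-k}]$.

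To conclude, I would condition on $J$ and use $\pby\{J=k\}=1/n$ for $k=1,\dots,n$:
\[
  \nu_n=\law(\Psi_n)=\sum_{k=1}^{n}\pby\{J=k\}\,\law(\Psi_n\mid J=k)=\frac1n\sum_{k=1}^{n}T[\nu_{k-1},\nu_{n-k}],
\]
which is \eqref{eq:10}. I do not expect a genuine obstacle: the entire content is the bookkeeping of the conditional independences and of the re-indexing of the two subtrees, and exactly this bookkeeping has already been performed in the proof of Proposition~\ref{Prop.probrep00}, so it can be reused almost word for word. As a sanity check, for $n=1$ one has $J\equiv 1$, $\Psi'=\Psi''=\Psi_0$, and the identity reduces to $\nu_1=T[\nu_0,\nu_0]$, in agreement with $\Psi_1=(r_1^-)^\alpha(R_1^-)^\#\Psi_0+(r_1^+)^\alpha(R_1^+)^\#\Psi_0$.
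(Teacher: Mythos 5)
Your proof is correct and follows essentially the same route as the paper's: split the McKean tree at the root via the factorization \eqref{beta-O}, write $\Psi_n=(r_1^-)^\alpha (R_1^-)^\#\Psi'+(r_1^+)^\alpha (R_1^+)^\#\Psi''$ using $A^\#(B^\#f)=(AB)^\#f$, and average over the uniform index $J$ using the conditional independence and distributional identities already established in the proof of Proposition~\ref{Prop.probrep00}. No gaps.
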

\begin{proof} The proof is similar to the one of Proposition \ref{Prop.probrep00}.
  With the notations \eqref{beta-O}, we can write
  \begin{align*}
    \Psi_n
    &= \sum_{j=1}^{J}\big(r^-_1\beta_{j,n}'\big)^\alpha (R^-_1O_{j,n}')^\#\Psi_0
    + \sum_{j=J+1}^{n+1}\big(r^+_1\beta_{j,n}''\big)^\alpha (R^+_1O_{j,n}'')^\#\Psi_0 \\
    &= (r^-_1)^\alpha (R^-_1)^\#\Big(\sum_{j=1}^{J}(\beta_{j,n}')^\alpha (O_{j,n}')^\#\Psi_0\Big)
    + (r^+_1)^\alpha (R^+_1)^\#\Big(\sum_{j=J+1}^{n+1}(\beta_{j,n}'')^\alpha (O_{j,n}'')^\#\Psi_0\Big),
  \end{align*}
  using the rule $A^\#(B^\#f)=(AB)^\#f$ discussed above.
  To conclude, observe that --- conditionally on $\{J=k\}$ ---
  \[
    \sum_{j=1}^{J}(\beta_{j,n}')^\alpha (O_{j,n}')^\#\Psi_0 \deq \Psi_{k-1}, \qquad
    \sum_{j=J+1}^{n+1}(\beta_{j,n}'')^\alpha (O_{j,n}'')^\#\Psi_0 \deq \Psi_{n-k}.
   \qedhere
    \]
\end{proof}
The goal for the rest of this section is to show that the map $T$ is a contractive in an appropriate metric.
  Once this is shown, the proof of Propostion \ref{convPsin} follows easily.

\subsection{Contraction in Fourier distance}
Recall that $L^2(\sod,\haar)$ is a real Hilbert space with respect to the scalar product
\[
\spr{g}{f}=\int_\sod g(\rot)f(\rot)\dd\haar(\rot).
\]
For a probability measure $\nu$ on $\czero$, define its $L^2$-characteristic functional (or Fourier transform) $\hat\nu:L^2(\sod,\haar)\to\setC$ by
\begin{align*}
  \hat\nu(g) := \int_{\czero} \exp\big(i\spr{g}{f}\big)\, \nu(\dn f)=\E\big[\exp\big(i\spr{g}{\Psi}\big)\big],
\end{align*}
where $\Psi$ is a random function with law $\nu$.

Now, given $\gamma>1$, introduce the \emph{Fourier distance} between any $\nu',\nu''\in\CP(\czero)$ by
\begin{align*}
  d_\gamma(\nu',\nu'')
  &:= \sup_{0\neq g\in L^2}\frac{|\E\big[\exp(i\spr{g}{\Psi'})-\exp(i\spr{g}{\Psi''})-i\spr{g}{\Psi'-\Psi''}\big]|}{\|g\|_{L^2}^\gamma} \\
  &= \sup_{0\neq g\in L^2}\frac{\big|\hat\nu'(g)-\hat\nu''(g)-i\spr{g}{\Delta}\big|}{\|g\|_{L^2}^\gamma}
  \quad\text{with}\quad \Delta:=\E[\Psi'-\Psi''],
\end{align*}
where $\Psi'$ and $\Psi''$ are two random functions distributed according to $\nu'$ and $\nu''$.
This is a variant of the original Fourier metric, 
which was first introduced in the context of kinetic equations in \cite{Wennberg}, 
and has since then been generalized in manifold ways, see e.g.\ \cite{BaMa} for another application to measures on matrices.
Strictly speaking, this distance is not a metric since it might attain the value infinity.
Notice further that $\nu'$ and $\nu''$ might be ``close'' with respect to $d_\gamma$ even if their expectation values differ significantly. 
\begin{lemma}\label{lemmacontraction}  
  Let $\nu_1',\nu_2'$ and $\nu_1'',\nu_2''$ probability measures on $\czero$
  and $\Psi_1',\Psi_1'',\Psi_2',\Psi_2''$ be independent random functions
  with laws $\nu_1',\nu_2'$ and $\nu_1'',\nu_2''$, respectively.
  Then, for a given $\gamma \in (1,2)$,
  \begin{align*}
    &d_\gamma\big(T[\nu_1',\nu_1''],T[\nu_2',\nu_2'']\big)
    \le \E[(r^-)^{\alpha\gamma}]d_\gamma(\nu_1',\nu_2') + \E[(r^+)^{\alpha\gamma}]d_\gamma(\nu_1'',\nu_2'') \\
    &+ \max\big(2,\E[\|\Psi_1''\|_{L^2}],\E[\|\Psi_2'\|_{L^2}]) 
    \E[(r^-)^{\alpha\gamma}+(r^+)^{\alpha\gamma}]\big(\|\E[\Psi'_1-\Psi'_2]\|_{L^2}+\|\E[\Psi''_1-\Psi''_2]\|_{L^2}\big).
  \end{align*}
\end{lemma}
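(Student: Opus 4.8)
The plan is to prove the estimate by a direct computation from the definition of $T$ and the definition of $d_\gamma$, exploiting the linearity built into the operation $(r^\pm)^\alpha(R^\pm)^\#$. First I would fix a test function $g\in L^2(\sod,\haar)$ with $g\neq0$ and introduce abbreviations for the relevant random functions. Set $\Phi_i := (r^-)^\alpha(R^-)^\#\Psi_i' + (r^+)^\alpha(R^+)^\#\Psi_i''$ for $i=1,2$, so that $T[\nu_i',\nu_i'']=\law(\Phi_i)$; here $(r^-,r^+,R^-,R^+)$ is a single copy independent of all four $\Psi$'s. The key algebraic observation is that, conditionally on $(r^-,r^+,R^-,R^+)$, the inner product splits as
\[
\spr{g}{\Phi_i} = \spr{(r^-)^\alpha (R^-)_\# g}{\Psi_i'} + \spr{(r^+)^\alpha (R^+)_\# g}{\Psi_i''},
\]
using the adjoint relation $\spr{g}{A^\#f}=\spr{A_\#g}{f}$ (which is just right-invariance of $\haar$), and that $\|(R^\pm)_\#g\|_{L^2}=\|g\|_{L^2}$ since $R^\pm$ acts by an isometry. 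Because $\Psi_i'$ and $\Psi_i''$ are independent of each other and of the weights, the conditional characteristic quantity factorizes into a product over the ``$-$'' and ``$+$'' parts.

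\textbf{Key steps.} The second step is the telescoping: write
\[
\hat\nu_1(g)-\hat\nu_2(g) = \E\big[e^{i\spr{g}{\Phi_1}}-e^{i\spr{g}{\Phi_2}}\big]
\]
where $\hat\nu_i$ denotes the transform of $T[\nu_i',\nu_i'']$, condition on $(r^-,r^+,R^-,R^+)$, and use the elementary identity $ab-cd = (a-c)b + c(b-d)$ applied to the two factors $\E[e^{i\spr{(r^-)^\alpha(R^-)_\#g}{\Psi_i'}}\mid\cdot]$ and $\E[e^{i\spr{(r^+)^\alpha(R^+)_\#g}{\Psi_i''}}\mid\cdot]$, together with the bound $|e^{i\spr{\cdot}{\Psi''}}|\le1$ in $L^\infty$ and (after subtracting the linear term) the trivial bound on the second factor in terms of $\E[\|\Psi''\|_{L^2}]$. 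After subtracting the linear correction $i\spr{g}{\Delta}$ with $\Delta=\E[\Phi_1-\Phi_2]$, each of the two resulting pieces is, conditionally, exactly of the form that defines $d_\gamma(\nu_1',\nu_2')$ resp. $d_\gamma(\nu_1'',\nu_2'')$ evaluated at the scaled test function $(r^\pm)^\alpha(R^\pm)_\#g$, whose $L^2$-norm is $(r^\pm)^\alpha\|g\|_{L^2}$; dividing by $\|g\|_{L^2}^\gamma$ produces the factors $(r^\pm)^{\alpha\gamma}$ after taking expectation over $(r^-,r^+,R^-,R^+)$, giving the first line of the claimed bound.

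\textbf{The error terms.} The third step is bookkeeping of the ``linear'' remainder. When we split $ab-cd$ we cannot simply replace $b$ by $1$ because we must keep the subtracted linear term consistent; the discrepancy is controlled by terms of the form $\spr{(r^\pm)^\alpha(R^\pm)_\#g}{\E[\Psi_1'-\Psi_2']}$ and $\spr{(r^\pm)^\alpha(R^\pm)_\#g}{\E[\Psi_1''-\Psi_2'']}$ multiplied by bounded quantities, together with second-order cross terms bounded using $|e^{ix}-1-ix|\le\min(|x|^2,2|x|)\le 2|x|$ and $|e^{ix}-1|\le 2$. Applying Cauchy--Schwarz, $\|(R^\pm)_\#g\|_{L^2}=\|g\|_{L^2}$, the isometry again, and $\|g\|_{L^2}/\|g\|_{L^2}^\gamma\le 1$ when $\|g\|_{L^2}\ge1$ (the case $\|g\|_{L^2}\le1$ being handled by the same estimate since $\gamma<2$ makes $\|g\|_{L^2}^{2-\gamma}\le 1$ there too, so one uniformly gets a factor $\le 1$ after also using $\|g\|^{1}/\|g\|^\gamma \le \max(1,\|g\|^{1-\gamma})$ — this minor case distinction is the only slightly delicate point), one collects these into $\max(2,\E[\|\Psi_1''\|_{L^2}],\E[\|\Psi_2'\|_{L^2}])\,\E[(r^-)^{\alpha\gamma}+(r^+)^{\alpha\gamma}]$ times the sum of the two expectation-difference norms. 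Finally, taking the supremum over $g\neq0$ and then expectation over the weights yields the stated inequality.

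\textbf{Main obstacle.} The main obstacle is not any single deep idea but the careful handling of the remainder terms: one must subtract exactly the right linear term at each stage of the telescoping so that what remains is genuinely controlled either by $d_\gamma$ of one of the pairs (with the correctly scaled test function) or by a product of an $L^2$-norm of an expectation difference with a bounded constant, and one must verify that the normalization by $\|g\|_{L^2}^\gamma$ with $\gamma\in(1,2)$ is enough to absorb both the $(r^\pm)^{\alpha\gamma}$ homogeneity (straightforward) and the extra factor of $\|g\|_{L^2}$ in the linear remainder terms (which requires $\gamma>1$ for large $\|g\|$ and uses $\gamma<2$ in the quadratic bound for small $\|g\|$). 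Hypothesis (H2) with $\gamma\in(1,2)$ is exactly what makes this work.
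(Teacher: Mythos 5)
Your proposal follows essentially the same route as the paper: the same factorization $\hat\nu_i(g)=\E[\hat\nu_i'(g')\hat\nu_i''(g'')]$ with $g'=(r^-)^\alpha(R^-)_\#g$, $g''=(r^+)^\alpha(R^+)_\#g$ via the adjoint relation and conditional independence, the same telescoping $ab-cd=(a-c)b+c(b-d)$ with the linear terms $i\spr{g'}{\Delta'}$, $i\spr{g''}{\Delta''}$ inserted so that each bracket is either a $d_\gamma$-type quantity at the rescaled test function or a remainder of the form $(1-\hat\nu_1''(g''))\spr{g'}{\Delta'}$, and the same bound $|1-\hat\nu(h)|\le\min(2,\|h\|_{L^2}\E\|\Psi\|_{L^2})$ for the remainder. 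The first line of the claimed inequality comes out exactly as you describe.

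The one place where your bookkeeping does not deliver the \emph{stated} constant is the remainder term. After Cauchy--Schwarz the term to control is $(r^-)^\alpha\,|1-\hat\nu_1''(g'')|\,\|g\|_{L^2}^{1-\gamma}\|\Delta'\|_{L^2}$, and you propose a case distinction on $\|g\|_{L^2}\lessgtr 1$. That yields $(r^-)^\alpha\max\big(2,(r^+)^\alpha\E\|\Psi_1''\|_{L^2}\big)$, i.e.\ a prefactor of order $\E[(r^-)^\alpha\max(1,(r^+)^\alpha)]$, which is in general not dominated by $\E[(r^-)^{\alpha\gamma}+(r^+)^{\alpha\gamma}]$ (take $r^+=0$ a.s.\ and $\E[(r^-)^\alpha]=1$). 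The paper instead writes $\|g\|_{L^2}^{-(\gamma-1)}=(r^+)^{\alpha(\gamma-1)}\|g''\|_{L^2}^{-(\gamma-1)}$, bounds $\sup_{h\neq0}|1-\hat\nu_1''(h)|/\|h\|_{L^2}^{\gamma-1}\le\max\{2,\E\|\Psi_1''\|_{L^2}\}$ (the case distinction being on $\|h\|=\|g''\|$, not on $\|g\|$), and then applies Young's inequality with exponents $\gamma$ and $\gamma/(\gamma-1)$ to get $\E[(r^-)^\alpha(r^+)^{\alpha(\gamma-1)}+(r^-)^{\alpha(\gamma-1)}(r^+)^\alpha]\le\E[(r^-)^{\alpha\gamma}+(r^+)^{\alpha\gamma}]$. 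This is the only missing ingredient; note, though, that your weaker constant would still be sufficient for the downstream use in Proposition \ref{Dnto0}, where only finiteness of the prefactor and $\kappa^-+\kappa^+<1$ are needed to choose $a$.
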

\begin{proof} 
  Let $\Psi_1',\Psi_1'',\Psi_2',\Psi_2''$ and $(r^-,r^+,R^-,R^+)$ be independent.
  We proceed in full analogy to the proof of Lemma 6 in  \cite{BaMa}.
  Set $\nu_j:=T[\nu_j',\nu_j'']$ and let $\Psi_j$ be distributed with laws $\nu_j$, respectively.
  Recalling the definition of $R^\#$ and $R_\#$ from \eqref{eq:8}, we obtain
  \begin{align*}
    \hat \nu_j(g) 
    &= \E\big[\exp\big(i\spr{g}{(r^-)^\alpha(R^-)^\#\Psi_j'+(r^+)^\alpha(R^+)^\#\Psi_j''}\big)\big] \\
    &= \E\big[\hat\nu_j'\big((r^-)^\alpha(R^-)_\# g\big) \hat\nu_j''\big((r^+)^\alpha(R^+)_\#g\big)\big] \\
    &= \E\big[\hat\nu_j'(g')\hat\nu_j''(g'')\big],
  \end{align*}
  with $g' := (r^-)^\alpha(R^-)_\#g$ and $ g'' := (r^+)^\alpha(R^+)_\#g$.
  Next, define $ \Delta':=\E[\Psi_1'-\Psi_2']$, $\Delta'':=\E[\Psi_1''-\Psi_2'']$, and observe that
  \begin{align*}
    \Delta := \E[\Psi_1-\Psi_2] = \E[(r^-)^\alpha(R^-)^\star\Delta' + (r^+)^\alpha(R^+)^\star\Delta''],
  \end{align*}
  so that $\spr{g}{\Delta} = \E\big[\spr{g'}{\Delta'}+\spr{g''}{\Delta''}\big].$
  We thus have
  \begin{align*}
    \delta(g) &:=\hat\nu_1(g)-\hat\nu_2(g)-i\spr{g}{\Delta} \\
    &= \E\big[\hat \nu_1'(g')\hat \nu_1''(g'')-\hat \nu_2'(g')\hat \nu_2''(g'')-i\spr{g'}{\Delta'}-i\spr{g''}{\Delta''}\big] \\
    &= \E\big[\big(\hat \nu_1'(g')-\hat \nu_2'(g')-i\spr{g'}{\Delta'}\big)\hat \nu_1''(g'')-i\big(1-\hat \nu_1''(g'')\big)\spr{g'}{\Delta'}\big] \\
    &\qquad + \E\big[\hat \nu_2'(g')\big(\hat \nu_1''(g'')-\hat \nu_2''(g'')-i\spr{g''}{\Delta''}\big)-i\big(1-\hat \nu_2'(g')\big)\spr{g''}{\Delta''}\big].
  \end{align*}
  Hence,
  we find
  \begin{align*}
    |\delta(g)|
    &\le \E\big[\|g'\|_{L^2}^\gamma\big]\,d_\gamma( \nu_1',\nu_2')
    + \E\big[\|g''\|_{L^2}^\gamma\big]\,d_\gamma(\nu_1'',\nu_2'') \\
    &\qquad + \E\big[\big|1-\hat \nu_1''(g'')\big|\|g'\|_{L^2}\big]\|\Delta'\|_{L^2}
    + \E\big[\big|1-\hat \nu_2'(g')\big|\|g''\|_{L^2}\big]\|\Delta''\|_{L^2}.
  \end{align*}
  Since $R^-$ and $R^+$ are orthogonal matrices, we have
  \begin{align*}
    \|g'\|_{L^2} = (r^-)^\alpha\|g\|_{L^2}, \quad
    \|g''\|_{L^2} = (r^+)^\alpha\|g\|_{L^2},
  \end{align*}
  and so
  \begin{align*}
    \|g\|_{L^2}^{-\gamma}|\delta(g)|
    &\le \E\big[(r^-)^{\alpha\gamma}\big]\,d_\gamma(\nu_1',\nu_2') + \E\big[(r^+)^{\alpha\gamma}\big]\,d_\gamma(\nu_1'',\nu_2'') \\
    & \quad +  \E\Big[\frac{|1-\hat \nu_1''(g'')|}{\|g\|_{L^2}^{\gamma-1}}(r^-)^\alpha\Big]\|\Delta'\|_{L^2}
    + \E\Big[\frac{|1-\hat \nu_2'(g')|}{\|g\|_{L^2}^{\gamma-1}}(r^+)^\alpha\Big]\|\Delta''\|_{L^2} \\
    &\le \E\big[(r^-)^{\alpha\gamma}\big]\,d_\gamma(\nu_1',\nu_2') + \E\big[(r^+)^{\alpha\gamma}\big]\,d_\gamma(\nu_1'',\nu_2'') \\
    & \quad +  \E\Big[(r^-)^\alpha(r^+)^{\alpha(\gamma-1)}\Big]\sup_{h\neq 0}\Big(\frac{|1-\hat \nu_1''(h)|}{\|h\|_{L^2}^{\gamma-1}}\Big) \|\Delta'\|_{L^2} \\
    & \quad + \E\Big[(r^-)^{\alpha(\gamma-1)}(r^+)^\alpha\Big]\sup_{h\neq 0}\Big(\frac{|1-\hat \nu_2'(h)|}{\|h\|_{L^2}^{\gamma-1}}\Big)\|\Delta''\|_{L^2}.
  \end{align*}
  By definition of the Fourier transform, and since $|1-e^{ix}|\leq |x|$, it follows that
  \[
  \big|1-\hat \nu_1''(h)\big| \leq \|h\|_{L^2}\E\big[\|\Psi_1''\|_{L^2}\big].
  \]
  Since also $|1-\hat\nu_1''(h)|\leq 2$ for all $h$,
  we have that
  \begin{align*}
    \sup_{h\neq 0}\Big(\frac{|1-\nu_1''(h)|}{\|h\|_{L^2}^{\gamma-1}}\Big) \leq \max\{2, \E\big[\|\Psi_1''\|_{L^2}\big]\},
  \end{align*}
  and similarly for the other supremum.
  To finish the proof, observe that by Young's inequality
  \[
  \E\big[(r^-)^\alpha(r^+)^{\alpha(\gamma-1)} + (r^-)^{\alpha(\gamma-1)}(r^+)^\alpha\big]
  \le \E[(r^-)^{\alpha\gamma}+(r^+)^{\alpha\gamma}].
  \qedhere
  \]
\end{proof}

\subsection{Contraction of means}
Lemma \ref{lemmacontraction} almost yields contractivity of $T$ in the Fourier distance $d_\gamma$ for some appropriate $\gamma>1$.
  Below, we provide a control on the remainder term, given by the $L^2$-distance of the expectation values of the argument measures.
\begin{proposition}\label{propD}  
There are constants $\kappa^- < \E[(r^{-})^{\alpha}]$ and $\kappa^+  < \E[(r^{+})^{\alpha}]$ such that 
\[
\|\E[(r^\pm)^\alpha (R^\pm)^\# f \|_{L^2} \leq \kappa^\pm \|f \|_{L^2} \quad
\text{for every $f\in\czero$ with}\quad \int_{\sod}f(\rota)\haar(\dn\rota)=0.
\]
\end{proposition}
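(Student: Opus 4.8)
The plan is to reduce the claimed operator bound to a spectral-gap statement for the convolution operator associated with $\pb^\pm$ acting on the mean-zero subspace of $L^2(\sod,\haar)$. First I would observe that for a fixed rotation $R$, the map $f\mapsto R^\#f$ is a unitary operator on $L^2(\sod,\haar)$ (since $\haar$ is right-invariant), so it does not by itself contract anything; the gain must come from \emph{averaging} over the random rotation $R^\pm$, together with the normalizing factor $\E[(r^\pm)^\alpha]$. Writing $\Psi:=(r^\pm)^\alpha(R^\pm)^\#f$, I would compute $\E\big[\|\Psi\|_{L^2}^2\big]$ by Fubini and expand the square, but it is cleaner to instead estimate $\big\|\E[(r^\pm)^\alpha(R^\pm)^\#f]\big\|_{L^2}$ directly: by definition of $\pb^\pm$ in \eqref{eq:BB},
\begin{align*}
  \E\big[(r^\pm)^\alpha(R^\pm)^\#f\big](\rota)
  = \E[(r^\pm)^\alpha]\int_{\sod} f(\rota R)\,\pb^\pm(\dn R)
  = \E[(r^\pm)^\alpha]\,(P^\pm f)(\rota),
\end{align*}
where $P^\pm$ is the Markov (convolution) operator $P^\pm f(\rota):=\int_\sod f(\rota R)\pb^\pm(\dn R)$. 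So the claim is equivalent to: there is $\kappa^\pm/\E[(r^\pm)^\alpha]<1$ with $\|P^\pm f\|_{L^2}\le (\kappa^\pm/\E[(r^\pm)^\alpha])\|f\|_{L^2}$ for all mean-zero $f$.

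Next I would establish this contraction of $P^\pm$ on the orthogonal complement of the constants. The operator $P^\pm$ is a self-adjointness-free Markov operator, but it has norm $1$ on $L^2$ with the constants as a fixed eigenfunction; the point is that under (H3) there is a \emph{spectral gap}. The cleanest route is via Proposition~\ref{Bhatta2} (the corollary of the Bhattacharya-type result cited after (H3)): non-singularity of $\pb^\pm$ w.r.t.\ $\haar$ forces $(\pb^\pm)^{\star n}\to\haar$, in fact geometrically in total variation, which translates into $\|(P^\pm)^n f\|_{L^2}\le C\theta^n\|f\|_{L^2}$ for mean-zero $f$, with $\theta<1$. Since $P^\pm$ is a contraction on $L^2$ anyway ($\|P^\pm\|\le1$ by Jensen), one gets $\|P^\pm f\|_{L^2}\le 1$ on the unit mean-zero ball for every single step; combining with the geometric decay of iterates and the submultiplicativity of operator norms, the operator norm of $P^\pm$ restricted to mean-zero functions is $\limsup_n \|(P^\pm)^n|_{\{m=0\}}\|^{1/n}\le\theta<1$ — but that only bounds the spectral radius, not the norm of a single step. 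To get a genuine one-step bound strictly below $1$, I would instead argue: write $\pb^\pm=(1-\eta)\haar\cdot h + \eta\, q$ as a mixture, where $h\,\haar$ is the absolutely continuous part guaranteed by (H3) with total mass $1-\eta$ for some $\eta<1$ (after extracting a genuine $\haar$-piece — possible since any nontrivial a.c.\ component dominates a multiple of $\haar$ on a set of positive measure, and then further that a.c.\ density is $\ge c>0$ on a set of positive measure). Convolution against the $\haar$-piece annihilates mean-zero functions, so $P^\pm f = \eta\,(Q f)$ on mean-zero $f$ with $Q$ a Markov operator, giving $\|P^\pm f\|_{L^2}\le\eta\|f\|_{L^2}$. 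Then set $\kappa^\pm:=\eta\,\E[(r^\pm)^\alpha]<\E[(r^\pm)^\alpha]$.

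The main obstacle is the extraction of a true constant-density ($\haar$) component from the merely-nonsingular $\pb^\pm$: (H3) gives a nontrivial a.c.\ part $h\,\haar$ with $h\ge0$, $\int h\,d\haar>0$, but $h$ need not be bounded below. The fix is a standard two-step localization: since $\int h\,d\haar>0$ there is $\delta>0$ and a set $A$ with $\haar(A)>0$ on which $h\ge\delta$; then $\pb^\pm\ge \delta\,\mathbf{1}_A\,\haar$, and $\delta\,\mathbf{1}_A\,\haar$ is a (sub-probability) measure that, while not a multiple of $\haar$, still kills \emph{some} of the mean-zero mass — more precisely, I would iterate: $(\pb^\pm)^{\star 2}\ge \delta^2(\mathbf{1}_A\haar)\star(\mathbf{1}_A\haar)$, and a self-convolution of $\mathbf{1}_A\haar$ on the compact group $\sod$ has a density that is continuous and strictly positive on an open set (in fact, if $A$ has positive measure then $A\cdot A$ contains an open neighborhood of the identity by Steinhaus, and one can bootstrap to a lower bound $\ge c\,\haar$ after finitely many convolutions). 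This yields, for some fixed power $N$, $(\pb^\pm)^{\star N}\ge c\,\haar$ with $c>0$, hence $\|(P^\pm)^N f\|_{L^2}\le (1-c)\|f\|_{L^2}$ on mean-zero $f$; combined with $\|P^\pm\|\le1$, this already suffices to run the contraction argument at the level of the $N$-step operator, which is all that is actually needed downstream in Proposition~\ref{convPsin} — alternatively, one reads the one-step bound directly off Proposition~\ref{Bhatta2} if that proposition is stated in the $L^2$ / geometric form. I would present the proof in whichever of these two forms matches the precise statement of Proposition~\ref{Bhatta2}.
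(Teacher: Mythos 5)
Your reduction to the convolution operator $P^\pm$ (the paper's $L^\pm$, with $\E[(r^\pm)^\alpha(R^\pm)^\#f]=\E[(r^\pm)^\alpha]\,L^\pm f$) and your use of a Bhattacharya/Steinhaus-type minorization $(\pb^\pm)^{\star N}\ge c\,\haar$ to annihilate mean-zero functions are the right ingredients, and they match the paper's Lemma \ref{Lemmapower}. But there is a genuine gap at the last step: the proposition asserts a \emph{one-step} bound $\|L^\pm f\|_{L^2}\le\theta\,\|f\|_{L^2}$ with $\theta<1$ on mean-zero $f$, whereas your argument only delivers the $N$-step bound $\|(L^\pm)^N f\|_{L^2}\le(1-c)\|f\|_{L^2}$. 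You correctly observe that this controls the spectral radius but not the norm of a single application, and neither of your proposed exits closes the gap. The fallback of ``running the contraction argument at the level of the $N$-step operator'' would require reworking Proposition \ref{Dnto0}: there the one-step constants enter through the requirement $a(\kappa^-+\kappa^+)+2C'<a$, which is solvable only because $\kappa^-+\kappa^+<\E[(r^-)^{\alpha}]+\E[(r^+)^{\alpha}]=1$ \emph{strictly}, and the recursion \eqref{eq:10} applies $T$ once per level, so there is no natural place to insert an $N$-fold composite. The other fallback --- reading a one-step bound directly off Proposition \ref{Bhatta2} --- is unavailable, since that proposition only minorizes a high convolution power.

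The paper closes exactly this gap by symmetrizing. It sets $S^\pm:=(L^\pm)^*L^\pm$, which is again a convolution operator, now against the law $\tpb^\pm$ of $R_2^TR_1$ for independent $R_1,R_2\sim\pb^\pm$; this measure inherits a nontrivial absolutely continuous component, so the Bhattacharya minorization applied to $\tpb^\pm$ gives $\|(S^\pm)^{2^n}f\|_{L^2}\le(1-c)\|f\|_{L^2}$ on mean-zero $f$. Because $S^\pm$ is symmetric, Cauchy--Schwarz yields
\begin{equation*}
  \|S^\pm f\|_{L^2}^2=\langle (S^\pm)^2f,f\rangle_{L^2}\le\|(S^\pm)^2f\|_{L^2}\|f\|_{L^2},
\end{equation*}
and iterating, $\|S^\pm f\|_{L^2}^{2^n}\le\|(S^\pm)^{2^n}f\|_{L^2}\,\|f\|_{L^2}^{2^n-1}$; combined with $\|L^\pm f\|_{L^2}^2=\langle S^\pm f,f\rangle_{L^2}\le\|S^\pm f\|_{L^2}\|f\|_{L^2}$, this converts the $2^n$-step estimate into the genuine one-step bound $\|L^\pm f\|_{L^2}\le(1-c)^{2^{-(n+1)}}\|f\|_{L^2}$. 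If you add this symmetrization-plus-power-trick step, your proof is complete and essentially coincides with the paper's.
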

To prove Proposition \ref{propD} we need some preliminary results.
Recalling the definition of $\pb^\pm$ from \eqref{eq:BB},
introduce continuous linear operators $L^\pm$ on $L^2(\sod , \haar)$ by
\begin{align*}
  (L^\pm f)(\rot) := \int_\sod f(\rot R) \pb^\pm(\dn R).
\end{align*}
Since for every $f,g\in L^2(\sod, \haar)$, we have that
 \begin{align*}
    \langle L^\pm f,g\rangle_{L^2}
    = \int_{\sod^2} f(\rot R)g(\rot) \pb^\pm(\dn R)\haar(\dn \rot) 
    = \int_{\sod^2 } f(\rot')g(\rot' R^T)\pb^\pm(\dn R)\haar(\dn\rot'),
  \end{align*}
it follows that the adjoint operator $(L^\pm)^* $ of $L^\pm$ is given by
\begin{align*}
  ((L^\pm)^* f)(\rot) = \int_\sod f(\rot R^T)\pb^\pm(\dn R).
\end{align*}
Consider the symmetric operator  $(L^\pm)^* L^\pm$ on $L^2(\sod , \haar)$,
which can be written as
\begin{align*}
  ((L^\pm)^* L^\pm f)(\rot) = \int_{\sod^2} f(\rot R_2^T R_1) \pb^\pm(\dn R_1)\pb^\pm(dR_2)
  = \int_{\sod} f(\rot B)\tpb^\pm(\dn B),
\end{align*}
where we define $\tpb^\pm$ as the law of the random rotation $R_2^TR_1$ for independent $R_1$, $R_2$ with distribution $\pb^\pm$ each.
It is easy to see that the powers of $(L^\pm)^* L^\pm$ admit the representations
\[
 [(L^\pm)^* L^\pm] ^n f(\rot) =\int_\sod f\big( \rot B\big) (\tpb^\pm)^{\star n}(\dn B), 
\]
where $\,^{\star n}$ denotes the n-fold convolution of a measure.
The following result is essential for the proof of Proposition \ref{propD}.
\begin{proposition}[Bhattacharya]\label{Bhatta2}
  Let  $G$  be  a  compact,  connected,  Hausdorff  group and let $\beta$ be a probability measure on $G$ such that
  $\beta$ has a nonzero absolutely continuous component with respect to the normalized Haar measure $\haar$ on $G$.
  Then there is $n\geq 1$  and $0<c\leq 1$ such that
  \begin{align}
    \label{eq:bhatta}
    \beta^{\star 2^n}(B) \geq c \haar(B)
  \end{align}
  for every measurable $B \subset G$.
\end{proposition}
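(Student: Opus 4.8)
The plan is to deduce Proposition~\ref{Bhatta2} by reducing it to a uniform lower bound for one convolution power of a bounded probability density, and then to produce that lower bound using the connectedness of $G$ via a classical product–set inequality. (One could instead simply quote the quantitative ergodic theorem for spread–out walks on compact groups from \cite{Bhattacharya}; I sketch a self–contained route.)

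First I would normalise. Using the Lebesgue decomposition of $\beta$ with respect to $\haar$ and truncating its absolutely continuous part at a large level, one obtains $a\in(0,1]$ and a probability measure $\nu$ with \emph{bounded} (hence $L^2(G,\haar)$) density $g\ge0$ such that $\beta\ge a\nu$ as measures. Since convolution is bilinear and order preserving on non–negative measures, $\beta^{\star 2^n}\ge a^{2^n}\nu^{\star 2^n}$ for every $n$; and since $\haar\star\sigma=\haar$ for every probability measure $\sigma$, a single bound $\nu^{\star N}\ge c'\haar$ automatically upgrades to $\nu^{\star N'}\ge c'\haar$ for all $N'\ge N$, in particular for $N'=2^n$ with $2^n\ge N$. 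Thus it suffices to find one pair $(N,c')$ with $\nu^{\star N}\ge c'\haar$, and then $c:=a^{2^n}c'$ works in \eqref{eq:bhatta}.

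To produce $(N,c')$, set $S:=\mathrm{supp}\,\nu$, which is compact with $\haar(S)>0$. A classical product–set inequality for compact connected groups (Kneser--Kemperman) gives $\haar(S^m)\ge\min(1,m\haar(S))$, and since $S^m$ is compact it equals $G$ as soon as $m\ge1/\haar(S)$, so $\mathrm{supp}(\nu^{\star m})=\overline{S^m}=G$. Fix such an $m\ge2$. The density $g^{\star m}$ of $\nu^{\star m}$ is continuous, because $L^2(G,\haar)\star L^2(G,\haar)\subseteq C(G)$ and one may iterate; it is non–negative and positive on the non–empty open set $U:=\{g^{\star m}>0\}$. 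Writing the density of $\nu^{\star jm}$ as a $j$–fold convolution of $g^{\star m}$ and choosing the integration variables as $y_i=u_1\cdots u_i$, a short induction shows that $g^{\star jm}>0$ on the (associative, hence unambiguous) product set $U^j$. Since $\haar(U)>0$, the same inequality gives $\haar(U^{k_0})=1$ for $k_0:=\lceil1/\haar(U)\rceil$, so $\{g^{\star k_0 m}>0\}$ has full $\haar$–measure; consequently, for every $x\in G$ the integrand in $g^{\star 2k_0 m}(x)=\int_G g^{\star k_0 m}(y)\,g^{\star k_0 m}(y^{-1}x)\,\haar(\dn y)$ is positive for $\haar$–a.e.\ $y$, whence $g^{\star 2k_0 m}(x)>0$. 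Being continuous on the compact group $G$, $g^{\star 2k_0 m}$ attains a strictly positive minimum $c'$, so $\nu^{\star N}\ge c'\haar$ with $N:=2k_0 m$, which together with the previous paragraph finishes the argument.

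The only genuinely non–elementary input is the product–set inequality $\haar(A^m)\ge\min(1,m\haar(A))$ — equivalently, that a set of positive Haar measure in a compact connected group has some finite power equal to $G$ — and this is exactly the place where connectedness of $G$ enters in an essential way; I expect this to be the main point to pin down correctly. Everything else (the truncation, the absorption identity $\haar\star\sigma=\haar$, the $C(G)$–regularity of convolutions of $L^2$ densities, and the bookkeeping with supports and product sets) is routine.
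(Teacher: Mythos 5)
Your argument is correct, and it is genuinely self-contained where the paper is not: the paper's ``proof'' of Proposition~\ref{Bhatta2} consists of quoting the proof of Theorem~3 in \cite{Bhattacharya} for the existence of a set $A$ of positive Haar measure, a constant $\bar c>0$ and an index $N_0$ with $(h\J_A)^{\star 2N_0}\geq\bar c$ pointwise, and then remarking that $N_0$ may be replaced by any larger power of two. You instead prove the key lower bound from scratch: truncation of the absolutely continuous part to get $\beta\ge a\nu$ with $\nu$ of bounded density, continuity of $L^2\star L^2$ convolutions, positivity of $g^{\star j m}$ on product sets of $U=\{g^{\star m}>0\}$, and the Kneser--Kemperman inequality $\haar_*(A^m)\ge\min(1,m\haar(A))$ on compact connected groups to see that these product sets exhaust $G$ up to measure zero. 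All the individual steps check out (the support of $\nu$ has measure at least $\|g\|_{L^\infty}^{-1}>0$; $U^{k_0}$ is open, hence measurable, so the inner-measure caveat in Kneser--Kemperman is harmless; the a.e.-positivity of $g^{\star k_0m}$ does yield strict positivity of $g^{\star 2k_0m}$ everywhere by unimodularity). Your absorption step, $\nu^{\star N}\ge c'\haar\Rightarrow\nu^{\star N'}\ge c'\haar$ for all $N'\ge N$ via $\haar\star\sigma=\haar$, is in fact slightly sharper than the paper's corresponding remark, since it shows the constant need not be diminished when passing to higher convolution powers (the only loss being the explicit factor $a^{2^n}$ from the truncation). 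The trade-off is clear: the paper's route is a one-line citation but opaque, while yours replaces the reference to \cite{Bhattacharya} by the (nontrivial, but classical and correctly attributed) product-set inequality, which is precisely where connectedness of $G$ enters; for the application in Section~\ref{S:walk}, where $G=\sod$ is a compact connected Lie group, either route is fully adequate.
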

Actually, in the proof of Theorem 3 in \cite{Bhattacharya} it is shown that there are 
a set $A\subseteq G$ of positive Haar measure, a positive number $\bar c>0$ and an index $N_0\in\N$ such that, for every $g$ in $G$,
\[
(h\J_A)^{\star 2N_0}(g) \geq \bar c 
\]
where $h$ denotes the density of the absolutely continuous component of $\beta$, 
and $\star$ is the convolution of functions. 
Here clearly $N_0$ can be replaced by any power of two that is larger or equal,
at the possible expense of diminishing $\bar c$ to another (still positive) constant $c$.
This obviously implies our assertion \eqref{eq:bhatta}.
\begin{lemma}\label{Lemmapower}  
  There are $\tilde\kappa^\pm<1$ and $n\geq 1$ such that
  \begin{align*}
    \|[(L^\pm)^* L^\pm]^{2^n}f\|_{L^2} \le \tilde\kappa^\pm \|f\|_{L^2}\quad
    \text{for every $f\in L^2(\sod)$ with}\quad \int_{\sod}f(\rota)\haar(\dn\rota)=0.
  \end{align*}
\end{lemma}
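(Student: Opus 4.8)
The plan is to exploit the operator $[(L^\pm)^* L^\pm]^{2^n}$ being an averaging operator against the convolution power $(\tpb^\pm)^{\star 2^n}$, together with the lower bound on convolution powers supplied by Proposition \ref{Bhatta2} (Bhattacharya). The first point to establish is that $\tpb^\pm$ inherits a nonzero absolutely continuous component with respect to $\haar$: since $R_1$ has a nontrivial a.c.\ component (this is (H3), via Lemma \ref{haar-a.c.} in the Maxwell case, but here it is just (H3) for the abstract model), and $R_2^T R_1$ is obtained from $R_1$ by left multiplication with the independent rotation $R_2^T$, the law of $R_2^T R_1$ is a mixture over $R_2$ of left-translates of the law of $R_1$; each left-translate of an a.c.\ measure is a.c.\ (since $\haar$ is left invariant), so $\tpb^\pm$ has a nonzero a.c.\ component. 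Hence Proposition \ref{Bhatta2} applies to $G=\sod$ (which is compact, connected, Hausdorff) and $\beta=\tpb^\pm$: there are $n\ge 1$ and $0<c\le 1$ with $(\tpb^\pm)^{\star 2^n}(B)\ge c\,\haar(B)$ for all measurable $B$.

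Next I would write, using the representation $[(L^\pm)^* L^\pm]^{2^n} f(\rota) = \int_\sod f(\rota B)\,(\tpb^\pm)^{\star 2^n}(\dn B)$ displayed just above the lemma, the operator in the split form
\begin{align*}
  [(L^\pm)^* L^\pm]^{2^n} f(\rota)
  = c\int_\sod f(\rota B)\,\haar(\dn B) + (1-c)\int_\sod f(\rota B)\,\tilde\rho^\pm(\dn B),
\end{align*}
where $\tilde\rho^\pm := (1-c)^{-1}\big((\tpb^\pm)^{\star 2^n} - c\,\haar\big)$ is a genuine probability measure by \eqref{eq:bhatta}. By right invariance of $\haar$, the first term is $c\int_\sod f(B)\,\haar(\dn B)$, which vanishes for every $f$ with $\int_\sod f\,\dn\haar = 0$. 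For the second term, note that the map $f\mapsto \int_\sod f(\rota B)\,\tilde\rho^\pm(\dn B)$ is an average of the isometries $f\mapsto f(\cdot\, B)$ on $L^2(\sod,\haar)$ (right translations preserve $\|\cdot\|_{L^2}$ since $\haar$ is right invariant), hence has operator norm at most $1$. Therefore, for $f$ with zero mean,
\begin{align*}
  \|[(L^\pm)^* L^\pm]^{2^n} f\|_{L^2} = (1-c)\Big\|\int_\sod f(\cdot\, B)\,\tilde\rho^\pm(\dn B)\Big\|_{L^2} \le (1-c)\|f\|_{L^2}.
\end{align*}
Setting $\tilde\kappa^\pm := 1-c \in (0,1)$ (with the same $n$ for both signs, by taking the larger of the two exponents produced by Bhattacharya and, if necessary, re-applying the estimate, or simply choosing $n$ to work for both) gives exactly the claimed bound.

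The main subtlety — rather than a deep obstacle — is the verification that $\tpb^\pm$ has a nonzero a.c.\ component; everything else is the standard Doeblin-type minorization argument dressed up on a compact group. One should also be slightly careful that the constant $c$ (equivalently $\tilde\kappa^\pm$) and the index $n$ may differ between the $+$ and $-$ cases: this is harmless, since one may pass to a common $n$ by enlarging the exponent (a convolution power of a measure dominating $c\haar$ still dominates $c\haar$, as $\haar$ is invariant under convolution with any probability measure), at worst replacing $c$ by a smaller positive constant. I would also remark that the zero-mean subspace is invariant under $L^\pm$ and hence under $[(L^\pm)^*L^\pm]^{2^n}$, which is implicit in the computation above and is what makes the restriction to zero-mean $f$ consistent.
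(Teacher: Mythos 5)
Your proof is correct and takes essentially the same route as the paper's: both decompose $(\tpb^\pm)^{\star 2^n}=c\,\haar+(1-c)\Gamma$ via Proposition \ref{Bhatta2}, observe that the Haar part annihilates mean-zero $f$, and bound the residual averaging operator by $1$ (the paper via Jensen's inequality and right-invariance of $\haar$, you by viewing it as an average of right-translation isometries --- the same estimate). The only cosmetic differences are that you spell out why $\tpb^\pm$ inherits a nonzero absolutely continuous component from (H3), which the paper merely asserts, and that the paper separately disposes of the trivial degenerate case $(\tpb^\pm)^{\star 2^n}=\haar$ (i.e.\ $c=1$), where your normalization of $\tilde\rho^\pm$ would formally break down.
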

\begin{proof} 
  We follow the lines of the proof of Theorem 2 in \cite{Bhattacharya}. 
 Assumption (H3) implies that the probability measures $\tpb^\pm$s have  nonzero absolutely continuous component with respect to the Haar measure.
  Hence we can apply Lemma \ref{Bhatta2}. 
  If $(\tpb^\pm)^{\star 2^n}  =\haar$ then $\|[(L^\pm)^* L^\pm)]^{2^n} f \|_{L^2}=0$, and there is nothing to be proved.
  If instead $(\tpb^\pm)^{\star 2^n} \not =\haar$, then $c<1$ in \eqref{eq:bhatta},
  and hence one can write
  \[
  (\tpb^\pm)^{\star 2^n}=[(1-c)\Gamma+c\haar],
  \]
  where $\Gamma=(1-c)^{-1}((\tpb^\pm)^{\star 2^n} -c\haar)$ is a probability measure on $\sod$. 
  Since $f$ is such that $\int f(\CO)\haar(d\CO)=0$, then, using also Jensen inequality,
 \begin{align*}
    \|[(L^\pm)^*L^\pm)]^{2^n} f \|_{L^2}^2 
    & =\int \Big ( \int f(\rota B) (\tpb^\pm)^{\star 2^n}(\dn B)  \Big)^2  \haar(\dn \rota) \\
    & =(1-c)^{2} \int \Big ( \int f(\rota B)\Gamma (\dn B)  \Big)^2 \haar (\dn\rota) \\
    & \leq (1-c)^{2} \int \int f(\rota B)^2 \haar (\dn\rota) \Gamma(\dn B) 
    =  (1-c)^{2} \|f \|_{L^2}^2.
  \end{align*}
  This shows the desired inequality, with $\tilde\kappa^\pm=  (1-c)  < 1$.
\end{proof}
\begin{proof}[Proof of Propostion \ref{propD}]
  Observe that
  \begin{align*}
    \|(L^\pm)^* L^\pm f\|_{L^2}^2= \langle [(L^\pm)^* L^\pm]^2 f,f\rangle_{L^2} \le \|[(L^\pm)^* L^\pm ]^2f\|_{L^2}\|f\|_{L^2}
  \end{align*}
  by the symmetry of $(L^\pm)^* L^\pm$.
  Similarly, for every $m\ge0$, we have
  \begin{align*}
    \big\|[(L^\pm)^* L^\pm]^{2^m} f\big\|_{L^2}^2 \le \big\|[(L^\pm)^* L^\pm]^{2^{m+1}} f\big\|_{L^2}^2\|f\|_{L^2},
  \end{align*}
  and iteration of these estimates leads to
  \begin{align*}
     \big\|(L^\pm)^* L^\pm f \big\|_{L^2}^{2^n} \le \big\|[(L^\pm)^* L^\pm]^{2^n}f \big\|_{L^2}\|f\|_{L^2}^{2^n-1}
  \end{align*}
  for arbitrary $n\ge0$.
  We combine this estimate with
 \begin{align*}
    \|L^\pm f\|_{L^2}^2 =  \langle (L^\pm)^* L^\pm f,f\rangle_{L^2} \le \|(L^\pm)^* L^\pm f\|_{L^2}\|f\|_{L^2}
  \end{align*}
  to obtain
  \begin{align*}
    \|L^\pm f\|_{L^2}^{2^{n+1}} \leq \|(L^\pm)^* L^\pm f\|_{L^2}^{2^n}\|f\|^{2^n}_{L^2}\leq \|[(L^\pm)^* L^\pm]^{2^n}f\|_{L^2}\|f\|_{L^2}^{2^n-1}\|f\|^{2^n}_{L^2}.
  \end{align*}
  Thus, by Lemma \ref{Lemmapower}, we arrive at
  \begin{align*}
    \|L^\pm f\|_{L^2}^{2^{n+1}} \le \tilde\kappa^\pm \|f\|_{L^2}^{2^{n+1}}.
  \end{align*}
  Taking the $2^{n+1}$th root, the hypothesis follows
  with $\kappa^\pm:=(\tilde\kappa^\pm)^{1/2^{n+1}}<1$.
\end{proof}
\begin{remark}
Note that in order to prove {\rm Proposition \ref{propD}} one only need the assumption
\begin{itemize}
\item[(H3')]
  \textit{The probability measures $\tilde \pb^\pm $ are non-singular with respect to the Haar measure, i.e.
    they have  a non-trivial absolutely continuous component with respect to $\haar$.} 
\end{itemize}
As a consequence, Theorem \ref{thm.main} holds under the weaker assumption {\rm (H3')} instead of {\rm (H3)}.
\end{remark}

\subsection{Convergence of Fourier transforms}
Out of the Fourier distance $d_\gamma$, we define yet another distance on $\CP(C^0[\sod])$ by
\[
D_{\gamma,a}(\nu',\nu'') := d_\gamma(\nu',\nu'')+ a \| \E[\Psi'-\Psi'']  \|_{L^2}
\]
where $\Psi'$, $\Psi''$ have law $\nu'$, $\nu''$, respectively.
Here $a$ is a positive constant to be determined later. 
Clearly, this distance satisfies the convexity inequality
\begin{equation}\label{convexity2}
 D_{\gamma,a} \Big (\frac1n\sum_{i=1}^n \mu_i',\frac1n\sum_{i=1}^n \mu_i'' \Big ) \leq \frac1n\sum_{i=1}^n D_{\gamma,a}(\mu_i', \mu_i'').
\end{equation}
\begin{proposition}\label{Dnto0} %
  $D_{\gamma,a}(\nu_n,\nu_\infty) \to 0$ as $n\to \infty$ for an appropriate choice of $\gamma>1$, $a>0$.
\end{proposition}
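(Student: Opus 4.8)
The plan is to establish that the transformation $T$ is a strict contraction of the distance $D_{\gamma,a}$ towards its fixed point $\nu_\infty$, for a carefully chosen pair $\gamma\in(1,2)$, $a>0$, and then to conclude by feeding this one-step estimate into the averaged recursion \eqref{eq:10}, using the convexity bound \eqref{convexity2} and an elementary Gronwall-type lemma.

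\emph{Calibrating the parameters.} First I would fix $\gamma$. By (H2) there is some $\bar\gamma>1$ with $\CS(\alpha\bar\gamma)<0$, hence $\CS(\alpha\gamma)<0$ for all $\gamma\in(1,\bar\gamma]$ by convexity of $\CS$; moreover $\gamma\mapsto\CS(\alpha\gamma)$ is continuous at $\gamma=1$ (convexity of $\CS$ and finiteness at $\alpha\bar\gamma$). Since also $\kappa^\pm<\E[(r^\pm)^\alpha]$ by Proposition \ref{propD}, one can take $\gamma\in(1,2)$ close enough to $1$ that
\[
q:=\E[(r^-)^{\alpha\gamma}]+\E[(r^+)^{\alpha\gamma}]=1+\CS(\alpha\gamma)<1
\quad\text{and}\quad
p_\pm:=\E[(r^\pm)^{\alpha\gamma}]>\kappa^\pm .
\]
For such $\gamma$, Lemma \ref{Lemmaweights}(ii) gives $\E[(M_\infty^{(\alpha)})^{\gamma}]<\infty$, and then the elementary bound $|e^{ix}-1-ix|\le2|x|^{\gamma}$ (valid for $\gamma\in[1,2]$), together with Cauchy--Schwarz and $\E[M_\infty^{(\alpha)}]=1$, yields $d_\gamma(\nu_0,\nu_\infty)\le2\|\Psi_0\|_{L^2}^{\gamma}+2|m_0|^{\gamma}\E[(M_\infty^{(\alpha)})^{\gamma}]<\infty$, so that $D_{\gamma,a}(\nu_0,\nu_\infty)<\infty$ for every $a>0$.

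\emph{One-step contraction.} Next I would estimate $D_{\gamma,a}(T[\nu',\nu''],T[\nu_\infty,\nu_\infty])$ for any $\nu',\nu''$ whose barycenters satisfy $\int_\sod\E[\Psi']\,\haar=\int_\sod\E[\Psi'']\,\haar=m_0$ (true for all $\nu_n$ by \eqref{eq:m0const} and for $\nu_\infty$ by \eqref{nuinftyfixT}). For the $d_\gamma$-part I apply Lemma \ref{lemmacontraction} with $\nu_1'=\nu'$, $\nu_1''=\nu''$, $\nu_2'=\nu_2''=\nu_\infty$; the constant $\max\{2,\E[\|\Psi''\|_{L^2}],\E[\|\Psi_\infty\|_{L^2}]\}$ occurring there is $\le\bar C:=\max\{2,\|\Psi_0\|_{L^\infty}\}$ uniformly, by \eqref{boundL2normPsi} and $\E[\|\Psi_\infty\|_{L^2}]=|m_0|\le\|\Psi_0\|_{L^\infty}$. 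For the $L^2$-of-barycenters part, note that the barycenter of $T[\nu',\nu'']$ is $\Lambda^-\E[\Psi']+\Lambda^+\E[\Psi'']$ with $\Lambda^\pm f:=\E[(r^\pm)^\alpha(R^\pm)^{\#}f]$; subtracting that of $T[\nu_\infty,\nu_\infty]$ gives $\Lambda^-(\E[\Psi']-m_0)+\Lambda^+(\E[\Psi'']-m_0)$, to which Proposition \ref{propD} applies since $\E[\Psi']-m_0$ and $\E[\Psi'']-m_0$ have zero Haar-integral. Collecting terms,
\[
D_{\gamma,a}(T[\nu',\nu''],T[\nu_\infty,\nu_\infty])
\le p_-\,d_\gamma(\nu',\nu_\infty)+p_+\,d_\gamma(\nu'',\nu_\infty)
+(\bar Cq+a\kappa^-)\|\E[\Psi']-m_0\|_{L^2}+(\bar Cq+a\kappa^+)\|\E[\Psi'']-m_0\|_{L^2}.
\]
Choosing $a$ so large that $\bar Cq+a\kappa^\pm\le a\,p_\pm$ (possible because $p_\pm>\kappa^\pm$) turns this into $D_{\gamma,a}(T[\nu',\nu''],T[\nu_\infty,\nu_\infty])\le p_-\,D_{\gamma,a}(\nu',\nu_\infty)+p_+\,D_{\gamma,a}(\nu'',\nu_\infty)$.

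\emph{Iteration and conclusion.} Plugging $\nu'=\nu_{k-1}$, $\nu''=\nu_{n-k}$ into \eqref{eq:10} and using the convexity inequality \eqref{convexity2} gives $D_{\gamma,a}(\nu_n,\nu_\infty)\le\frac{q}{n}\sum_{k=0}^{n-1}D_{\gamma,a}(\nu_k,\nu_\infty)$; since $D_{\gamma,a}(\nu_0,\nu_\infty)<\infty$ and $q<1$, induction shows $D_{\gamma,a}(\nu_n,\nu_\infty)\le D_{\gamma,a}(\nu_0,\nu_\infty)$ for all $n$, and then the elementary fact that a sequence $a_n\ge0$ of finite numbers with $a_n\le\frac qn\sum_{k<n}a_k$ and $q<1$ must tend to $0$ (setting $S_n=\sum_{k<n}a_k$ one gets $S_{n+1}\le(1+q/n)S_n$, hence $S_n=O(n^{q})$ and $a_n\le\frac qn S_n=O(n^{q-1})$) finishes the proof. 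The delicate point is precisely this calibration: $\gamma$ must at once lie in $(1,2)$ (so Lemma \ref{lemmacontraction} applies), be small enough that $\CS(\alpha\gamma)<0$ (so that $q<1$ and $M_\infty^{(\alpha)}$ has a finite $\gamma$-th moment), and be close enough to $1$ that $\E[(r^\pm)^{\alpha\gamma}]$ still strictly exceeds $\kappa^\pm$ — the latter being what allows a sufficiently large $a$ to absorb the non-contractive remainder term stemming from the $L^2$-distance of barycenters. Everything else is routine bookkeeping with the estimates already established.
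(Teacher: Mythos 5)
Your proof is correct and follows essentially the same route as the paper: the recursion \eqref{eq:10} combined with the convexity inequality \eqref{convexity2}, Lemma \ref{lemmacontraction} and Proposition \ref{propD} yields the averaged contraction estimate, and the finiteness of $D_{\gamma,a}(\nu_0,\nu_\infty)$ together with the elementary Ces\`aro-type argument (which you spell out more explicitly than the paper does) gives the convergence. The only difference is in the calibration of constants: your requirement that $\E[(r^\pm)^{\alpha\gamma}]>\kappa^\pm$ individually (forcing $\gamma$ close to $1$) is harmless but not needed --- the paper simply takes $a$ large enough that $\kappa^-+\kappa^++2C'/a<1$ and sets $\lambda=\max\{\lambda_\gamma,\kappa^-+\kappa^++2C'/a\}$, which works for any $\gamma\in(1,2)$ with $\E[(r^-)^{\alpha\gamma}+(r^+)^{\alpha\gamma}]<1$.
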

\begin{proof} 
  We are going to show that
  \begin{align}
    \label{eq:D2zero}
    D_{\gamma,a}(\nu_n,\nu_\infty) \leq \frac{\lambda}{n} \sum_{k=1}^n D_{\gamma,a}(\nu_{k-1},\nu_\infty)
  \end{align}
  with some $\lambda\in(0,1)$ for all $n\ge1$.
  This implies that $D_{\gamma,a}(\nu_n,\nu_\infty)\to0$ provided that
\begin{equation}\label{finite}
D_{\gamma,a}(\nu_0,\nu_\infty)<+\infty.
\end{equation}
Applying  triangular inequality and (i) in Lemma \ref{Lemmaweights} one gets
\[
 \| \E[\Psi_0-\Psi_\infty]  \|_{L^2} \leq \|\Psi_0\|_\infty + m_0.
\]
Moreover, since $|e^{ix}-1-ix| \leq C_\gamma |x|^\gamma$ for every $\gamma \in [1,2]$ (see e.g. Lemma 1, section 8.4 in \cite{ChowTeicher}),
\[
 d_{\gamma}(\nu_0,\nu_\infty)  \leq
C_\gamma \Big \{ \sup_{g : \|g\|_{L^2} \not=0}
\frac{1}{\|g\|_{L^2}^\gamma} \E[|\spr{g}{\Psi_0} |^\gamma+|\spr{g}{\Psi_\infty}|^\gamma] \Big \}.
\]
By the Cauchy-Schwarz inequality
\[
 d_{\gamma}(\nu_0,\nu_\infty) \leq \|\Psi_0\|_{L^2}^\gamma+ m_0^\gamma \E[(M_\infty^{(\a)})^\gamma].
\]
The last term is finite by (ii) of Lemma \ref{Lemmaweights} since in view of (H2) there exists some $\gamma \in (1,2]$ such that $\E[(r^-)^{\alpha\gamma}]+\E[(r^+)^{\alpha\gamma}]<1$. Now substitute \eqref{eq:10} and  \eqref{nuinftyfixT} into \eqref{convexity2} to obtain
  \[
  D_{\gamma,a}(\nu_n,\nu_\infty) \leq \frac{1}{n} \sum_{k=1}^n D_{\gamma,a}( T(\nu_{k-1},\nu_{n-k}), T(\nu_\infty,\nu_\infty)).
  \]
  Using the definitions of $T$ and of $D_{\gamma,a}$, 
  the terms on the right-hand side can be estimated as follows:
  \begin{align}
    \label{eq:Dhelp}
    \begin{split}
      D_{\gamma,a}( T(\nu_{k-1} ,\nu_{n-k}), T(\nu_\infty,\nu_\infty))  
      &\leq  d_\gamma( T(\nu_{k-1} ,\nu_{n-k}), T(\nu_\infty,\nu_\infty))  \\
      & + a\Big (\|\E[(r^-)^\alpha (R^-)^\star \Delta_{k-1} \|_{L^2}  + \|\E[(r^+)^\alpha (R^+)^\star \Delta_{n-k} \|_{L^2} \Big)
    \end{split}
  \end{align}
  where $\Delta_k:=\E[\Psi_k-\Psi_\infty]\in L^2(\sod;\haar)$ satisfies
  \begin{align*}
    \int_\sod \Delta_k(\CO) \haar(\dn \CO)=0 \quad \text{for every $k$},
  \end{align*}
  thanks to \eqref{eq:m0const}. 
  Hence Proposition \ref{propD} is applicable to estimate the last term on the right-hand side in \eqref{eq:Dhelp}.
  In combination with an estimate of the first term by means of Lemma \ref{lemmacontraction} -- which applies because of \eqref{boundL2normPsi} -- we arrive at 
 \[
  D_{\gamma,a}(\nu_n,\nu_\infty)   \leq \frac{1}{n} \sum_{k=1}^n\Big [ \lambda_\gamma d_\gamma(\nu_{k-1},\nu_\infty)
  + [a(\kappa^-+\kappa^+)+2C'] \|\E[\Psi_{k-1}-\Psi_\infty ]\|_{L^2} \Big ]
  \]
  with $\lambda_\gamma:=\E[(r^-)^{\alpha\gamma}]+\E[(r^+)^{\alpha\gamma}]<1$ and $C':=\max\{2,\|\Psi_0\|_\infty\}\E[(r^-)^{\alpha\gamma}+(r^+)^{\alpha\gamma}]$.
  Further, recalling that $\kappa^-+\kappa^+<\E[(r^-)^{\alpha}]+\E[(r^+)^{\alpha}]=1$,
  we can choose $a>0$ such that $a(\kappa^-+\kappa^+)+2C' < a$.
  Thus we have shown \eqref{eq:D2zero}, with
  \[
  \lambda:=\max\{\lambda_\gamma,\kappa^-+\kappa^+ +2C'/a\}<1.
  \qedhere
  \]
\end{proof}

\subsection{Proof of Proposition \ref{convPsin}}
By Proposition \ref{Dnto0} one gets
\begin{equation}\label{charfunctL2}
  \hat \nu_n(g) \to \hat \nu_\infty(g)
\end{equation}
for every $g$ in $L^2(\sod;\haar)$.
According to Lemma \ref{tightnessPsin}, $(\Psi_n)_n$ is a tight sequence in $\czero$. 
Assume that a subsequence $\Psi_{n'}$ converges weakly in $\czero$ to a limit $Y$.
Since $ f \mapsto \exp\{ i \spr{g}{f}  \}$ is a continuous function on $\czero$ for any $g$ in $L^2$, 
one gets that $\hat \nu_{n'}(g) \to \E [e^{ i \spr{g}{Y}}]$, and hence
\[
\E[e^{ i \spr{g}{\Psi_\infty}}]=\E [e^{ i \spr{g}{Y}}]
\]
for every $g$ in $L^2$.
Using the previous identity it is easy to see that the finite dimensional law of $Y$ and $\Psi_\infty$ are the same and hence they have the same distribution as processes (see, e.g., Proposition 3.2 \cite{Kallenberg}).
The last part of the proof follows by the continuous mapping theorem, since point evaluation is a continuous functional on $C^0[\sod]$.

\section{Proof of the main theorem}\label{sec.proof}

\subsection{Preliminary weak convergence results}

Recall that we deal with initial conditions $\mu_0$ belonging to the NDA of a (full) $\alpha$-stable law with
L\'evy measure $\phi$. Let
$X_0$ be a random variable with probability distribution $\mu_0$.
Moreover, for every $x, u  \in \setR^d$, set $F_0(x,u)=\P\{u \cdot X_0 \leq x\}$, $ F_0(x^-,u)=\lim_{y \uparrow x}F_0(y,u)$ and
\[
B_x=\{y \in \setR^d: x \cdot y > 1\}.
\]

Let $\CB_n$ denote the $\s\!-\!$field generate by the $\ww_{j,n}$'s and $O_{j,n}$, i.e.
\[
 \CB_{n}=\sigma(O_{j,n},\ww_{j,n}: j=1,\dots,n+1).
\]
Moreover, given any
$\CO \in \sod$,
write
\[
\be_{j,n}:=\CO O_{j,n} \eed  \qquad \qquad j=1,\dots,n+1
\]
and, for every
$y>0$, define
\[
\begin{split}
Q_{1,n}(y)&:=
\frac{1}{y^{\alpha}} \sum_{j=1}^{n+1} P\Big \{ \ww_{j,n} \be_{j,n} \cdot X_0 \geq 1/y \Big |\CB_n \Big \}=\frac{1}{y^\a}\sum_{j=1}^{n+1}
\Big [1-F_0\Big (\Big(\frac{1}{y\ww_{j,n}}\Big)^-, \be_{j,n}\Big)\Big ]\\
Q_{2,n}(y)&:=
\frac{1}{y^{\alpha}}  \sum_{j=1}^{n+1} P\Big \{  \ww_{j,n} \be_{j,n} \cdot X_0 \leq -1/y \Big |\CB_n \Big \}=\frac{1}{y^{\alpha}}\sum_{j=1}^{n+1}F_0\Big(-\frac{1}{y\ww_{j,n}}, \be_{j,n}\Big). \\
\end{split}
\]
Observe that
by Lemma \ref{lemma2} in Appendix  it follows that
\[
 \lim_{y \downarrow 0} Q_{1,n}(y)=\sum_{j=1}^{n+1}  \ww_{j,n}^\alpha \phi(B_{\be_{j,n}}) \quad \text{and} \quad
\lim_{y \downarrow 0} Q_{2,n}(y)=\sum_{j=1}^{n+1}  \ww_{j,n}^\alpha \phi(B_{-\be_{j,n}}).
\]
Hence setting
\[
 (Q_{1,n}(0),Q_{2,n}(0)):=(\sum_{j=1}^{n+1}  \ww_{j,n}^\alpha \phi(B_{\be_{j,n}}),\sum_{j=1}^{n+1}  \ww_{j,n}^\alpha \phi(B_{-\be_{j,n}})),
\]
the random function $y \mapsto (Q_{1,n}(y),Q_{2,n}(y))$ is a {\it c\`adl\`ag} (i.e. right continuos with left-hand limits) function from $[0,+\infty)$ to $\setR^2$.
Since, clearly, all the finite dimensional components are measurable,  $(Q_{1,n},Q_{2,n})$ can be seen as process
taking values in the space $\D(\setR_+,\RE^2)$ of c\`adl\`ag functions with the Skorohod topology (see, e.g., \cite{J-S} and  Thm. 4.5 in \cite{Billi}).
Furthermore, given any $\gamma_0\in\RE^d$ and $\CO \in \sod$, 
define
\[
 Q_{3,n}:=\sum_{j=1}^{n+1}  \ww_{j,n}^\alpha \CO O_{j,n}\ee_d\cdot\gamma_0.
\]
Since $Q_{3,n}$ and $M_n^{(\alpha)}$ can be seen as constant random functions (w.r.t. $y$), then $(M_n^{(\alpha)},Q_{1,n},Q_{2,n},Q_{3,n})$ is a process taking values in
$\D(\setR_+,\RE^4)$.

\begin{proposition}\label{conv-law} Assume {\rm (H1)-(H3)}.
 The sequence of processes  $(M_n^{(\alpha)}, Q_{1,n},Q_{2,n}, Q_{3,n})_{n\geq 1}$ converges in law in $\D(\setR_+,\RE^4)$ to 
the constant process $(M_\infty^{(\alpha)},c M_\infty^{(\alpha)},c M_\infty^{(\alpha)},0)$
where
\begin{equation}\label{def:c}
 c:=\int_{\sphere}\int_{\{y: y\cdot s > 1\}} \phi(dy)  \uuu(ds).
\end{equation}
\end{proposition}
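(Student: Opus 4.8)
The plan is to combine Proposition~\ref{convPsin} with the (one-dimensional) properties of the normal domain of attraction recorded in Lemma~\ref{lemma2}, exploiting the fact that $\beta_{(n)}\to0$ in probability (Lemma~\ref{Lemmaweights}(iii)) to upgrade the convergence of the one-point values at $y=0$ to convergence of the whole c\`adl\`ag paths. Set, for $t>0$ and $u\in\sphere$,
\[
g_+(t,u):=t^\alpha\big[1-F_0(t^-,u)\big],\qquad g_-(t,u):=t^\alpha F_0(-t,u),
\]
so that for $y>0$ one has $Q_{1,n}(y)=\sum_{j=1}^{n+1}\beta_{j,n}^\alpha\,g_+\!\big(1/(y\beta_{j,n}),\be_{j,n}\big)$ and $Q_{2,n}(y)=\sum_{j=1}^{n+1}\beta_{j,n}^\alpha\,g_-\!\big(1/(y\beta_{j,n}),\be_{j,n}\big)$, while by definition $Q_{1,n}(0)=\sum_j\beta_{j,n}^\alpha\phi(B_{\be_{j,n}})$ and $Q_{2,n}(0)=\sum_j\beta_{j,n}^\alpha\phi(B_{-\be_{j,n}})$. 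Since $\mu_0$ belongs to the NDA of a \emph{full} $\alpha$-stable law, the tail asymptotics are uniform over directions: $\sup_{t>0,\,u\in\sphere}\big(g_+(t,u)+g_-(t,u)\big)<\infty$, the maps $u\mapsto\phi(B_{\pm u})$ are continuous on $\sphere$, and
\[
\rho_\pm(s):=\sup_{t\ge s}\ \sup_{u\in\sphere}\big|g_\pm(t,u)-\phi(B_{\pm u})\big|\ \xrightarrow[s\to\infty]{}\ 0 .
\]
(The pointwise-in-$u$ statement is Lemma~\ref{lemma2}; the uniformity over $\sphere$ follows from the same computation together with finiteness of the spectral measure of $\phi$ and the fact that it charges no hyperplane through the origin.) In particular $\rota\mapsto\phi(B_{\rota\eed})$ lies in $\czero$.

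\emph{Convergence of the $y=0$ components.} By Lemma~\ref{Lemmaweights}(ii), $M_n^{(\alpha)}\to M_\infty^{(\alpha)}$ almost surely. Apply Proposition~\ref{convPsin} to the three continuous test functions $\Psi_0^{(1)}(\rota):=\phi(B_{\rota\eed})-c$, $\Psi_0^{(2)}(\rota):=\phi(B_{-\rota\eed})-c$ and $\Psi_0^{(3)}(\rota):=(\rota\eed)\cdot\gamma_0$. Using \eqref{unif-on-sphere} (so that $\rota\eed$ is $\uuu$-distributed when $\rota$ has law $\haar$), the invariance of $\uuu$ under $\sigma\mapsto-\sigma$, the definition \eqref{def:c} of $c$, and $\int_{\sphere}\sigma\,\uuu(\dn\sigma)=0$, one checks that $\int_\sod\Psi_0^{(i)}\dd\haar=0$ for $i=1,2,3$; hence the constants $m_0$ produced by Proposition~\ref{convPsin} all vanish. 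Evaluating the sums in Proposition~\ref{convPsin} at the rotation $\CO$ entering $\be_{j,n}=\CO O_{j,n}\eed$, and recalling that weak convergence to a constant is convergence in probability, we obtain
\[
Q_{1,n}(0)-c\,M_n^{(\alpha)}\ \xrightarrow{\ \P\ }\ 0,\qquad Q_{2,n}(0)-c\,M_n^{(\alpha)}\ \xrightarrow{\ \P\ }\ 0,\qquad Q_{3,n}\ \xrightarrow{\ \P\ }\ 0 .
\]
Combined with $M_n^{(\alpha)}\to M_\infty^{(\alpha)}$ a.s., this gives $Q_{1,n}(0),\,Q_{2,n}(0)\to c\,M_\infty^{(\alpha)}$ and $Q_{3,n}\to0$ in probability.

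\emph{Uniform control in $y$ and conclusion.} Fix $T>0$. For $y\in(0,T]$ and every $j$ with $\beta_{j,n}>0$ one has $1/(y\beta_{j,n})\ge 1/(T\beta_{(n)})$, hence $\big|g_+(1/(y\beta_{j,n}),\be_{j,n})-\phi(B_{\be_{j,n}})\big|\le\rho_+\!\big(1/(T\beta_{(n)})\big)$; terms with $\beta_{j,n}=0$ contribute nothing. Therefore
\[
\sup_{y\in[0,T]}\big|Q_{1,n}(y)-Q_{1,n}(0)\big|\ \le\ M_n^{(\alpha)}\,\rho_+\!\big(1/(T\beta_{(n)})\big),
\]
and likewise for $Q_{2,n}$ with $\rho_-$. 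Since $M_n^{(\alpha)}$ is bounded in probability (it converges a.s.) and $\beta_{(n)}\to0$ in probability, the right-hand side tends to $0$ in probability, so $\sup_{y\in[0,T]}|Q_{i,n}(y)-Q_{i,n}(0)|\to0$ in probability for $i=1,2$. Together with the previous step this yields, for every $T>0$,
\[
\sup_{y\in[0,T]}\Big\|\big(M_n^{(\alpha)},Q_{1,n}(y),Q_{2,n}(y),Q_{3,n}\big)-\big(M_\infty^{(\alpha)},c\,M_\infty^{(\alpha)},c\,M_\infty^{(\alpha)},0\big)\Big\|\ \xrightarrow{\ \P\ }\ 0 ,
\]
i.e.\ the paths converge locally uniformly, in probability, to the constant (hence continuous) path $(M_\infty^{(\alpha)},c\,M_\infty^{(\alpha)},c\,M_\infty^{(\alpha)},0)$. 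Since the limit path is continuous, local uniform convergence is equivalent to convergence in the Skorohod topology on $\D(\setR_+,\setR^4)$, and convergence in probability implies convergence in law; this is the claim.

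\emph{Main obstacle.} The only genuinely delicate point is the \emph{uniformity over directions} of the tail asymptotics, $\rho_\pm(s)\to0$, which is the multivariate counterpart of Lemma~\ref{lemma2} and is precisely where the hypothesis that $\mu_0$ lies in the NDA of a \emph{full} (rather than merely one-dimensional) $\alpha$-stable law enters; once this is available, the rest is a routine assembly of Proposition~\ref{convPsin} and Lemma~\ref{Lemmaweights}.
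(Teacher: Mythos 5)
Your proof is correct and follows essentially the same route as the paper: Proposition~\ref{convPsin} applied to the continuous functions $\rota\mapsto\phi(B_{\pm\rota\eed})$ and $\rota\mapsto(\rota\eed)\cdot\gamma_0$ handles the $y=0$ values, and the uniform tail estimate of Lemma~\ref{lemma2} combined with $\beta_{(n)}\to0$ controls $\sup_{y\le T}|Q_{i,n}(y)-Q_{i,n}(0)|$ exactly as in the paper's bound on $R_{i,n}$. The only (harmless) variation is that you center the test functions so that $m_0=0$ and obtain convergence in probability componentwise, whereas the paper takes linear combinations $(t_0,t_1,t_2,t_3)$ (Cram\'er--Wold) and invokes Lemma 3.31, Ch.~VI of \cite{J-S}; also note that the uniformity over $u\in\sphere$ you flag as the delicate point is already the literal content of Lemma~\ref{lemma2}, so no extra argument is needed there.
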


\begin{proof}
First of all note that the functions
\(  \CO \mapsto \int_{\{y:  y \cdot \CO \eed  >1\}} \phi(dy)
\) and \(  \CO \mapsto \int_{\{y:  y \cdot \CO \eed  <-1\}} \phi(dy)
\)
are uniformly continuous on \(\sod \). Indeed, we know from Lemma \ref{lemma1bis} in Appendix  that
\(
 x \mapsto \int_{\{y:  y \cdot x  >1\}} \phi(dy)
\)
is continuous in $\setR^d\setminus \{0\}$. Hence it is uniformly continuous on $\sphere$ and the continuity of  $\CO \mapsto \CO \eed $ and $\CO \mapsto -\CO \eed $ entails the claim.

Now write for $i=1,2$
\[
Q_{i,n}(y)=Q_{i,n}(0)+R_{i,n}(y),
\]
and observe that, for $0<y \leq \delta$, one has
\[
\begin{split}
|R_{1,n}(y)|& \leq \sum_{j=1}^{n+1} \ww_{j,n}^\alpha \sup_u | (1/y\ww_{j,n})^\alpha(1-F_0((1/y\ww_{j,n})^-,u))- \phi(B_u)| \\
& \leq M_n^{(\alpha)} \sup_{z \leq \delta \beta_{(n)}} \sup_u | z^{-\alpha}(1-F_0((1/z)^-,u))- \phi(B_u)|. \\
\end{split}
\]
Hence
\[
 \sup_{0 \leq y \leq \delta}  |R_{1,n}(y)|\leq M_n^{(\alpha)} \sup_{0<y<\delta \beta_{(n)}} \sup_u | y^{-\alpha}(1-F_0((1/y)^-,u))- \phi(B_u)|.
\]
Analogously
\[
 \sup_{0 \leq y \leq \delta}  |R_{2,n}(y)|\leq M_n^{(\alpha)} \sup_{0<y<\delta \beta_{(n)}} \sup_u | y^{-\alpha}F_0(-1/y,u))- \phi(B_{-u})|.
\]
Since $\beta_{(n)} \to 0$ in probability by (iii) of Lemma \ref{Lemmaweights}, using Lemma \ref{lemma2} one obtains that
 for every $\delta>0$ and every $\eps>0$
\begin{equation}\label{convtozeroRn}
 \lim_{n \to +\infty} \P\{ \sup_{0 \leq y \leq \delta}[|R_{1,n}(y)|+|R_{2,n}(y)|] >\eps\}=0.
\end{equation}
Now let $(t_0,t_1,t_2, t_3) \in \setR^4$ and consider
\[
 \Psi_0(\CO):=t_0+t_1\int_{\{y: y \cdot \CO \eed>1\}}\phi(dy)+t_2\int_{\{y: y \cdot \CO \eed<-1\}}\phi(dy) 
+t_3 \CO\ee_d\cdot \gamma_0
\]
which is a continuous function on $\sod$ by the considerations above. Then the corresponding $\Psi_n$, defined in \eqref{defPsin}, satisfies
\[
\Psi_n(\CO)=t_0 M_n^{(\alpha)}+t_1\sum_{j=1}^{n+1}  \ww_{j,n}^\alpha \phi(B_{\be_{j,n}})+ t_2 \sum_{j=1}^{n+1}  \ww_{j,n}^\alpha \phi(B_{-\be_{j,n}})+t_3 Q_{3,n}
\]
since $\be_{j,n}=\CO O_{j,n} \eed$.
At this stage observe that 
\[
\int_\sod \CO\ee_d\cdot\gamma_0 \haar(d\CO)=0
\]
and Proposition \ref{convPsin} yields that
$\Psi_n(\CO)$ converges in law to $(t_0+t_1c_1+t_2c_2)M_\infty^{(\alpha)}$ where
\[
c_1:=\int_{\sod} \int_{\{y: y \cdot \CO \eed>1\}}\phi(dy) \haar(d\CO), \quad c_2:=\int_{\sod}\int_{\{y: y \cdot \CO \eed<-1\}}\phi(dy)\haar(d\CO).
\]
Since $O \eed$  is uniformly distributed on $\sphere$ whenever $ O$ has Haar distribution on $\sod$ (see \eqref{unif-on-sphere}), then
\[
 c_1=\int_{\sphere}\int_{\{y: y\cdot s> 1\}} \phi(dy) \uuu(ds)=c=\int_{\sphere}\int_{\{y: y\cdot s< -1\}} \phi(dy) \uuu(ds)=c_2.
\]
This yields that the vector
\[
Z_n:=(M_n^{(\alpha)},\sum_{j=1}^{n+1}  \ww_{j,n}^\alpha \phi(B_{\be_{j,n}}), \sum_{j=1}^{n+1}  \ww_{j,n}^\alpha \phi(B_{-\be_{j,n}}),Q_{3,n})
\]
converges in law to $(M_\infty^{(\alpha)}, cM_\infty^{(\alpha)},cM_\infty^{(\alpha)},0)$.
Since
\[
(M_n^{(\alpha)},Q_{1,n}(y),Q_{2,n}(y),Q_{3,n}(\CO_\ee))= Z_n+(0,R_{1,n}(y),R_{2,n}(y),0)
\]
using \eqref{convtozeroRn} and 
Lemma 3.31 Chapter VI of \cite{J-S}
one obtains the thesis.
\end{proof}

\subsection{Proof of Theorem \ref{thm.main}.}
The proof is split into three steps.
In the first step we introduce a Skorohod-type representation which is inspired to the one
used in \cite{FoLaRe} as an essential ingredient to prove central limit theorem for array of partially exchangeable random variables.
This technique has been already employed in a fruitful way in the context of the asymptotic study of kinetic equations, see e.g. 
\cite{BaLaRe,PeRe,DoRe2,GabettaRegazziniCLT}.
In the second step we prove that the classical conditions for the convergence to a (one-dimensional) stable law hold almost surely in 
the Skorohod representation. 
In the third step we conclude the proof. 

\vspace{0.3cm}

{\bf Step 1: Skorohod representation}.
For every $n\geq 1$ and for $j>n+1$, let us define  $\ww_{j,n}=0$ and $\be_{j,n}=\ee_d$, 
while for $j\leq n+1$ they are defined as in the previous sections.

Let $\CB_n$ denote the $\s\!-\!$field generated by the $\ww_{j,n}$'s and $\be_{j,n}$'s, i.e. $\CB_n=\s(\ww_{j,n},\; \be_{j,n}\; j\geq 1)$. Let $\lambda_{j,n}$ denote the conditional law of $\ww_{j,n}\be_{j,n}\cdot X_j$ given $\CB_n$ and $\lambda_n$ the conditional law of $\displaystyle \sum_{j=1}^{n +1}\ww_{j,n}\be_{j,n}\cdot X_j $, given $\CB_n$. Hence, $\displaystyle \lambda_{j,n}(-\infty, x]=F_0({x}/{\ww_{j,n}},\be_{j,n})$ and $\lambda_n=\lambda_{1,n}*\dots *\lambda_{n+1,n}$.
Let $Q_{3,n}=\sum_{j=1}^{n +1}\ww_{j,n}\be_{j,n}\cdot\gamma_0$ with $\gamma_0$ as in Theorem \ref{thm.main}  if $\a=1$ and with $\gamma_0=\mathbf 0$ otherwise.
Let us consider
\[
W_n=\Big(\lambda_n,(\lambda_{j,n})_{j\geq 1},\beta_{(n)},(\ww_{j,n})_{j\geq 1}, (\be_{j,n})_{j\geq 1},M_n^{(\a)}, Q_{1,n}(\cdot), Q_{2,n}(\cdot) ,Q_{3,n} \Big)
\]
as a random element from $(\Omega, \CF, P)$ in $(S,\CB(S))$, where
\(
S:=\CP (\bar{\setR} )^{\infty}\times \bar{\setR} ^{\infty}\times (\sphere )^{\infty}\times\D(\setR_+,\RE^4).
\)
Here $\bar{\setR}$ denotes the extended real line, $\CP (\bar{\setR})$ the set of all probability measures on borel $\sigma$-field $\CB(\bar{\setR})$ with the topology of the complete convergence and  $\CB(S)$ denotes the borel $\s-$field on $S$.

The sequence $(W_n)_{n\geq 1}$ is tight since $\CP (\bar{\setR} )^{\infty}$ and $(\sphere )^{\infty}$ are compact, $\beta_{(n)}\rightarrow 0$ in probability by Lemma \ref{Lemmaweights} and the sequence $(M_n^{(\a)}, Q_{1,n}(\cdot), Q_{2,n}(\cdot), Q_{3,n})_{n\geq 1}$
of random elements in $\D(\setR_+,\RE^4)$ converges in law to $(M_\infty^{(\alpha)},cM_\infty^{(\alpha)},cM_\infty^{(\alpha)},0)$ in view of Proposition \ref{conv-law}. Hence, every subsequence of $(n)$ includes a subsequence $(n')$ such that
\[
W_{n'}\stackrel{\CL}{\rightarrow} W'_{\infty}.
\]
Since $S$ is Polish, from the Skorohod representation theorem (see, e.g., Theorem 4.30 \cite{Kallenberg}) one can determine a probability space
$(\hat{\Omega}, \hat{\CF}, \hat{\P})$ and random elements on it taking value in $S$,
\[
\hat W_{\infty}=\Big(\hat \lambda,(\lambda_{j})_{j\geq 1},\hat\beta, (\hat\ww_{j})_{j\geq 0}, (\hat \be_{j})_{j\geq 1},\hat M, \hat Q_{1}(\cdot), \hat Q_{2}(\cdot),\hat Q_{3} \Big)
\]
\[
\hat W_{n'}=\Big(\hat \lambda_{n'},(\hat \lambda_{j,n'})_{j\geq 1},\hat \beta_{({n'})},(\hat\ww_{j,n'})_{j\geq 1}, (\hat\be_{j,n'})_{j\geq 1},\hat M_{n'}, \hat Q_{1,{n'}}(\cdot), \hat Q_{2,{n'}}(\cdot) , \hat Q_{3,{n'}}\Big)
\]
which have the same probability distribution of $W_\infty'$ and $W_\np$, respectively and
\[
\lim_{\np \rightarrow +\infty}\hat W_\np (\hat \omega)= \hat W_{\infty}(\hat \omega)
\]
for every $\hat \omega \in \hat \Omega$ in the metric of $S$. In view of the definition of $W_n$ and since $W_n$ and $\hat W_{n'}$ have the same probability distribution, the following statements hold, for each $\np$, $\hat \P-$a.s.
\begin{equation}\label{skr}
\begin{split}
&\hat \lambda_{n'}=\hat \lambda_{1,{n'}}*\dots *\hat \lambda_{\np+1,\np} , \qquad
\hat \lambda_{j,\np}(-\infty , x]=F_0(x/\hat\ww_{j,\np}, \hat\be_{j,\np}),\\
&\displaystyle \hat\beta_{(n')}=\max_{j=1,\dots,n'+1} \hat\beta_{j,\np}, \quad \hat M_\np=\sum_{j=1}^{\np+1}\hat \beta_{j,\np}^\a, \\
&\displaystyle \hat Q_{1,{n'}}(y)=\frac{1}{y^\a}\sum_{j=1}^{\np+1}[1-F_0((1/y\hat\ww_{j,\np})^-, \hat\be_{j,\np})]\text{ for every } y>0, \\
&\displaystyle \hat Q_{2,{n'}}(y)=\frac{1}{y^{\alpha}}\sum_{j=1}^{\np+1}F_0(-1/y\hat\ww_{j,\np}, \hat\ww_{j,\np})\text{ for every } y>0, \\
&\hat Q_{3,{n'}}=\sum_{j=1}^{\np+1}\hat \beta_{j,\np}^\a \hat\be_{j,\np}\cdot\gamma_0.
\end{split}
\end{equation}
Furthermore, since
\[
(\beta_{(n)},M_n^{(\a)}, Q_{1,n}(\cdot), Q_{2,n}(\cdot),  Q_{3,{n}})\stackrel{\CL}{\rightarrow} (0,M_\infty^{(\alpha)},cM_\infty^{(\alpha)},cM_\infty^{(\alpha)},0)
\]
then
\[
(\hat \beta,\hat M, \hat Q_{1}(\cdot), \hat Q_{2}(\cdot) ,\hat Q_{3})= (0, \hat M,c\hat M_,c\hat M,0) \]
$\hat \P-$a.s.. and the law of $\hat M$ is equal to the law of $M_\infty^{(\alpha)}$ and hence does not depend on the sequence $(\np)$.

{\bf Step 2: sufficient conditions for the convergence to a stable law}.
The next step is to prove  that the following conditions hold $\hat \P-$a.s.:
\begin{enumerate}
\item[$i)$] $\displaystyle \sup_{1\leq j\leq \np}\hat \lambda_{j,\np}([-\epsilon, \epsilon]^c)\to 0 $, for every $\epsilon >0$, as $\np\to +\infty$ (u.a.n. condition);
\item[$ii)$] $\displaystyle \lim_{n'\to +\infty} x^\a \sum_{j=1}^{\np+1}\hat \lambda_{j,\np}((-\infty, -x])= c{\hat M}$ and $\displaystyle \lim_{n'\to +\infty}x^\a \sum_{j=1}^{\np+1}\hat \lambda_{j,\np}(( x,+\infty])= c{\hat M}$  for every $x >0$, with $c$ as in \eqref{def:c};
\item[$iii)$]$\displaystyle \lim_{\epsilon\downarrow 0}\limsup_{\np\to +\infty}\sum_{j=1}^{\np+1}\int_{(-\epsilon,\epsilon)}x^2\hat \lambda_{j,\np}(dx)=0$;
\item[$iv)$]$\displaystyle \lim_{\np\to +\infty}E_\np=0$ where \[E_\np:=\Big\{\! -\!\sum_{j=1}^{\np+1}\hat \lambda_{j,\np}((-\infty, -1])+\! \sum_{j=1}^{\np+1}(1-\hat\lambda_{j,\np}(-\infty, 1 ])+\sum_{j=1}^{\np+1}\int_{(-1,1]}\!\!x\hat \lambda_{j,\np}(dx)\Big\}.\]
\end{enumerate}
In view of the well-known criteria for the convergence to a (one-dimensional) stable law  -
see, e.g., Theorem 30 in Section 16.9 and in Proposition 11 in Section 17.2 of \cite{fristed} -
the previous conditions yield that $\hat \P-$a.s.
\begin{equation}\label{SkoConv}
\int e^{i\rho x}\hat\lambda_\np (dx)\to \int e^{i\rho x}\hat\lambda (dx)=e^{-c\hat M |\rho|^\a}
\end{equation}
and this will lead easily to the conclusion. 
\vspace{0.4cm}

Let us first prove $i)$.
Recall that from Lemma \ref{lemma2} we know that
\begin{equation}\label{equivPhi1}
\lim_{x \to +\infty} \{\sup_{u \in \sphere} |x^\alpha (1-F_0(x,u))-\phi(B_u)|
+\sup_{u \in \sphere} |x^\alpha F_0(-x,u)-\phi(B_{-u})|\}=0
\end{equation}
and hence, in particular,
\begin{equation}\label{equivPhi2}
\sup_{x >0} \{\sup_{u \in \sphere} |x^\alpha (1-F_0(x,u))-\phi(B_u)|
+\sup_{u \in \sphere} |x^\alpha F_0(-x,u)-\phi(B_{-u})|\}<K.
\end{equation}
Since for every $u \in \sphere$, one has $\phi(B_u)\leq \phi\{ y: |y|\geq 1 \}<+\infty$, then
\eqref{equivPhi2} yields
\begin{equation}\label{equivPhi3}
\sup_{x >0} \{\sup_{u \in \sphere} |x^\alpha (1-F_0(x,u))|+\sup_{u \in \sphere} |x^\alpha F_0(-x,u)|\}<K'.
\end{equation}
In view of \eqref{skr} we have
\begin{align*}
\hat \lambda_{j,\np}([-\epsilon, \epsilon]^c)&\leq 1-F_0(\epsilon /\hat\ww_{j,\np}, \hat\be_{j,\np})+F_0(-\epsilon /\hat\ww_{j,\np}, \hat\be_{j,\np})\\
& \leq \frac{\hat\ww_{j,\np}^\a}{\epsilon^\a} \sup_{u \in \sphere} \Big\{ \Big[1-F_0(\epsilon /\hat\ww_{j,\np}, u)+F_0(-\epsilon /\hat\ww_{j,\np},u)\Big] \frac{\epsilon^\a}{\hat\ww_{j,\np}^\a}\Big\}\\
&\leq K' \frac{\hat\ww_{j,\np}^\a}{\epsilon^\a}\leq K' \frac{\hat\ww_{(n')}^\a}{\epsilon^\a}
\end{align*}
and the last term converges to zero for $\np\to +\infty$.

As for $ii)$, if $x>0$
\[
\displaystyle x^\a \sum_{j=1}^{\np+1}\hat \lambda_{j,\np}((x,+\infty))
=\hat Q_{1,\np}\Big(\Big(\frac{1}{x}\Big)^{-}\Big)
\quad \text{and} \quad
x^\a \sum_{j=1}^{\np+1}\hat \lambda_{j,\np}((-\infty, -x])
=\hat Q_{2,\np}\Big(\frac{1}{x}\Big).
\]
Since $(\hat Q_{1,\np}(\cdot),\hat Q_{2,\np}(\cdot))$ converges for every $\hat\omega \in \Omega$ in the topology of $\D(\setR_+,\RE^2)$ to the constant function $(c\hat M, c\hat M)$ then,
by using Proposition 2.4 Chapter VI of \cite{J-S}, one gets for every $y>0$
\[
\hat Q_{1,\np}(y^-)\to c\hat M \quad \text{and} \quad  \hat Q_{2,\np}(y)\to c\hat M.
\]
Hence $ii)$ is proved.

In order to prove $iii)$ note that integration by parts, gives
\[
\int_{(-\eps,\eps)}x^2dG(x) \leq 2\int_{-\eps}^0|x|G(x)dx+2\int_0^{\eps}x(1-G(x))dx.
\]
Hence the last inequality and \eqref{equivPhi3} yield
\begin{align*}
\sum_{j=1}^{\np+1}\int_{(-\epsilon,\epsilon)}x^2\hat \lambda_{j,\np}(dx)&=\sum_{j=1}^{\np+1}\hat \ww_{j,\np}^2\int_{(-\frac{\epsilon}{\ww_{j,\np}},\frac{\epsilon}{\ww_{j,\np}})}x^2F_0(dx, \hat\be_{j,\np})\\
& \leq
    2 \sum_{j=1}^{n'+1} \Big \{
    \int_{-\epsilon}^0 |x| F_0\Big (\frac{x}{\hat\ww_{j,{n'}}},\hat\be_{j,{n'}}\Big ) dx +
     \int_0^\epsilon x \big[1-F_0\Big (\frac{x}{\hat \ww_{j,{n'}}},\hat\be_{j,{n'}}\Big ) \big]\,dx
     \Big \} \\
    & \leq2\sum_{j=1}^{n'+1}\hat\ww_{j,{n'}}^\alpha K' \int_0^\epsilon x^{1-\alpha} dx
    =\hat M_{n'} \frac{2K' \epsilon^{2-\alpha}}{2-\alpha}
\end{align*}
which gives the result since $\hat M_{n'}$ converges $\hat \P$-a.s. to $\hat M$.

Concerning  $iv)$, assume first that $\alpha<1$.
Then,
integration by parts gives
\begin{align*}
E_{n'}=&-\sum_{j=1}^{\np+1}\int_{(-1,0]}\hat \lambda_{j,\np}((-\infty ,x])dx+\sum_{j=1}^{\np+1}\int_{(0,1]}(1-\hat \lambda_{j,\np}((-\infty ,x]))dx\\
\leq & -\sum_{j=1}^{\np+1}\int_{(-1,0]}F_0\Big (\frac{x}{\hat\ww_{j,{n'}}},\hat\be_{j,{n'}}\Big )dx+\sum_{j=1}^{\np+1}\int_{(0,1]}(1-F_0\Big (\frac{x}{\hat\ww_{j,{n'}}},\hat\be_{j,{n'}}\Big )dx.
\end{align*}
We know that, if $x<0$, $ -|x|^\alpha\sum_{j=1}^{\np+1}F_0\Big (\frac{x}{\hat\ww_{j,{n'}}},\hat\be_{j,{n'}}\Big )=\hat Q_{1,\np}(1/|x|)\to -c \hat M$
on $\hat \Omega$. Furthermore for every $x\in (0,1)$
\[
\Big|\sum_{j=1}^{\np+1}F_0\Big (\frac{x}{\hat\ww_{j,{n'}}},\hat\be_{j,{n'}}\Big )\Big| \leq \frac{1}{|x|^\alpha} \sup_{y <0} \sup_{u \in \sphere} F_0(y,u) |y|^\alpha \hat M_\np
\leq \frac{1}{|x|^\alpha} K' \sup_{n'} \hat M_\np.
\]
Analogously, for $x>0$, $|x|^\alpha\sum_{j=1}^{\np+1}\Big(1-F_0\Big (\frac{x}{\hat\ww_{j,{n'}}},\hat\be_{j,{n'}}\Big )\Big )=\hat Q_{2,\np}(1/|x|)\to c \hat M$
on $\hat \Omega$. Finally, for every $x\in (-1,0)$
\[
\Big|\sum_{j=1}^{\np+1}\Big(1-F_0\Big (\frac{x}{\hat\ww_{j,{n'}}},\hat\be_{j,{n'}}\Big )\Big)\Big| \leq \frac{1}{|x|^\alpha} \sup_{y <0} \sup_{u \in \sphere} (1-F_0(y,u)) |y|^\alpha \hat M_\np
\leq \frac{1}{|x|^\alpha} K' \sup_{n'} \hat M_\np.
\]
Hence, dominated convergence (for any $\hat \omega$) yields
\[
E_\np= -\sum_{j=1}^{\np+1}\int_{(-1,0]}\!\!\!\hat \lambda_{j,\np}((-\infty ,x])dx+\sum_{j=1}^{\np+1}\int_{(0,1]}\!\!\!(1-\hat \lambda_{j,\np}((-\infty ,x]))dx \to \!-\!\int_{(-1,0]}\!\! \frac{c \hat M}{|x|^\alpha}  dx
+\int_{(0,1]}\!\! \frac{c \hat M}{|x|^\alpha} dx=0.
\]
When $1<\a<2$, since $\int_{\setR} y F_0(dy,u)=0$ for every $u$ in $\sphere$, we can write
\[
E_{n'}= -\int_{(-\infty,-1]}(1+x)\sum_{j=1}^{\np+1} \hat \lambda_{j,\np}(dx)- \int_{(1,+\infty)}(x-1)\sum_{j=1}^{\np+1} \hat \lambda_{j,\np}(dx).
\]
Integration by parts gives
\[
\begin{split}
\int_{(-\infty,-1]} (1+x)\sum_{j=1}^{\np+1}\hat \lambda_{j,\np}(dx)
& =\lim_{T \to + \infty} \int_{(-T ,-1]} (1+x)\sum_{j=1}^{\np+1} \hat\lambda_{j,\np}(dx) \\
& = \lim_{T  \to + \infty} \Big[-\sum_{j=1}^{\np+1} \hat\lambda_{j,\np}((-\infty , 1-T ])(1- T)-\int_{(-T ,-1]}\sum_{j=1}^{\np+1} \hat\lambda_{j,\np}((-\infty, x])dx.\Big ]\\
\end{split}
\]
Now
\[
\limsup_{T \to + \infty }\sum_{j=1}^{\np+1} \hat\lambda_{j,\np}((-\infty, 1-T])(1-T)\leq \limsup_{T \to + \infty }
K M_{n}^{(\alpha)}(1- T)^{1-\alpha} =0,
\]
and hence
\[
\int_{(-\infty,-1]} (1+x)\sum_{j=1}^{\np+1} \hat\lambda_{j,\np}(dx)=-\int_{(-\infty ,-1]}\sum_{j=1}^{\np+1} \hat\lambda_{j,\np}((-\infty, x])dx.
\]
In an analogous way one shows that
\[
\int_{(1,+\infty)}(x-1) \sum_{j=1}^{\np+1}\hat \lambda_{j,\np}(dx)=\int_{(1,+\infty)} \sum_{j=1}^{\np+1} (1-\hat\lambda_{j,\np}((-\infty, x])dx,
\]
so that
\[
E_{n'}=\int_{(-\infty ,-1]}\sum_{j=1}^{\np+1} \hat\lambda_{j,\np}((-\infty, x]) dx  -\int_{(-\infty ,-1]}\sum_{j=1}^{\np+1} (1-\hat\lambda_{j,\np}((-\infty, x]))dx.
\]
Arguing as in the case $\a<1$ one proves that
\[
E_{n'} \to  +\int_{(-\infty,-1]} \frac{c \hat M}{|x|^\alpha}  dx
-\int_{(1,+\infty)} \frac{c\hat M}{|x|^\alpha} dx=0.
\] 
It remains to consider the case $\alpha=1$. Note that by point $ii)$ with $x=1$ 
\[
\lim_{\np\to +\infty}E_\np=
\lim_{\np\to +\infty}
 \sum_{j=1}^{\np+1}\int_{(-1,1]}\!\!x\hat \lambda_{j,\np}(dx)
\]
if the limit exists and 
\[
\begin{split}
 \sum_{j=1}^{\np+1}\int_{(-1,1]}\!\!x\hat \lambda_{j,\np}(dx)
 & =\sum_{j=1}^{\np+1} \hat \beta_{j,n'} \Big [ \int_{(-1/\hat \beta_{j,n'},1/\hat \beta_{j,n'}]} x dF_0(x,\hat\be_{j,{n'}})-\gamma_0\cdot 
\hat\be_{j,{n'}} \Big ] +\sum_{j=1}^{\np+1} \hat \beta_{j,n'}\gamma_0\cdot 
\hat\be_{j,{n'}} \\
& =:E_\np^*+\sum_{j=1}^{\np+1} \hat \beta_{j,n'}\gamma_0\cdot \hat\be_{j,{n'}}\\
&= E_\np^*+ \hat Q_{3,n}
\\
\end{split}
\]
and $\hat Q_{3,n}\to 0\; \hat \P$-a.s.. Furthermore
\[
 | E_\np^*| \leq \hat M_{n'}  \sup_{R \geq 1/\hat \beta_{(n')}}\sup_{u \in \sphere} | \int_{(-R,R]}   x dF_0(x,u)-\gamma_0 \cdot u|.
\]
Since $\beta_{(n')} \to 0$ and $\hat M_{n'} \to \hat M$ it follows from assumption \eqref{hy-alpha=1BIS}
that  $\lim_{\np\to +\infty}E_\np=0$ in the case $\a=1$ too.
At this stage the proof of $iv)$ is completed.

{\bf Step 3: conclusion of the proof}.
By \eqref{SkoConv} and dominated convergence theorem one has
\[
\begin{split}
\E[e^{i\rho\sum_{k=1}^{\np+1}
\ww_{k,\np} \be_{k,\np} \cdot X_k}]&=\hat \E [\int e^{i\rho x}\hat\lambda_\np (dx)]\\ &\to \hat \E[\int e^{i\rho x}\hat\lambda (dx)]=\hat \E[e^{-c\hat M |\rho|^\a}]=\E[e^{-c M^{(\a)}_\infty |\rho|^\a}]
\end{split}
\]
where $\hat \E$ denotes the expectation with respect to $\hat \P$ and the last equality is due to the fact that we proved that $M^{(\a)}_\infty$ and $\hat M$ have the same probability distribution. In particular we have stated that the limit does not depend on the subsequence $(\np)$ and hence the convergence is true for the entire sequence $(n)$. Hence, using also Proposition \ref{probrep},
one has that for every $\ee \in \sphere$ and any $\rho >0$
\[
\lim_{n \to \infty} \hat \mu_n(\rho \ee) = \E[e^{-c M^{(\a)}_\infty |\rho|^\a}].
\] 
At this stage, the convergence of $\mu(t)$ to $\mu_{\infty}^c$ follows from 
\eqref{Wild2zero}. In order to prove the last part of the theorem it is enough to check that since $\mu_{\infty}^c$ is a scale mixture of a spherically symmetric stable law, it belongs to NDA of the same stable law.

\appendix

\section{Multivariate stable laws and their domain of attractions}\label{app:stable}

A random vector $Z$ taking values in $\setR^d$ has  a centered $\alpha$-stable distribution, for $\alpha$ in $(0,2)$, if and only if its characteristic function is
\begin{equation}\label{defStabile}
\E[e^{i \xi \cdot Z}]=\exp\Big \{ -  \int_{\sphere} |\xi \cdot s|^{\alpha} \eta(\xi,s) \Lambda(ds)\Big \} \qquad (\xi \in \setR^d)
\end{equation}
where $\Lambda$ is a finite measure on $\sphere$ and
\[
\eta(\xi,s) :=
  \left \{
    \begin{array}{ll}
    1-i\text{sign}( \xi\cdot s) \tan(\frac{\pi \alpha}{2}) & \text{if $\alpha \not=1$} \\
     1+i \frac{2}{\pi} \text{sign}( \xi\cdot s)  \log|\xi \cdot s| & \text{if $\alpha =1$}.\\
    \end{array}
  \right .
\]
See, e.g., Theorem 7.3.16 in \cite{MeerSche}. 

A random vector $Z$ has a centered  $\alpha$-stable spherically symmetric distribution if
\begin{equation}\label{sphersymmchar}
\E[e^{i \xi \cdot Z}]=\exp\{ - c |\xi|^\alpha \} \qquad (\xi \in \setR^d)
\end{equation}
for some $c>0$. Clearly, in this case, $\Lambda(A) \propto |A|$. 

As in the one-dimensional  case, one says that:

\noindent {\it A random vector $X_0$ {\rm(}or equivalently its law $\mu_0${\rm)}
belongs to the normal domain of attraction {\rm(}$NDA$, for short{\rm)} of an
$\alpha$-stable law  if for any sequence $(X_i)_{i \geq 1}$
of i.i.d. random vectors with the same law of $X_0$, there is a sequence
of vectors $(b_n)_{n \geq 1}$ such that $n^{-1/\alpha} \sum_{i=1}^n X_i-b_n$
converges in law to an $\alpha$-stable random vector.}

Given any a finite measure $\Lambda$ on $\sphere$  the so-called {\it L\'evy measure $\phi=\phi_\Lambda$}
  on $\setR^d \setminus \{0\}$ is given in polar coordinates
by
\begin{equation}\label{defphi}
\phi(d\theta dr)= \Lambda(d\theta) \frac{\alpha k_\alpha}{r^{\alpha+1}}dr.
\end{equation}

A stable law is said to be  {\it full} if it is 
not  supported on any $d-1$ dimensional subspace of $\setR^d$.
In this case, it is possible to characterize the $NDA$ in terms of the tails of $\mu_0$ in the following way:

\noindent {\it
$X_0$ belongs to the $NDA$ of a  stable law with L\'evy measure $\phi=\phi_\Lambda$
 if and only if
for every $r>0$ and every Borel set $B \subset \sphere$ such that $\Lambda(\partial B)=0$
\begin{equation}\label{NDA}
 \lim_{t \to +\infty} t^\alpha \P\Big \{|X_0|> r t , \,\, \frac{X_0}{|X_0|} \in B \Big \}= \frac{k_\alpha}{ r^\alpha} \Lambda(B),
\end{equation}
 with
\[
k_\alpha=\frac{2\Gamma(\alpha)\sin(\alpha\pi/2)}{\pi}.
\]}
See Theorems 6.20 and 7.11 in \cite{AraujoGine}.

We collect some results on the NDA of an $\alpha$-stable law,
which are used in Section \ref{sec.proof}.
\begin{lemma}\label{lemma0}
If a stable law  is full, then the corresponding L\'evy measure
$\phi$ is full, that is  $\phi$ is not supported on any $d-1$ dimensional subspace of $\setR^d$. 
\end{lemma}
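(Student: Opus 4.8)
The plan is to prove the contrapositive: if the L\'evy measure $\phi=\phi_\Lambda$ is \emph{not} full, then the corresponding stable law is not full either. Write the degenerate $(d-1)$-dimensional subspace as $H=\{x\in\setR^d:\nml\cdot x=0\}$ for a unit vector $\nml\in\sphere$, and let $Z$ be a random vector with the stable law in question, whose characteristic function is given by \eqref{defStabile} with spectral measure $\Lambda$.

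\emph{First}, I would transfer the degeneracy from $\phi$ to $\Lambda$. Assuming $\phi$ is concentrated on $H$, the product structure \eqref{defphi} gives, for any Borel set $B\subseteq\sphere$ and any $\epsilon>0$,
\[
\phi\big(\{r\theta:r>\epsilon,\ \theta\in B\}\big)=\Lambda(B)\int_\epsilon^\infty\frac{\alpha k_\alpha}{r^{\alpha+1}}\,\mathrm{d}r=\frac{k_\alpha}{\epsilon^\alpha}\,\Lambda(B),
\]
a positive multiple of $\Lambda(B)$. Since $H$ is a linear subspace, $r\theta\in H$ iff $\theta\in H$, so the set $\{r\theta:r>\epsilon,\ \theta\in\sphere\setminus H\}$ is disjoint from $H$ and hence $\phi$-null; taking $B=\sphere\setminus H$ forces $\Lambda(\sphere\setminus H)=0$, i.e.\ $\Lambda$ is concentrated on $H\cap\sphere$.

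\emph{Second}, I would evaluate \eqref{defStabile} along the normal direction, at $\xi=t\nml$ with $t\in\setR$. For $s\in H$ one has $\nml\cdot s=0$, hence $|t\nml\cdot s|^\alpha=0$; the product $|\xi\cdot s|^\alpha\eta(\xi,s)$ also vanishes on $\{\xi\cdot s=0\}$ when $\alpha=1$, because $|\xi\cdot s|\log|\xi\cdot s|\to0$ as $\xi\cdot s\to0$ (this is the only spot where the case $\alpha=1$ needs a separate word). Since $\Lambda$ sits on $H\cap\sphere$ by the previous step, the integral in \eqref{defStabile} is identically zero along $\xi=t\nml$, so $\E[e^{it\nml\cdot Z}]=1$ for every $t\in\setR$.

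\emph{Finally}, a real random variable whose characteristic function is identically $1$ is a.s.\ zero, so $\nml\cdot Z=0$ almost surely, i.e.\ $Z\in H$ a.s. Thus the stable law is supported on the $(d-1)$-dimensional subspace $H$ and is therefore not full, which is exactly the contrapositive of the assertion. The argument is essentially bookkeeping; the only point that requires any care is the degenerate behaviour of $\eta$ near $\xi\cdot s=0$ in the case $\alpha=1$, everything else being the routine observation that the radial factor $\frac{\alpha k_\alpha}{r^{\alpha+1}}\,\mathrm{d}r$ has full support on $(0,\infty)$, so that ``$\phi$ concentrated on $H$'' and ``$\Lambda$ concentrated on $H\cap\sphere$'' are equivalent.
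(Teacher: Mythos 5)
Your proof is correct, but it takes a genuinely different route from the paper: the paper does not argue at all, it simply deduces the lemma by combining Proposition 3.1.20 and Theorem 7.3.3 of Meerschaert--Scheffler \cite{MeerSche}, whereas you give a self-contained proof of the contrapositive. Your three steps are all sound: the product structure \eqref{defphi} together with the fact that the radial factor charges every interval $(\epsilon,\infty)$ does force $\Lambda(\sphere\setminus H)=0$ whenever $\phi$ is carried by the linear subspace $H$ (the observation that $r\theta\in H$ iff $\theta\in H$ is exactly what is needed); evaluating \eqref{defStabile} at $\xi=t\nml$ then kills the integrand $\Lambda$-a.e., with the only delicate point being the convention $|x|\log|x|\to 0$ as $x\to 0$ in the $\alpha=1$ case, which you correctly flag; and a characteristic function identically equal to $1$ does characterize the point mass at the origin, so $\nml\cdot Z=0$ a.s. What your argument buys is transparency and independence from the cited monograph; what it quietly uses is that the law is centered, i.e.\ exactly of the form \eqref{defStabile} with no shift --- for a translated stable law the support would be an affine hyperplane $H+b$ rather than a linear subspace, and one would have to read ``full'' in the usual sense of ``not supported on any proper affine hyperplane'' for the contrapositive to go through verbatim. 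Since the paper works with centered laws throughout the appendix, this is a matter of phrasing rather than a gap.
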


\begin{proof}
The thesis can be deduced  combining Proposition 3.1.20 and Theorem 7.3.3 in 
\cite{MeerSche}.
\end{proof}

Recall that, for every $x \in \setR^d$,
\[
B_x=\{y \in \setR^k: x \cdot y > 1\}.
\]

\begin{lemma}\label{lemma1bis}
 Let $\phi$ be a full L\'evy measure, then
\[
x \mapsto \phi(B_x)
\]
is a continuous function on $\setR^d\setminus \{0\}$.
\end{lemma}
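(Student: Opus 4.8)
The plan is to reduce the claim to a routine dominated-convergence argument by first carrying out the radial integration in the polar representation \eqref{defphi} of $\phi$.

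First I would compute $\phi(B_x)$ in closed form. For $x\in\setR^d\setminus\{0\}$ and $\theta\in\sphere$, the slice $\{r>0:\,r\,(x\cdot\theta)>1\}$ is empty unless $x\cdot\theta>0$, in which case it equals $\big(1/(x\cdot\theta),+\infty\big)$; hence, by \eqref{defphi} and Tonelli's theorem,
\begin{align*}
  \phi(B_x)=\int_{\sphere}\Lambda(d\theta)\int_{\{r>0:\,r(x\cdot\theta)>1\}}\frac{\alpha k_\alpha}{r^{\alpha+1}}\,dr
  =k_\alpha\int_{\sphere}\big((x\cdot\theta)_+\big)^{\alpha}\,\Lambda(d\theta),
\end{align*}
where $(a)_+:=\max\{a,0\}$. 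Note in passing that $\phi(B_x)\le k_\alpha|x|^{\alpha}\Lambda(\sphere)<\infty$, since $\Lambda$ is a finite measure; so the function is well defined and finite on all of $\setR^d$.

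Next I would prove continuity of the right-hand side. Fix $x_0\in\setR^d\setminus\{0\}$ and let $x_n\to x_0$. For each $\theta\in\sphere$, the map $t\mapsto(t)_+^{\alpha}$ is continuous on $\setR$ (being $s\mapsto s^{\alpha}$ on $[0,\infty)$ composed with $t\mapsto\max\{t,0\}$, and $\alpha>0$), so $\big((x_n\cdot\theta)_+\big)^{\alpha}\to\big((x_0\cdot\theta)_+\big)^{\alpha}$ pointwise in $\theta$. Moreover $\big((x_n\cdot\theta)_+\big)^{\alpha}\le|x_n|^{\alpha}\le\big(\sup_m|x_m|\big)^{\alpha}=:C<\infty$, and the constant $C$ is $\Lambda$-integrable because $\Lambda(\sphere)<\infty$. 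Dominated convergence then gives $\phi(B_{x_n})\to\phi(B_{x_0})$, which is the assertion. (Fullness of $\phi$ is not used in this argument; it enters the other statements collected in the appendix.)

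There is essentially no genuine obstacle here: the only points to be careful about are the bookkeeping in the radial integral and the choice of the integrable dominating function, the latter being available precisely because $\Lambda$ — equivalently, the restriction of the L\'evy measure to the complement of a ball — is finite.
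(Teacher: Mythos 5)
Your proof is correct. The paper itself does not write out an argument: it simply states that the proof is ``essentially the same as the proof of Lemma 6.1.25 in Meerschaert--Scheffler'' and leaves it to the reader. That reference treats the L\'evy measure of a general full infinitely divisible (or operator-stable) law, where fullness is genuinely needed to guarantee that $\phi$ puts no mass on the boundary hyperplanes $\{y: x\cdot y=1\}$, after which one argues via convergence of the sets $B_{x_n}$ to $B_{x_0}$. You instead exploit the specific polar product form \eqref{defphi} of the L\'evy measures actually used in this paper, integrate out the radial variable to get the closed formula $\phi(B_x)=k_\alpha\int_{\sphere}((x\cdot\theta)_+)^\alpha\,\Lambda(d\theta)$, and then conclude by dominated convergence with the dominating constant $\sup_m|x_m|^\alpha$, which is integrable because $\Lambda$ is finite. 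This is more elementary and self-contained than the cited route, and your observation that fullness is not needed is accurate in this setting: the absolute continuity of the radial factor $\alpha k_\alpha r^{-\alpha-1}\,dr$ already forces $\phi(\partial B_x)=0$ for every $x\neq0$, and indeed your formula shows continuity (even on all of $\setR^d$) for an arbitrary finite spectral measure $\Lambda$. The only cost of your approach is that it does not generalize beyond L\'evy measures of the product form \eqref{defphi}, but that is all the paper requires.
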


\begin{proof}
The proof is essentially the same as the proof of Lemma  6.1.25 in  \cite{MeerSche} and it is left to the reader.
\end{proof}

\begin{lemma}\label{lemma2}
Let $S$ be a compact subset of $\setR^d \setminus \{0\}$.
If $X_0$ belongs to the normal domain of attraction of a full $\alpha$-stable law with
L\'evy measure $\phi$, then
\begin{equation}\label{unifcond}
\lim_{t \to +\infty} \sup_{u \in S} | t^\alpha \P\{ X_0 \cdot u >t\} - \phi(B_u)| =0
 \end{equation}
and
 \begin{equation}\label{unifcond2}
\lim_{t \to -\infty} \sup_{u \in S} | |t|^\alpha \P\{ X_0 \cdot u \leq t\} - \phi(B_{-u})| =0.
 \end{equation}
Moreover \eqref{unifcond} remains true if
one replace $>$ with $\geq$.
\end{lemma}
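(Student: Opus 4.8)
The plan is to read off both limits from the pointwise tail characterization \eqref{NDA} by a partition‑and‑squeeze argument on the direction of $X_0$, using monotonicity to make the convergence uniform in $u$. Write $X_0=R\Theta$ in polar coordinates, $R=|X_0|$, $\Theta=X_0/|X_0|$ on $\{X_0\neq0\}$; for $t>0$ one has $\{X_0\cdot u>t\}\subseteq\{R>t\}$, so only the tail of $R$ will matter. Integrating \eqref{defphi} in $r$ gives the closed formula $\phi(B_u)=k_\alpha\int_{\sphere}(\theta\cdot u)_+^{\alpha}\,\Lambda(d\theta)$ for every $u\neq0$, which in passing re‑proves the continuity asserted in Lemma \ref{lemma1bis}. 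Put $M:=\sup_{u\in S}|u|<\infty$. I claim it suffices to prove \eqref{unifcond}: the ``$\ge$'' refinement and \eqref{unifcond2} will follow by short arguments at the end.

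First I would fix $\eta\in(0,1]$ and a finite Borel partition $B_1,\dots,B_N$ of $\sphere$ with $\operatorname{diam}(B_i)<\eta$ and $\Lambda(\partial B_i)=0$ for all $i$; such a partition exists because $\Lambda$ is finite and hence charges at most countably many hyperplanes, so a fine grid shifted to avoid them does the job. Picking $v_i\in B_i$ and using $|\Theta\cdot u-v_i\cdot u|\le\eta|u|\le\eta M$ on $\{\Theta\in B_i\}$, one has the inclusions
\[
\{R(v_i\cdot u-\eta M)>t,\ \Theta\in B_i\}\subseteq\{X_0\cdot u>t,\ \Theta\in B_i\}\subseteq\{R(v_i\cdot u+\eta M)>t,\ \Theta\in B_i\},
\]
with the convention that the outer events are empty when the coefficient of $R$ is $\le0$. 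Summing over $i$, applying \eqref{NDA} with $B=B_i$ and $r=1/(v_i\cdot u\pm\eta M)$ (the term being $0$ when $v_i\cdot u\pm\eta M\le0$), and letting $t\to\infty$, one finds that $\liminf_t t^{\alpha}\P\{X_0\cdot u>t\}$, $\limsup_t t^{\alpha}\P\{X_0\cdot u>t\}$ and $\phi(B_u)$ all lie in the interval
\[
\Big[k_\alpha\sum_i\big((v_i\cdot u-\eta M)_+\big)^{\alpha}\Lambda(B_i),\ k_\alpha\sum_i\big((v_i\cdot u+\eta M)_+\big)^{\alpha}\Lambda(B_i)\Big].
\]
Since $x\mapsto(x)_+^{\alpha}$ is uniformly continuous on $[-2M,2M]$ with some modulus $\omega$, this interval has length at most $k_\alpha\,\omega(2\eta M)\,\Lambda(\sphere)$, uniformly in $u\in S$; letting $\eta\downarrow0$ yields \eqref{unifcond}, provided the convergences used are uniform in $u$.

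That uniformity is the crux, because in \eqref{NDA} the parameter $r=1/(v_i\cdot u\pm\eta M)$ varies with $u$. I would settle it as follows. For each fixed $i$, the map $r\mapsto t^{\alpha}\P\{R>rt,\ \Theta\in B_i\}$ is nonincreasing on $(0,\infty)$ and converges pointwise to the continuous function $k_\alpha r^{-\alpha}\Lambda(B_i)$, so by P\'olya's monotone‑convergence theorem the convergence is uniform on compact subsets of $(0,\infty)$. For large $r$ one uses instead $t^{\alpha}\P\{R>rt,\Theta\in B_i\}\le r^{-\alpha}\sup_{s>0}s^{\alpha}\P\{R>s\}$, the supremum being finite because $s^{\alpha}\P\{R>s\}\to k_\alpha\Lambda(\sphere)$ (the case $B=\sphere$ of \eqref{NDA}); hence both $t^{\alpha}\P\{R>rt,\Theta\in B_i\}$ and $k_\alpha r^{-\alpha}\Lambda(B_i)$ are $O(r^{-\alpha})$ uniformly in $t$. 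Combining the two regimes gives $\sup_{r>0}|t^{\alpha}\P\{R>rt,\Theta\in B_i\}-k_\alpha r^{-\alpha}\Lambda(B_i)|\to0$ as $t\to\infty$, which makes every estimate of the previous paragraph uniform over $u\in S$.

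It remains to handle the variants. The ``$\ge$'' form of \eqref{unifcond} follows because applying \eqref{unifcond} at threshold $(1-\epsilon)t$ gives $\sup_{u\in S}t^{\alpha}\P\{X_0\cdot u=t\}\le\sup_{u\in S}t^{\alpha}\big(\P\{X_0\cdot u>(1-\epsilon)t\}-\P\{X_0\cdot u>t\}\big)$, whose $\limsup$ in $t$ is $\le\big((1-\epsilon)^{-\alpha}-1\big)\sup_{u\in S}\phi(B_u)$ and hence $0$ after $\epsilon\downarrow0$, so $\P\{X_0\cdot u\ge t\}$ has the same uniform asymptotics as $\P\{X_0\cdot u>t\}$. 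For \eqref{unifcond2}, observe that for $t<0$ one has $\{X_0\cdot u\le t\}=\{X_0\cdot(-u)\ge|t|\}$, that $-S:=\{-u:u\in S\}$ is again compact in $\setR^d\setminus\{0\}$, and that $\phi(B_{-u})$ is the corresponding limit; thus \eqref{unifcond2} is precisely the ``$\ge$'' form of \eqref{unifcond} applied to the direction set $-S$. The single genuine difficulty throughout is the uniformization over the $u$‑dependent radii in \eqref{NDA}; everything else is bookkeeping around the polar decomposition.
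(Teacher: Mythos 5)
Your proof is correct in substance, and it is worth noting that the paper itself does not prove this lemma at all: it merely points to Lemma 6.1.26 of \cite{MeerSche} and leaves the ``minor modifications'' to the reader. So you have supplied a genuinely self-contained argument where the paper has only a citation. Your route --- polar decomposition, the closed formula $\phi(B_u)=k_\alpha\int_{\sphere}(\theta\cdot u)_+^{\alpha}\Lambda(\dn\theta)$, a finite partition of $\sphere$ into $\Lambda$-continuity sets of small diameter, and a P\'olya-type uniformization in the radial variable --- is in the same spirit as the Meerschaert--Scheffler lemma, and the reductions at the end (the ``$\ge$'' variant via the threshold $(1-\epsilon)t$, and \eqref{unifcond2} via $\{X_0\cdot u\le t\}=\{X_0\cdot(-u)\ge|t|\}$ applied to the compact set $-S$) are clean and correct. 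You correctly identify the only real difficulty, namely that the radius $r=1/(v_i\cdot u\pm\eta M)$ in \eqref{NDA} depends on $u$, and your two-regime argument (uniformity on compacts by monotonicity plus the $O(r^{-\alpha})$ bound for large $r$, using the finiteness of $\sup_{s>0}s^{\alpha}\P\{R>s\}$) resolves it.

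One small overstatement should be repaired: the displayed claim $\sup_{r>0}\bigl|t^{\alpha}\P\{R>rt,\Theta\in B_i\}-k_\alpha r^{-\alpha}\Lambda(B_i)\bigr|\to 0$ is false as written, because the limit function blows up as $r\downarrow 0$ while $t^{\alpha}\P\{R>rt,\Theta\in B_i\}\le t^{\alpha}$ stays finite, so the supremum over all $r>0$ is $+\infty$ for each fixed $t$. What your two regimes actually give --- and all that you need --- is $\sup_{r\ge r_0}\bigl|\cdots\bigr|\to 0$ for any fixed $r_0>0$; since $v_i\cdot u\pm\eta M\le 2M$ forces $r\ge 1/(2M)$ in the application, taking $r_0=1/(2M)$ closes the argument. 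With that one-line correction the proof is complete.
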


\begin{proof}
The proof of this result  can be obtained  with  minor modifications from the proof of a similar
result contained in Lemma 6.1.26  of \cite{MeerSche}.
The details are left to the reader. 
\end{proof}

\section*{Acknowledgments} 
The authors would like to thank E. Breuillard, E.Dolera and A. Ghigi for helpful comments.

\end{document}